\newtheorem{theorem}{Theorem}
\newtheorem{lemma}{Lemma}
\newtheorem{fact}{Fact}
\newtheorem{coro}{Corollary}
\newtheorem{obs}{Observation}
\newtheorem{claim}{Claim}
\newtheorem{subclaim}{Claim}[claim]
\let\oldenumerate\enumerate
\renewcommand{\enumerate}{
  \oldenumerate
  \setlength{\itemsep}{1.5pt}
  \setlength{\parskip}{0pt}
  \setlength{\parsep}{0pt}
}
\newcommand{\ntd}{\textsc{NTD}}
\newcommand{\bitem}{\item[$\bullet$]}
\begin{document}
\title{Algorithm and hardness results on neighborhood total domination in graphs\footnote{An extended abstract of this paper was presented in CTW 2018.}}

\author{Anupriya Jha, D. Pradhan\thanks{Corresponding author}, S. Banerjee  \\ \\
Department of Applied Mathematics \\
Indian Institute of Technology (ISM), Dhanbad\\
\small \tt Email: jha.anupriya@gmail.com; dina@iitism.ac.in; sumanta.banerjee5@gmail.com\\
\\
}

\date{}
\maketitle
%%%%%%%%%%%%%%%%%%%ELSEVIER ARTICLE TYPE%%%%%%%%%%%%%%%%%%%
%\begin{frontmatter}
%
%\title{Algorithm and hardness results on neighborhood total domination in graphs\footnote{An extended abstract of this paper was presented in CTW 2018.}}
%
%
%
%\author{Anupriya Jha}
%\ead{jha.anupriya@gmail.com}
%
%\author{D. Pradhan\corref{cor1}}
%\ead{dina@iitism.ac.in}
%\cortext[cor1]{Corresponding
%Author}
%\author{S. Banerjee}
%\ead{sumanta.banerjee5@gmail.com}

%\address{Department of
%Applied Mathematics\\Indian Institute of Technology (ISM) Dhanbad}

\begin{abstract}
A set $D\subseteq V$ of a graph $G=(V,E)$ is called a neighborhood total dominating set of $G$ if $D$ is a dominating set and the subgraph of $G$ induced by the open neighborhood of $D$ has no isolated vertex. Given a graph $G$, \textsc{Min-NTDS} is the problem of finding a neighborhood total dominating set of $G$ of minimum cardinality. The decision version of \textsc{Min-NTDS} is known to be \textsf{NP}-complete for bipartite graphs and chordal graphs. In this paper, we extend this \textsf{NP}-completeness result to undirected path graphs, chordal bipartite graphs, and planar graphs. We also present a  linear time algorithm for computing a minimum neighborhood total dominating set in proper interval graphs. We show that for a given graph $G=(V,E)$, \textsc{Min-NTDS} cannot be approximated within a factor of $(1-\varepsilon)\log |V|$, unless \textsf{NP$\subseteq$DTIME($|V|^{O(\log \log |V|)}$)} and can be approximated within a factor of $O(\log \Delta)$, where $\Delta$ is the maximum degree of the graph $G$. Finally, we show that \textsc{Min-NTDS} is \textsf{APX}-complete for graphs of degree at most $3$.
\end{abstract}
\noindent
{\small \textbf{Keywords:}Domination; total domination; neighborhood total domination; polynomial time algorithm; \textsf{NP}-complete; \textsf{APX}-complete.}

%\linenumbers
%\pagenumbering{arabic}

\section{Introduction}
 \label{sec_intro}
 A set $D$ of
vertices of a graph $G =(V,E)$ is a \emph{dominating} set of $G$ if every vertex in $V\setminus D$ is adjacent to some vertex in $D$. The \emph{domination number} of a graph $G$, denoted by $\gamma(G)$, is the minimum cardinality of a dominating set of $G$. Domination and its variations have
many applications and have been widely studied in literature (see \cite{haynesb2,haynesb1}). Among the variations of domination, total domination in graphs is one of those. A set $D$ of vertices
of a graph $G=(V,E)$ is a \emph{total dominating} set of $G$ if every vertex in $V$ is adjacent to at least one vertex of $D$.  The \emph{total domination number} of a graph $G$, denoted by $\gamma_{\rm t}(G)$, is the minimum cardinality of a total dominating set of $G$. For extensive literature and
survey of total domination in graphs, we refer to \cite{henning-tds1,henning-book}.

In a graph $G=(V,E)$, the sets $N_G(v)=\{u\in V: uv\in E\}$ and $N_G[v]=N_G(v)\cup\{v\}$ denote the \emph{open neighborhood} and the \emph{closed neighborhood} of a vertex $v$, respectively. For a set $S\subseteq V$, $N_G(S)=\cup_{u\in S}N_G(u)$ and $N_G[S]=N_G(S)\cup S$. A total dominating set $D$ of a graph $G$ can be seen as a dominating set $D$ of $G$ such that the induced subgraph $G[D]$ has no isolated vertex. Looking the similar property of the open neighborhood of a dominating set $D$, Arumugam and Sivagnanam \cite{arumugam-ntd1} introduced the concept of neighborhood total domination in graphs. Formally, a dominating set $D$ of a graph $G$ is called a \emph{neighborhood total dominating} set, abbreviated a \ntd-set if $G[N_G(D)]$, i.e., the subgraph of $G$ induced by $N_G(D)$ has no isolated vertex. The \emph{neighborhood total domination number}, denoted by $\gamma_{\rm nt}(G)$, is the minimum cardinality of a \ntd-set of $G$.

Notice that in any graph without isolated vertices, every \ntd-set is a dominating set and every total dominating set is a \ntd-set. So the following observation follows.

\begin{obs}[\cite{arumugam-ntd1}]\label{obs1}
For any graph $G$ without any isolated vertex, $\gamma(G)\leq \gamma_{\rm nt}(G)\leq \gamma_{\rm t}(G)$.
\end{obs}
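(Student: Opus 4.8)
The plan is to reduce the double inequality to the two set-inclusion facts already flagged in the text: that every \ntd-set is a dominating set, and that every total dominating set is a \ntd-set. Once these two membership statements are established, both inequalities follow by the standard principle that enlarging the family of feasible sets can only decrease (or keep equal) the minimum cardinality taken over that family. Concretely, if every \ntd-set is a dominating set, then the minimum over all dominating sets is taken over a superfamily of the \ntd-sets, giving $\gamma(G) \le \gamma_{\rm nt}(G)$; and if every total dominating set is a \ntd-set, then the minimum over \ntd-sets is taken over a superfamily of the total dominating sets, giving $\gamma_{\rm nt}(G) \le \gamma_{\rm t}(G)$.

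For the left inequality I would simply invoke the definition: a \ntd-set is by definition a dominating set $D$ for which $G[N_G(D)]$ has no isolated vertex. Hence the defining condition is strictly stronger than merely being a dominating set, so every \ntd-set is dominating, and $\gamma(G) \le \gamma_{\rm nt}(G)$ follows at once. No hypothesis on isolated vertices is needed here.

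For the right inequality, let $D$ be any total dominating set of $G$. First, $D$ is a dominating set, since total domination requires every vertex of $V$ — in particular every vertex of $V \setminus D$ — to have a neighbor in $D$. The key step is to observe that total domination forces $N_G(D) = V$: every vertex of $V$ has a neighbor in $D$, so $V \subseteq N_G(D)$, while the reverse inclusion is trivial. Consequently $G[N_G(D)] = G$. This is where the hypothesis enters: since $G$ has no isolated vertex, $G[N_G(D)] = G$ has no isolated vertex, and therefore $D$ is a \ntd-set. Thus every total dominating set is a \ntd-set, and $\gamma_{\rm nt}(G) \le \gamma_{\rm t}(G)$.

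There is no genuine obstacle in this argument; the only point demanding care is the role of the no-isolated-vertex hypothesis. It is precisely what guarantees that $G[N_G(D)] = G$ has no isolated vertex in the total-domination step (and, implicitly, that a total dominating set exists at all). I would emphasize that the left inequality holds unconditionally, whereas the right inequality uses the hypothesis in an essential way.
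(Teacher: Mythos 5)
Your proof is correct and follows essentially the same route as the paper, which justifies the observation by exactly the two inclusions you establish: every \ntd-set is a dominating set (by definition), and every total dominating set is a \ntd-set because $N_G(D)=V$ forces $G[N_G(D)]=G$, which has no isolated vertex by hypothesis. Your added remark pinpointing where the no-isolated-vertex assumption enters is accurate and consistent with the paper's brief treatment.
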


Observation \ref{obs1} motivates researchers to study on the neighborhood total domination in graphs since the neighborhood total domination number lies between the domination number and the total domination number, the two arguably most important domination parameters in graphs. Henning and Rad \cite{henning-ntd1} continued the further study on neighborhood total domination in graphs and gave several bounds on the neighborhood total domination number. Henning and Wash \cite{henning-ntd2} characterized the trees with large neighborhood total domination number. Mojdeh et al. \cite{mojdeh-ntd} studied the neighborhood total domination related to a graph and its complement. Recently, the algorithmic complexity of \textsc{Min-NTDS} has been studied by Lu et al. \cite{lu-ntd}. In particular, Lu et al. \cite{lu-ntd} proved that the decision version of \textsc{Min-NTDS} is \textsf{NP}-complete for bipartite graphs and chordal graphs and presented a linear time algorithm for computing a minimum \ntd-set in trees.  In this paper, we continue the study on the algorithmic complexity of \textsc{Min-NTDS} in graphs. In particular, the results that have been presented in this paper are summarized as follows.
\begin{enumerate}
\item In Section \ref{npc}, we first extend the known \textsf{NP}-completeness result of the decision version of \textsc{Min-NTDS} to undirected path graphs, chordal bipartite graphs, and planar graphs.

\item In Section \ref{sec_pig}, we present a linear time algorithm for computing a minimum \ntd-set in proper interval graphs, a subclass of chordal graphs.
\item  We in Section \ref{sec_hard} show that, unless \textsf{NP $\subseteq$ DTIME($n^{O(\log\log n)}$)}, \textsc{Min-NTDS} cannot be approximated
within a factor of $(1-\varepsilon) \ln n$ for any $\varepsilon>0$. Then we present an $O(\log \Delta)$-factor approximation algorithm for \textsc{Min-NTDS} in graphs of degree at most $\Delta$.

\item We in Section \ref{sec_apx} show that \textsc{Min-NTDS} is \textsf{APX}-complete for graphs with degree at most $3$.
\end{enumerate}

\section{Basic terminologies and preliminary results}
Given a graph $G=(V,E)$, we let $V=V(G)$ and $E=E(G)$ as the vertex set and edge set of $G$, respectively.  The \emph{degree} of a vertex $v$ in a graph $G$ is $|N_G(v)|$ and is denoted by $d_G(v)$. If $d_G(v)=1$, then $v$ is called a \emph{pendant} vertex of $G$. A vertex $v$ of $G$ is called a \emph{support} vertex of $u$ if $v\in N_G(u)$ and $u$ is a pendant vertex of $G$. A vertex $v$ is said to \emph{dominate} a vertex $u$ if $u\in N_G[v]$. For $S \subseteq V$, let $G[S]$ denote the subgraph of $G$ induced by $S$. For $S \subseteq V(G)$, if $G[S]$ is a complete subgraph of $G$, then $S$ is called a \emph{clique} of $G$ and if $G[S]$ has no edge, then $S$ is called an \emph{independent set} of $G$.

A graph $G$ is called a \emph{chordal} graph if every cycle of $G$ of length at least 4 has a \emph{chord}, i.e., an edge joining two non-consecutive vertices of the cycle. A graph $G=(V, E)$ is said to be a \emph{bipartite graph} if $V(G)$ can be partitioned into two disjoint independent sets. A bipartite graph $G$ is called a \emph{chordal bipartite graph} if every cycle in $G$ of length at least 6 has a chord.

A graph $G=(V,E)$ is called an \emph{intersection graph} for a finite family $\mathcal{F}$ of subsets of a nonempty set if there is a one-to-one correspondence between $\mathcal{F}$ and $V$ such that two sets in $\mathcal{F}$ have nonempty intersection if and only if their corresponding vertices in $V$ are adjacent. If $\mathcal{F}$ is a family of intervals on a real line such that no interval in $\mathcal{F}$ properly contains another interval in $\mathcal{F}$, then the intersection graph for $\mathcal{F}$ is called a \emph{proper interval} graph for $\mathcal{F}$. If $\mathcal{F}$ is a family of paths of a tree, then the intersection graph for $\mathcal{F}$ is called an \emph{undirected path} graph for $\mathcal{F}$.

Given an ordering $\sigma=(v_1,v_2,\ldots,v_n)$ of vertices of a graph $G$, for every $1\leq i\leq n$, let $G_i$ denote the graph $G[\{v_i,v_{i+1},\ldots,v_n\}]$. A vertex $v$ of a graph $G$ is called a \emph{simplicial} vertex of $G$ if $N_G[v]$ is a clique of $G$. An ordering $\sigma=(v_1,v_2,\ldots,v_n)$ is a \emph{perfect elimination ordering} (PEO) of $G$ if $v_i$ is a simplicial vertex of $G_i=G[\{v_i,v_{i+1},\ldots,v_n\}]$ for all $i, 1\leq i\leq n$. A PEO $\sigma=(v_1,v_2,\ldots,v_n)$ of a chordal graph is a \emph{bi-compatible elimination ordering}(BCO) if $\sigma^{-1}=(v_n,v_{n-1},\ldots,v_1)$, i.e. the reverse of $\sigma$, is also a PEO of $G$. This implies that $v_i$ is simplicial in $G[\{v_1,v_2,\ldots,v_i\}]$ as well as in $G[\{v_i,v_{i+1},\ldots,v_n\}]$. The proper interval graphs are characterized in terms of BCO \cite{laskar-pig}. In \cite{laskar-pig}, it has also been shown that if $\sigma=(v_1,v_2,\ldots,v_n)$ is a BCO of a connected graph $G$, then $P=v_1v_2\cdots v_n$ is a Hamiltonian path (a path containing all the vertices) of $G$. The following is an observation about a BCO of a proper interval graph.

\begin{obs}[\cite{panda-bco}]\label{obsprelim2}
Suppose $\sigma=(v_1,v_2,\ldots,v_n)$ is a BCO of a connected proper interval graph $G=(V,E)$. Then the following are true.
\begin{enumerate}
\item If $v_iv_j\in E$ for some $i<j$, then $\{v_i,v_{i+1},\ldots,v_j\}$ is a clique of $G$.

\item If $i<k<j$ and $v_iv_j\in E$, then $N_G[v_k]\cap \{v_1,v_2,\ldots,v_k\}\subseteq N_G[v_i]$ and $N_G[v_k]\cap \{v_{k},v_{k+1},\ldots,v_n\}\subseteq N_G[v_j]$.
\end{enumerate}
\end{obs}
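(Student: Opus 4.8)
The plan is to derive the whole statement from the three structural properties that a BCO $\sigma=(v_1,\ldots,v_n)$ of a connected graph provides, all of which are recorded in the excerpt: (i) $v_i$ is simplicial in $G_i=G[\{v_i,\ldots,v_n\}]$ (forward PEO); (ii) $v_i$ is simplicial in $G[\{v_1,\ldots,v_i\}]$ (reverse PEO); and (iii) $v_\ell v_{\ell+1}\in E$ for every $\ell$, since $v_1v_2\cdots v_n$ is a Hamiltonian path. I would prove part~1 first and then obtain part~2 as an essentially mechanical consequence, so the real content lies in part~1.

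For part~1 I would argue by induction on $j-i$. The base case $j-i=1$ is immediate, since then $\{v_i,v_{i+1}\}=\{v_i,v_j\}$ is a single edge. For the inductive step assume $j-i\ge 2$ and $v_iv_j\in E$. First I would show $v_{i+1}v_j\in E$: by (iii) we have $v_iv_{i+1}\in E$, so both $v_{i+1}$ and $v_j$ lie in $N_{G_i}(v_i)$, and since $v_i$ is simplicial in $G_i$ the set $N_{G_i}[v_i]$ is a clique, forcing $v_{i+1}v_j\in E$. Applying the induction hypothesis to the pair $(i+1,j)$ then shows that $\{v_{i+1},\ldots,v_j\}$ is a clique. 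It remains to attach $v_i$ to every vertex of this clique, and for this I would use (ii): $v_j$ is simplicial in $G[\{v_1,\ldots,v_j\}]$, and for each $k$ with $i<k\le j$ both $v_i$ and $v_k$ belong to $N_G[v_j]\cap\{v_1,\ldots,v_j\}$ (the former because $v_iv_j\in E$, the latter because $\{v_{i+1},\ldots,v_j\}$ is a clique); since that set is a clique we conclude $v_iv_k\in E$. Hence every pair in $\{v_i,\ldots,v_j\}$ is adjacent, completing the induction.

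With part~1 in hand, part~2 follows by case analysis; note first that the hypotheses $i<k<j$ and $v_iv_j\in E$ already make $\{v_i,\ldots,v_j\}$ a clique by part~1. To prove $N_G[v_k]\cap\{v_1,\ldots,v_k\}\subseteq N_G[v_i]$, take $v_\ell\in N_G[v_k]$ with $\ell\le k$; the cases $\ell\in\{i,k\}$ are handled directly (for $\ell=k$ note $v_iv_k\in E$ since $\{v_i,\ldots,v_j\}$ is a clique), and otherwise $v_\ell v_k\in E$. If $\ell<i$, then part~1 applied to the edge $v_\ell v_k$ shows $\{v_\ell,\ldots,v_k\}$ is a clique containing $v_i$, so $v_\ell v_i\in E$; if $i<\ell<k$, then $v_\ell\in\{v_i,\ldots,v_j\}$, again a clique, so $v_\ell v_i\in E$. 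In every case $v_\ell\in N_G[v_i]$. The second inclusion $N_G[v_k]\cap\{v_k,\ldots,v_n\}\subseteq N_G[v_j]$ is the mirror image, obtained by the same argument read through the reversed ordering $\sigma^{-1}$.

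The step I expect to be the main obstacle is the inductive step of part~1, and specifically the need to invoke \emph{both} the forward and the reverse simpliciality of $\sigma$: forward simpliciality at $v_i$ (together with the Hamiltonian edge) produces the smaller edge $v_{i+1}v_j$ that feeds the induction, while reverse simpliciality at $v_j$ is what finally glues $v_i$ onto the clique. Neither direction alone suffices --- a one-sided elimination ordering can satisfy the hypothesis of part~1 for a pair $v_iv_j$ without the intermediate vertices forming a clique --- so the argument genuinely relies on the defining bi-compatibility of the BCO. Everything after part~1 is bookkeeping over the finitely many positional cases.
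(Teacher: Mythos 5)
Your proof is correct, but note that the paper itself offers no proof to compare against: Observation~\ref{obsprelim2} is imported verbatim from \cite{panda-bco} and used as a black box. What you have supplied is a self-contained derivation from the two facts the paper does record, namely that a BCO is simultaneously a forward and a reverse PEO, and that a BCO of a connected graph traces a Hamiltonian path (cited from \cite{laskar-pig}). I checked the details: in part~1 the induction on $j-i$ is sound --- the Hamiltonian edge $v_iv_{i+1}$ together with forward simpliciality of $v_i$ in $G_i$ yields $v_{i+1}v_j\in E$, the inductive hypothesis makes $\{v_{i+1},\ldots,v_j\}$ a clique, and reverse simpliciality of $v_j$ in $G[\{v_1,\ldots,v_j\}]$ glues $v_i$ to every intermediate vertex. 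In part~2 your four cases $\ell<i$, $\ell=i$, $i<\ell<k$, $\ell=k$ are exhaustive, each reduces to part~1, and the mirror argument for the second inclusion is legitimate because the reverse of a BCO is again a BCO. Your closing remark that bi-compatibility is genuinely needed is also accurate: a one-sided PEO fails part~1 already for the star $K_{1,3}$ with center $c$ and leaves $a,b,d$ ordered $(a,b,d,c)$, where $ac\in E$ but the four vertices are no clique. Importantly, your argument avoids any circularity with the paper's own material --- you do not invoke Observation~\ref{obsprelim5}, which the paper proves \emph{from} Observation~\ref{obsprelim2} --- so your proof could stand in the paper as a proof of the cited result.
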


\begin{obs}\label{obsprelim5}
Suppose $\sigma=(v_1,v_2,\ldots,v_n)$ is a BCO of a connected proper interval graph $G$. Then the following are true.

\begin{enumerate}
\item If $k\leq i$, then $N_G[v_k]\cap V(G_i)\subseteq N_{G_i}[v_i]$.

\item If $v_k\in N_{G_i}[v_i]$, then $N_{G_i}[v_i]\subseteq N_{G_i}[v_k]$.
\end{enumerate}

\end{obs}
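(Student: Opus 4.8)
The plan is to derive both statements from the clique property recorded in Observation~\ref{obsprelim2}(1), which is really the only structural fact about a BCO needed here; everything else is index bookkeeping. Throughout I would keep in mind that $V(G_i)=\{v_i,v_{i+1},\ldots,v_n\}$ and that for $u,w\in V(G_i)$ one has $uw\in E(G_i)$ if and only if $uw\in E$, so that membership in $N_{G_i}[\cdot]$ can be checked against edges of $G$ once both endpoints are known to lie in $G_i$.

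For part (a), I would take an arbitrary $v_\ell\in N_G[v_k]\cap V(G_i)$, so that $\ell\ge i\ge k$, and show $v_\ell\in N_{G_i}[v_i]$. If $v_\ell=v_k$, the inequalities force $i=k=\ell$ and hence $v_\ell=v_i$, so there is nothing to prove; otherwise $v_kv_\ell\in E$ with $k<\ell$, and Observation~\ref{obsprelim2}(1) makes $\{v_k,v_{k+1},\ldots,v_\ell\}$ a clique. Since $k\le i\le \ell$, both $v_i$ and $v_\ell$ lie in this clique, whence $v_\ell=v_i$ or $v_iv_\ell\in E$; in either case $v_\ell\in N_{G_i}[v_i]$, as desired.

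For part (b), the hypothesis $v_k\in N_{G_i}[v_i]$ gives $k\ge i$ together with $k=i$ or $v_iv_k\in E$; the case $k=i$ is immediate, so I assume $i<k$ and $v_iv_k\in E$. Given any $v_\ell\in N_{G_i}[v_i]$, I want $v_\ell\in N_{G_i}[v_k]$. Again $\ell\ge i$, and either $v_\ell=v_i$ (then the edge $v_iv_k$ already places $v_\ell$ in $N_{G_i}[v_k]$) or $v_iv_\ell\in E$ with $i<\ell$. In the latter situation Observation~\ref{obsprelim2}(1) yields that both $\{v_i,\ldots,v_k\}$ and $\{v_i,\ldots,v_\ell\}$ are cliques; comparing $k$ and $\ell$, the vertex with the smaller index lies inside the clique that runs out to the larger index, so $v_kv_\ell\in E$ (or $k=\ell$), giving $v_\ell\in N_{G_i}[v_k]$.

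I do not expect a genuine obstacle here: both parts collapse to the single observation that an edge $v_av_b$ with $a<b$ forces the entire interval $\{v_a,\ldots,v_b\}$ to be a clique. The only place to be careful is the boundary bookkeeping—ensuring the index inequalities ($\ell\ge i$, $k\ge i$, and so on) are correctly oriented before invoking Observation~\ref{obsprelim2}(1), and separating out the degenerate cases $v_\ell=v_i$, $v_\ell=v_k$, and $k=i$ so that one never asserts an edge between a vertex and itself.
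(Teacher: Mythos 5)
Your proof is correct. For part (a) it coincides with the paper's argument: pick $v_\ell\in N_G[v_k]\cap V(G_i)$, apply Observation~\ref{obsprelim2}(a) to the edge $v_kv_\ell$ to get the clique $\{v_k,\ldots,v_\ell\}$, and read off $v_\ell\in N_{G_i}[v_i]$ from $k\le i\le \ell$ (the paper dispatches the degenerate case via $k=i$ rather than $v_\ell=v_k$, but this is the same bookkeeping). For part (b), however, you take a genuinely different route: the paper invokes the PEO property of the BCO directly, noting that $v_i$ is simplicial in $G_i$, so $N_{G_i}[v_i]$ is a clique and the containment $N_{G_i}[v_i]\subseteq N_{G_i}[v_k]$ is immediate for any $v_k$ in that clique. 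You instead apply Observation~\ref{obsprelim2}(a) twice, to the edges $v_iv_k$ and $v_iv_\ell$, and compare the indices $k$ and $\ell$ to place the smaller-indexed vertex inside the clique interval ending at the larger one. Your version is self-contained modulo a single lemma and avoids any appeal to simpliciality, at the cost of an extra case split; the paper's version is shorter because simpliciality of $v_i$ in $G_i$ hands you the clique $N_{G_i}[v_i]$ in one stroke. Your index checks are all correctly oriented (in particular, $\ell\ge i$ guarantees $v_\ell\in\{v_i,\ldots,v_k\}$ when $\ell<k$), so there is no gap.
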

\begin{proof}
(a) If $k=i$, then we are done. So assume that $k<i$ and let $v_r\in N_G[v_k]\cap V(G_i)$. By Observation \ref{obsprelim2}(a), $\{v_k,v_{k+1},\ldots,v_r\}$ is a clique. So $v_r\in N_G[v_i]$ and hence $N_{G_i}[v_i]$, completing the proof of (a).

(b) If $k=i$, then we are done. So assume that $k\neq i$ and let $v_r\in N_{G_i}[v_i]$. Since $\sigma$ is a BCO of $G$, $v_i$ is a simplicial vertex in $G_i$. This implies that $v_r\in N_G[v_k]$ and hence $v_r\in N_{G_i}[v_k]$, consequently, $N_{G_i}[v_i]\subseteq N_{G_i}[v_k]$. This completes the proof of (b).
\end{proof}
\begin{obs}\label{obsprelim4}
Let $G$ be a connected proper interval graph having $n$ vertices and $v$ be a vertex of $G$. Then either $N_G[v]$ induces a path of length at most $2$ or $v$ is contained in a clique of $G$ of size at least $3$.
\end{obs}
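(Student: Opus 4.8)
The plan is to fix the vertex $v$ and argue by cases on its degree $d_G(v)$. The cases $d_G(v)\le 2$ are elementary and use no special structure, whereas the case $d_G(v)\ge 3$ is where the proper interval structure, accessed through a BCO, does the real work. So the overall shape of the argument is a trichotomy, with the bulk of the content living in the high-degree case.

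First I would dispose of $d_G(v)\le 2$. If $d_G(v)=0$, then connectedness of $G$ forces $n=1$, so $N_G[v]=\{v\}$ is a degenerate path of length $0$. If $d_G(v)=1$, then $G[N_G[v]]$ is a single edge, a path of length $1$. If $d_G(v)=2$, I would write $N_G(v)=\{x,y\}$ and split on whether $xy\in E$: if $xy\notin E$ then $G[N_G[v]]$ is the path $x\,v\,y$ of length $2$, and if $xy\in E$ then $\{v,x,y\}$ is a triangle, so $v$ lies in a clique of size $3$. In each subcase one of the two alternatives holds.

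Next I would treat $d_G(v)\ge 3$, so that $|N_G[v]|\ge 4$. Since $G$ is a connected proper interval graph it admits a BCO $\sigma=(v_1,\ldots,v_n)$; write $v=v_i$ and let $a$ and $b$ be, respectively, the smallest and largest indices appearing in $N_G[v_i]$, so that $a\le i\le b$. Because $N_G[v_i]$ contains at least four vertices, all with indices in $\{a,a+1,\ldots,b\}$, I immediately get $b-a\ge 3$. The key step is then to apply Observation \ref{obsprelim2}(a): from $v_av_i\in E$ (when $a<i$) it yields that $\{v_a,v_{a+1},\ldots,v_i\}$ is a clique, and from $v_iv_b\in E$ (when $i<b$) it yields that $\{v_i,v_{i+1},\ldots,v_b\}$ is a clique.

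Finally, from $b-a\ge 3$ I have $(i-a)+(b-i)\ge 3$, so at least one of $i-a$ and $b-i$ is at least $2$. If $i-a\ge 2$, then $\{v_a,\ldots,v_i\}$ is a clique of size $i-a+1\ge 3$ containing $v$; symmetrically, if $b-i\ge 2$, then $\{v_i,\ldots,v_b\}$ is a clique of size at least $3$ containing $v$. Either way $v$ lies in a clique of size at least $3$, as required. The step I expect to need the most care is pairing the counting bound with the correct application of Observation \ref{obsprelim2}(a): one must make sure that whichever of $i-a\ge 2$ or $b-i\ge 2$ holds, the corresponding endpoint really is a neighbor of $v_i$ so that the observation applies. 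This is automatic from the choice of $a$ and $b$ as extreme indices of $N_G[v_i]$, but it is the hinge on which the whole high-degree case turns.
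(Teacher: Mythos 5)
Your proof is correct, but it takes a genuinely different route from the paper's. The paper also works with a BCO, yet its case split is positional rather than degree-based: for $v=v_i$ with $1<i<n$ it invokes the defining property of a BCO that $v_i$ is simplicial both in $G[\{v_1,\ldots,v_i\}]$ and in $G[\{v_i,\ldots,v_n\}]$, concludes that $v$ lies in a clique of size at least $3$ whenever the backward or forward neighborhood has size at least $2$, and otherwise identifies the two neighbors as exactly $v_{i-1}$ and $v_{i+1}$ (implicitly using that consecutive vertices of a BCO of a connected graph are adjacent, i.e., the Hamiltonian path property), so that $N_G[v_i]$ induces either a triangle or the path $v_{i-1}v_iv_{i+1}$. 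You instead trichotomize on $d_G(v)$: the cases $d_G(v)\le 2$ you settle by elementary case analysis that is valid in any graph and uses no BCO at all, while for $d_G(v)\ge 3$ you combine Observation~\ref{obsprelim2}(a) with a pigeonhole count on the extreme indices $a,b$ of $N_G[v_i]$, getting $b-a\ge 3$, hence $\max(i-a,\,b-i)\ge 2$, hence a clique $\{v_a,\ldots,v_i\}$ or $\{v_i,\ldots,v_b\}$ of size at least $3$ containing $v$. Your version buys a cleaner separation of where the proper interval structure is actually needed (only in the high-degree case, where no induced short path is possible anyway) and avoids the paper's tacit appeal to the Hamiltonian path fact; the paper's version buys a uniform treatment of the endpoints $v_1,v_n$ via simpliciality and a direct exhibition of the induced path in the borderline case. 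The hinge you flagged---that the extreme-index vertices really are neighbors of $v_i$ so that Observation~\ref{obsprelim2}(a) applies---is indeed immediate from the choice of $a$ and $b$ as the minimum and maximum indices occurring in $N_G[v_i]$, so there is no gap.
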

\begin{proof}
Let $\sigma=(v_1,v_2,\ldots,v_n)$ be a BCO of $G$. If $v=v_1$ or $v=v_n$, then either $N_G[v]$ induces a path of length $1$ (if $d_G(v)=1$) or $v$ is contained in  a clique of size at least $3$ (if $d_G(v)\geq 2$) and hence we are done. Hence we may assume that $v\neq v_1$ and $v\neq v_n$, and so $v=v_i$ for some $i\neq 1$ and $i\neq n$. Since $\sigma$ is a BCO of $G$, $v_i$ is simplicial in $G[\{v_1,v_2,\ldots,v_i\}]$ as well as $G[\{v_i,v_{i+1},\ldots,v_n\}]$. If $|N_G(v_i)\cap \{v_1,v_2,\ldots,v_i\}|\geq 2$ or $|N_G(v_i)\cap \{v_i,v_{i+1},\ldots,v_n\}|\geq 2$, then $v_i$ is contained in a clique of size at least $3$. So assume that $N_G(v_i)\cap \{v_1,v_2,\ldots,v_i\}=\{v_{i-1}\}$ and $N_G(v_i)\cap \{v_i,v_{i+1},\ldots,v_n\}=\{v_{i+1}\}$.   If $v_{i-1}v_{i+1}\in E(G)$, then $v_i$ is contained in a clique of size at least $3$. If $v_{i-1}v_{i+1}\notin E(G)$, then $N_G[v_i]$ induces a path of length $2$ in $G$.
\end{proof}

We conclude this section by presenting the following observations which are easy to follow and hence the proofs are omitted.

\begin{obs}\label{obsprelim1}
Let $G=(V,E)$ be a connected graph with at most two vertices, then $\gamma_{\rm nt}(G)=|V|$.
\end{obs}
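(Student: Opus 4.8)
The plan is to reduce to the two concrete possibilities: a connected graph on at most two vertices is either a single vertex $K_1$ or the complete graph $K_2$, so the entire statement follows by checking these cases by hand against the definition of a \ntd-set.

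First I would verify the upper bound $\gamma_{\rm nt}(G)\le |V|$ by exhibiting $D=V$ as a \ntd-set. Trivially $V$ dominates $G$, so the only thing left to check is that $G[N_G(V)]$ has no isolated vertex. For $K_2$ with $V=\{u,v\}$ we have $N_G(V)=\{u,v\}$ and $G[\{u,v\}]=K_2$, which has no isolated vertex; for $K_1$ the open neighborhood $N_G(V)$ is empty, and the induced subgraph on the empty set has no vertices, hence vacuously no isolated vertex. In both cases $V$ is a \ntd-set, giving $\gamma_{\rm nt}(G)\le |V|$.

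For the matching lower bound I would argue that no set of size smaller than $|V|$ can work. When $|V|=1$ this is immediate, since a \ntd-set is in particular a dominating set and hence nonempty. When $G=K_2$ the only smaller candidate is a singleton $D=\{u\}$; then $N_G(D)=\{v\}$, so $G[N_G(D)]$ is a single vertex, which is itself an isolated vertex, and $D$ fails to satisfy the \ntd-condition. Hence $\gamma_{\rm nt}(G)\ge 2=|V|$, and combined with the upper bound we obtain $\gamma_{\rm nt}(G)=|V|$.

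The only point that requires any care is the bookkeeping for the open neighborhood in the degenerate $K_1$ case, where $N_G(V)=\emptyset$ and one must appeal to the convention that the empty induced subgraph contains no isolated vertex; beyond that, the verification is a direct application of the definitions and presents no genuine obstacle.
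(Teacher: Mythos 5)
Your proof is correct and is exactly the argument the paper intends: the paper omits the proof as ``easy to follow,'' and the evident route is the case check $G=K_1$ or $G=K_2$ that you carry out, verifying $D=V$ works and that singletons fail for $K_2$ because $G[N_G(\{u\})]$ is an isolated vertex. Your explicit handling of the vacuous case $N_G(V)=\emptyset$ for $K_1$ is the one genuine subtlety, and you treat it correctly.
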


\begin{obs}\label{obsprelim3}
Suppose $D$ is a dominating set of a graph $G$ and $v$ is a vertex of $G$ such that $v\in D$. Then $v$ is not an isolated vertex in $G[N_G(D)]$.
\end{obs}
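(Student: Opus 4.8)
The plan is to unwind the definition of $N_G(D)$ and exploit the symmetry of adjacency; essentially no machinery beyond Observation~\ref{obs1}'s surrounding definitions is needed. First I would dispose of the degenerate case: if $v\notin N_G(D)$, then $v$ is not even a vertex of the induced subgraph $G[N_G(D)]$, so the assertion that $v$ is not an isolated vertex of $G[N_G(D)]$ holds trivially (a vertex can only be isolated in a graph of which it is a member). Hence the only substantive case is $v\in N_G(D)$, and this is the case I would spell out.

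Assume then that $v\in N_G(D)$. Since $N_G(D)=\cup_{u\in D}N_G(u)$, membership $v\in N_G(D)$ means there exists a vertex $u\in D$ with $v\in N_G(u)$; equivalently, $u$ is a neighbor of $v$ that lies in $D$. The key step is to observe that this same $u$ must itself belong to $N_G(D)$: because $u\in N_G(v)$ and $v\in D$, we have $u\in N_G(v)\subseteq N_G(D)$.

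Now $u$ and $v$ are two vertices of $N_G(D)$ that are adjacent in $G$, hence adjacent in the induced subgraph $G[N_G(D)]$. Therefore $v$ has a neighbor in $G[N_G(D)]$ and is not isolated, which is exactly the claim.

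I do not expect a genuine obstacle here; the only point requiring a moment's care is the symmetry that lets a single witness $u\in D$ certify both that $v\in N_G(D)$ (since $u$ dominates $v$) and that $u\in N_G(D)$ (since $v\in D$ dominates $u$). This self-certifying pair $\{u,v\}\subseteq D$ is precisely what precludes isolation, and it also explains the practical upshot: when checking the neighborhood total domination condition one never has to test vertices of $D$ itself for isolation in $G[N_G(D)]$, only the vertices of $N_G(D)\setminus D$.
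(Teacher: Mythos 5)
Your proof is correct. The paper actually states this observation without proof (it is grouped with the remarks whose proofs are ``easy to follow and hence omitted''), and your argument supplies exactly the intended justification: a witness $u\in D$ for $v\in N_G(D)$ satisfies $u\in N_G(v)\subseteq N_G(D)$ precisely because $v\in D$, so $uv$ is an edge of $G[N_G(D)]$ and $v$ is not isolated there. Two points of care in your write-up are worth keeping: the degenerate case $v\notin N_G(D)$ can genuinely occur (e.g.\ $D=\{v\}$ with $v$ the center of a star, since $N_G(D)$ is an \emph{open} neighborhood, so $v$ need not lie in it), and your argument never uses the hypothesis that $D$ is dominating --- the observation holds for an arbitrary set $D$ containing $v$, which is a mild strengthening of what the paper asserts.
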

\begin{obs}\label{obsprelim0}
Suppose $D$ is a \ntd-set of a connected graph $G$. If $v$ is a pendant vertex of $G$ and $u$ is the support vertex of $v$, then either $v\in D$ or $|N_G[u]\cap D|\geq 2$.
\end{obs}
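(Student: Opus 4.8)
The plan is to assume the nontrivial case $v\notin D$ and to deduce $|N_G[u]\cap D|\geq 2$ by invoking, in turn, the two defining properties of a \ntd-set: first that $D$ is a dominating set, and then that $G[N_G(D)]$ has no isolated vertex.

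First I would use that $v$ is a pendant vertex, so its closed neighborhood is $N_G[v]=\{v,u\}$ and the only vertices capable of dominating $v$ are $v$ itself and its support vertex $u$. Since $D$ is a dominating set and we are assuming $v\notin D$, the vertex $v$ must be dominated by $u$, which forces $u\in D$. In particular $u\in N_G[u]\cap D$, so this set is already nonempty; the remaining work is to produce a second element.

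Next I would use $u\in D$ to place $v$ inside the open neighborhood of $D$: indeed $v\in N_G(u)\subseteq N_G(D)$, so $v$ is a vertex of the induced subgraph $G[N_G(D)]$. Since $D$ is a \ntd-set, this subgraph has no isolated vertex, hence $v$ has a neighbor lying in $N_G(D)$. But $u$ is the unique neighbor of $v$ in $G$, so necessarily $u\in N_G(D)$. Unwinding $u\in N_G(D)$ then yields a vertex $w\in D$ with $u\in N_G(w)$, i.e. $w$ is adjacent to $u$; thus $w\in N_G(u)\cap D$ and in particular $w\neq u$. Together with $u$ itself, this exhibits two distinct members $u$ and $w$ of $N_G[u]\cap D$, giving $|N_G[u]\cap D|\geq 2$ as required.

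The argument is short, so there is no genuine obstacle; the only point needing care is the order in which the two \ntd-conditions are applied. Domination is what forces $u$ into $D$ and thereby places $v$ into $N_G(D)$ in the first place, and only after $v$ is known to be a vertex of $G[N_G(D)]$ can the no-isolated-vertex condition be used to pull an additional neighbor of $u$ into $D$. Getting this dependency right is the whole content of the proof.
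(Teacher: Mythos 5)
Your proof is correct: assuming $v\notin D$, domination forces $u\in D$, and applying the no-isolated-vertex condition on $G[N_G(D)]$ to $v$ (whose unique neighbor is $u$) yields a vertex $w\in N_G(u)\cap D$ distinct from $u$, giving $|N_G[u]\cap D|\geq 2$. The paper omits the proof of this observation as routine, and your argument, including the correct ordering of the two \ntd-conditions, is exactly the intended one.
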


\section{\textsf{NP}-completeness}\label{npc}

Given a graph $G$, we denote \textsc{Min-Dom-Set} as the problem of finding a minimum dominating set of $G$. In this section, we provide a polynomial time reduction from the decision version of \textsc{Min-Dom-Set} to the decision version of \textsc{Min-NTDS} and using this reduction, we prove that the decision version of \textsc{Min-NTDS} is \textsf{NP}-complete for undirected path graphs, chordal bipartite graphs, and planar graphs.

\begin{figure}[h]
\begin{center}
\includegraphics[scale=.8]{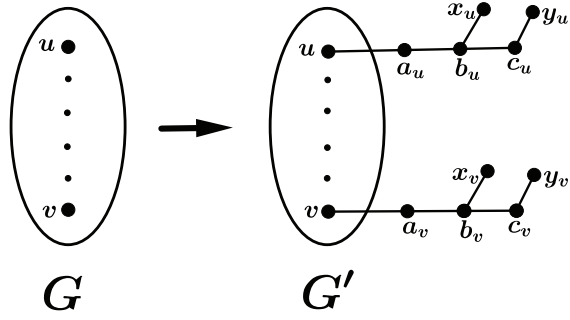}
\caption{\label{npc-fig} The construction of the graph $G'$ from a graph $G$}
\end{center}
\end{figure}

Let $G$ be a graph having $n$ vertices. We construct a new graph $G'$, where $V(G')=V(G)\cup\{a_v,b_v,c_v,x_v,y_v: v\in V(G)\}$ and $E(G')=E(G)\cup\{va_v,a_vb_v,b_vc_v,b_vx_v,c_vy_v\}$. Notice that for every $v\in V(G)$, $x_v$ and $y_v$ are pendant vertices of $G'$. The construction of the graph $G'$ is shown in Figure \ref{npc-fig}.

\begin{lemma}\label{npc-lem1}
The graph $G$ has a dominating set of cardinality at most $k$ if and only if the graph $G'$ has a \ntd-set of cardinality at most $k+2n$.
\end{lemma}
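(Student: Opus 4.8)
The plan is to prove the two implications separately, but to base both on a single gadget-by-gadget bookkeeping identity. Write $S_v=\{a_v,b_v,c_v,x_v,y_v\}$ for the five private gadget vertices attached to $v$; the sets $S_v$ $(v\in V(G))$ are pairwise disjoint and disjoint from $V(G)$, so that $V(G')=V(G)\cup\bigcup_v S_v$ is a partition. The fact I would isolate first concerns the two pendants: since $N_{G'}[x_v]=\{x_v,b_v\}$ and $N_{G'}[y_v]=\{y_v,c_v\}$, every dominating set of $G'$ must meet both of the disjoint pairs $\{x_v,b_v\}$ and $\{y_v,c_v\}$, hence contains at least two vertices of $S_v$; and if it also contains $a_v$, then it contains at least three vertices of $S_v$, namely $a_v$ together with one vertex from each pair.

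For the forward direction, given a dominating set $D$ of $G$ with $|D|\le k$, I would set $D'=D\cup\{b_v,c_v:v\in V(G)\}$, so that $|D'|=|D|+2n\le k+2n$. Domination of $G'$ is immediate: the vertices of $V(G)$ are dominated by $D$, while $a_v,x_v$ are dominated by $b_v$ and $c_v,y_v$ by $c_v$. For the neighborhood-total condition I would exhibit, for each vertex of $N_{G'}(D')$, a neighbor lying in $N_{G'}(D')$: the edge $b_vc_v$ places both $b_v$ and $c_v$ in $N_{G'}(D')$ as mutual witnesses, each of $a_v,x_v,y_v$ is adjacent to $b_v$ or $c_v\in N_{G'}(D')$, and any $v\in V(G)\cap N_{G'}(D')$ is adjacent to $a_v$, which lies in $N_{G'}(D')$ as a neighbor of $b_v$. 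Thus $G'[N_{G'}(D')]$ has no isolated vertex and $D'$ is an \ntd-set.

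For the converse, let $D'$ be an \ntd-set of $G'$ (in particular a dominating set) with $|D'|\le k+2n$. I would define $D_0$ to be $D'\cap V(G)$ together with every $v\in V(G)$ not dominated in $G$ by $D'\cap V(G)$. Any such undominated $v$ satisfies $v\notin D'$ and has no $G$-neighbor in $D'$, so $D'$ can dominate $v$ in $G'$ only through $a_v\in D'$; hence at most $|\{v:a_v\in D'\}|$ vertices are added. By construction $D_0$ dominates $G$, and using the partition of $V(G')$,
\[
|D_0|\le |D'\cap V(G)|+|\{v:a_v\in D'\}|=|D'|-\sum_{v}\bigl(|D'\cap S_v|-[a_v\in D']\bigr).
\]
The key fact above gives $|D'\cap S_v|-[a_v\in D']\ge 2$ for every $v$ (two vertices when $a_v\notin D'$, three when $a_v\in D'$), so the sum is at least $2n$ and $|D_0|\le(k+2n)-2n=k$.

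I expect the only delicate point to be the accounting in the backward direction: one must resist taking $D'\cap V(G)$ itself as the dominating set, because a vertex $v$ may be dominated in $G'$ solely via $a_v$, with no neighbor of $D'$ inside $V(G)$. The gadget is engineered precisely so that each such repair (inserting $v$) is paid for by the third vertex that $a_v\in D'$ forces into $S_v$; making this charging exact through the per-gadget inequality $|D'\cap S_v|-[a_v\in D']\ge 2$ is where the argument turns, and it relies only on domination of the two pendants, not on the neighborhood-total hypothesis.
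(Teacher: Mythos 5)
Your proof is correct, and its second half takes a genuinely different route from the paper's. The forward direction coincides with the paper's: both take $D\cup\{b_v,c_v:v\in V(G)\}$ and verify the neighborhood-total condition with the same witnesses (the edge $b_vc_v$, and $a_v\in N_{G'}(D')$ serving as the witness for $v$). For the converse, however, the paper proceeds by normalization: a four-case exchange argument (Claim 1) transforms an arbitrary \ntd-set $S'$ into one of no larger size containing exactly $b_v,c_v$ from each gadget and avoiding $a_v,x_v,y_v$, after which deleting the $2n$ gadget vertices leaves a dominating set of $G$. Each exchange there must be checked to preserve the neighborhood-total property, which is the delicate part --- the paper itself flags in its Case 4 that $S'\setminus\{a_v\}$ can fail to be an \ntd-set when $N_G(v)\subset S'$, forcing the substitution of $v$ for $a_v$. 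You bypass all of this with a per-gadget charging argument: the disjoint pendant pairs $\{x_v,b_v\}$ and $\{y_v,c_v\}$ force $|D'\cap S_v|\ge 2+[a_v\in D']$, each vertex $v$ left undominated by $D'\cap V(G)$ forces $a_v\in D'$ and is repaired by inserting $v$ itself, and the identity $|D_0|\le |D'|-\sum_v\bigl(|D'\cap S_v|-[a_v\in D']\bigr)\le |D'|-2n$ closes the count. Your argument is shorter, never needs the neighborhood-total hypothesis, and consequently proves slightly more: any \emph{dominating} set of $G'$ of size at most $k+2n$ yields a dominating set of $G$ of size at most $k$, which combined with the forward direction gives $\gamma_{\rm nt}(G')=\gamma(G')=\gamma(G)+2n$. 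What the paper's normalization buys in exchange is an explicit structured \ntd-set witnessing the bound, but for the purposes of the \textsf{NP}-completeness reduction your counting argument is the more economical proof.
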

\begin{proof}
Suppose $D$ is a dominating set of $G$ of cardinality at most $k$. Let $S=D\cup\{b_v,c_v: v\in V(G)\}$. It is clear that $S$ is a dominating set of $G'$. Since $b_v,c_v\in S$, by Observation \ref{obsprelim3}, $b_v$ and $c_v$ are not isolated vertices in $G'[N_{G'}(S)]$. Again since $b_vc_v\in E(G')$ and $b_v,c_v\in S$, the vertices $a_v,x_v,y_v$ are not isolated vertices in $G'[N_{G'}(S)]$. Let $v\in V(G)$. If $v\in S$, then by Observation \ref{obsprelim3}, $v$ is not an isolated vertex in $G'[N_{G'}(S)]$. If $v\notin S$, then $v$ is not an isolated vertex in $G'[N_{G'}(S)]$ since $a_v\notin S$. So $S$ is a \ntd-set of $G'$ of cardinality at most $k+2n$.

Now assume that $S'$ is a \ntd-set of $G'$ of cardinality $k+2n$. We first prove the following claim.
\begin{claim}\label{npc-claim1}
There is a \ntd-set $S''$ of $G'$ with $|S''|\leq |S'|$ such that $b_v,c_v\in S''$ and $a_v,x_v,y_v\notin S''$ for every $v\in V(G)$.
\end{claim}

\begin{proof}[Proof of Claim \ref{npc-claim1}:]
If $b_v,c_v\in S'$ and $a_v,x_v,y_v\notin S'$ for every $v\in V(G)$, then we take $S''=S'$ and hence we are done. Let $v\in V(G)$ be arbitrary. We consider the following cases.

\noindent\textbf{Case 1:} $x_v\in S'$ and $y_v\notin S'$.

Since $y_v\notin S'$, by Observation \ref{obsprelim0}, $|N_{G'}[c_v]\cap S'|\geq 2$ which implies that $b_v,c_v\in S'$. Now $S''=(S'\setminus\{x_v\})$ is a \ntd-set of $G'$ such that if $y_v\notin S''$, then $b_v,c_v\in S''$ and $x_v\notin S''$. Notice that $|S''|\leq |S'|$.

\noindent\textbf{Case 2:} $x_v,y_v\in S'$.

If $b_v,c_v\in S'$, then $S''=S'\setminus\{x_v,y_v\}$ is a smaller \ntd-set of $G'$ than $S'$. Now assume that $b_v\notin S'$ or $c_v\notin S'$. If $b_v,c_v\notin S'$, then $S''=(S'\setminus\{x_v,y_v\})\cup\{b_v,c_v\}$ is a \ntd-set of $G'$ with $|S''|=|S'|$. Similarly if $c_v\notin S'$ and $b_v\in S'$ (resp. $b_v\notin S'$ and $c_v\in S'$), then $S''=(S'\setminus\{x_v,y_v\})\cup \{c_v\}$ (resp. $S''=(S'\setminus\{x_v,y_v\})\cup \{b_v\}$) is a \ntd-set of $G'$. Notice that in each case, $|S''|\leq |S'|$.

\noindent\textbf{Case 3:} $x_v\notin S'$ and $y_v\in S'$.

Since $x_v\notin S'$ and $x_v$ is a pendant vertex of $G'$, by Observation \ref{obsprelim0}, $|N_{G'}[b_v]\cap S'|\geq 2$. So either $a_v\in S'$ or $c_v\in S'$. Moreover, $b_v\in S'$ to dominate $x_v$. If $c_v\in S'$, then $S''=S'\setminus\{y_v\}$ is a \ntd-set of $G'$ with $|S''|<|S'|$. Now assume that $c_v\notin S'$. Then $a_v\in S'$. Now $S''=(S'\setminus\{y_v\})\cup\{c_v\}$ is a \ntd-set of $G'$ with $|S''|=|S'|$.

\noindent\textbf{Case 4:} $x_v,y_v\notin S'$.

Then $b_v,c_v\in S'$ to dominate $x_v$ and $y_v$. Let  $a_v\in S'$. If $v\in S'$, then $S''=(S'\setminus\{a_v\})$ is a \ntd-set of $G'$ with $|S''|<|S'|$. If $v\notin S'$, then $S''=(S'\setminus\{a_v\})\cup\{v\}$ is a \ntd-set of $G'$ with $|S''|=|S'|$. Notice that if $v\notin S'$, then $S'\setminus\{a_v\}$ may not be a \ntd-set of $G'$ if $N_{G}(v)\subset S'$.

So by Cases 1-4, we can get a \ntd-set $S''$ of $G'$ with $|S''|\leq |S'|$ such that $b_v,c_v\in S''$ and $a_v,x_v,y_v\notin S''$ for every $v\in V(G)$. This completes the proof of the claim. \end{proof}

By Claim \ref{npc-claim1}, we can get a \ntd-set $S''$ of $G'$ with $|S''|\leq |S'|$ such that $b_v,c_v\in S''$ and $a_v,x_v,y_v\notin S''$ for every $v\in V(G)$. Let $D'=S''\setminus\{b_v,c_v:v\in V(G)\}$. It can be seen that $D'$ is a dominating set of $G$ since $S''$ is a \ntd-set of $G'$. Notice that $|D'|=|S''|-2|V(G)|\leq |S'|-2n\leq k$. This completes the proof of the lemma.
\end{proof}

Notice that if $G$ is a chordal bipartite (resp. planar) graph, then $G'$ is also a chordal bipartite (resp. planar) graph. We now prove that if $G$ is an undirected path graph, then $G'$ is also an undirected path graph. 

Let $G$ be an undirected path graph. Then there is a tree $T$ and a set of paths $\{P_v
: v\in V(G)\}$ of $T$ such that $uv\in E(G)$ if and only if $P_u\cap P_v\neq\emptyset$. For each path $P_v$, consider an end point $v^*$ of $P_v$. Construct a tree
$T'$ which results from $T$ by introducing  two new paths,$v^*d_va_vb_vc_vy_v$ and $b_vz_vx_v$ joined at $v^*$ and $b_v$, respectively. Let $P'_v=P_v\cup\{v^*d_v\}$. We note that the path $P_v\cup\{v^*d_v\}$ corresponds to the vertex $v$ of $G$ in $T'$. Similarly, the paths $d_va_v,a_vb_vz_v,b_vc_v,z_vx_v,c_vy_v$ correspond to the vertices $a_v,b_v,c_v,x_v,y_v$, respectively in $T'$. Then it is
clear that $G'$ is the intersection graph of the set of paths $\displaystyle\bigcup_{v\in V(G)}\{P'_v\cup \{d_va_v,a_vb_vz_v,b_vc_v,z_vx_v,c_vy_v\}\}$. Hence $G'$ is an undirected path graph.

Since the decision version of \textsc{Min-Dom-Set} is \textsf{NP}-complete for undirected path graphs \cite{booth-ds}, for planar graphs \cite{garey}, and for chordal bipartite graphs \cite{muller}, by
Lemma 8, we have the following theorem.
\begin{theorem}
The decision version of \textsc{Min-NTDS} is \textsf{NP}-complete for undirected path graphs, chordal bipartite graphs, and planar graphs.
\end{theorem}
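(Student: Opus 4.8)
The plan is to assemble the theorem from three ingredients that are already in place: the gadget construction $G \mapsto G'$, Lemma~\ref{npc-lem1}, and the class-preservation facts established just above. The overall strategy is a single polynomial-time reduction from the decision version of \textsc{Min-Dom-Set} that stays inside each of the three graph classes, together with a membership argument that Lemma~\ref{npc-lem1} does not supply.

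First I would dispatch membership in \textsf{NP}. Given a graph $G$, an integer $k$, and a candidate set $D\subseteq V(G)$, one checks in polynomial time that $D$ is a dominating set (verify that every vertex lies in $N_G[D]$) and that $G[N_G(D)]$ has no isolated vertex (compute $N_G(D)$, then confirm that each of its vertices has a neighbor inside $N_G(D)$). Hence the decision version of \textsc{Min-NTDS} lies in \textsf{NP}, and since this verification makes no use of structural assumptions, it holds verbatim on every subclass, in particular on undirected path graphs, chordal bipartite graphs, and planar graphs.

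Next I would establish \textsf{NP}-hardness for each of the three classes through the same reduction. Fix $\mathcal{C}$ to be one of undirected path graphs, chordal bipartite graphs, or planar graphs. Given an instance $(G,k)$ of the decision version of \textsc{Min-Dom-Set} with $G\in\mathcal{C}$ on $n$ vertices, build $G'$ as in the construction; this adds $5n$ vertices and $5n$ edges, so the output $(G', k+2n)$ is computable in polynomial time. I would then invoke two facts: the class-membership statements proved above, namely that $G\in\mathcal{C}$ forces $G'\in\mathcal{C}$, and Lemma~\ref{npc-lem1}, which certifies that $G$ has a dominating set of cardinality at most $k$ if and only if $G'$ has a \ntd-set of cardinality at most $k+2n$. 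Together these show that $(G,k)\mapsto (G', k+2n)$ is a polynomial-time many-one reduction from \textsc{Min-Dom-Set} restricted to $\mathcal{C}$ to \textsc{Min-NTDS} restricted to $\mathcal{C}$.

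Finally I would cite that the decision version of \textsc{Min-Dom-Set} is \textsf{NP}-complete on undirected path graphs \cite{booth-ds}, planar graphs \cite{garey}, and chordal bipartite graphs \cite{muller}; combined with the reduction and the \textsf{NP}-membership argument, this yields \textsf{NP}-completeness of the decision version of \textsc{Min-NTDS} on all three classes. In a fresh proof the genuinely delicate step would be the class-preservation for undirected path graphs, since one must exhibit an explicit tree-and-path representation of $G'$; but that has already been carried out in the paragraph preceding the theorem by augmenting the tree $T$ with the paths $v^*d_va_vb_vc_vy_v$ and $b_vz_vx_v$, so no further obstacle remains and the theorem follows by direct assembly.
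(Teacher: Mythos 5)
Your proposal is correct and follows essentially the same route as the paper: the same gadget reduction $G\mapsto G'$, Lemma~\ref{npc-lem1} for the equivalence, the same class-preservation arguments (including the tree-and-path representation for undirected path graphs), and the same three hardness citations \cite{booth-ds,garey,muller}. The only difference is that you make the routine \textsf{NP}-membership verification explicit, which the paper leaves implicit.
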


\section{Algorithm for computing a minimum \ntd-set in proper interval graphs} \label{sec_pig}

In this section, we present an algorithm to compute a minimum \ntd-set in a given proper interval graph $G$. The algorithm runs in $O(|V(G)|+|E(G)|)$ time. We first present some lemmas that will help in designing our algorithm for computing a minimum \ntd-set in a given connected proper interval graph. 

\begin{lemma}\label{obspig1}
Let $D$ be a dominating set of a connected proper interval graph $G$. Then for any $v\in V(G)$, if $v$ is contained in a clique of size at least $3$, then $v$ is not an isolated vertex in $G[N_G(D)]$.
\end{lemma}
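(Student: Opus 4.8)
The plan is to reduce everything to a single triangle sitting inside the given clique. Since $v$ lies in a clique of size at least $3$, I can choose two distinct vertices $a,b$ so that $\{v,a,b\}$ is a clique; thus $v\sim a$, $v\sim b$, and $a\sim b$. Recalling that "$v$ is not isolated in $G[N_G(D)]$" means precisely that $v\in N_G(D)$ and $v$ has a neighbor lying in $N_G(D)$, the whole task is to produce one neighbor of $v$ inside $N_G(D)$ while confirming $v\in N_G(D)$.

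First I would dispose of the case $v\in D$ immediately by Observation \ref{obsprelim3}, which already guarantees $v$ is not isolated in $G[N_G(D)]$. The substantive case is therefore $v\notin D$. Here, since $D$ dominates $v$, the vertex $v$ has a neighbor in $D$, so $v\in N_G(D)$, and it remains only to place one of $a,b$ into $N_G(D)$.

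For this I would argue by contradiction: suppose neither $a$ nor $b$ belongs to $N_G(D)$. Since $D$ dominates both $a$ and $b$ while, by assumption, neither has a neighbor in $D$, both $a$ and $b$ must themselves lie in $D$. But $a\sim b$, so $a$ has the neighbor $b\in D$, forcing $a\in N_G(D)$ — a contradiction. Hence at least one of $a,b$, say $a$, lies in $N_G(D)$; as $a$ is adjacent to $v$ and both $v,a\in N_G(D)$, the vertex $v$ is not isolated in $G[N_G(D)]$.

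The only genuine subtlety — and the point I would be careful about — is that membership in $D$ does not by itself imply membership in $N_G(D)$: a vertex of $D$ lies in $N_G(D)$ exactly when one of its neighbors is also in $D$. Keeping this distinction straight is what drives both the appeal to Observation \ref{obsprelim3} in the case $v\in D$ and the contradiction step above, where the clique edge $a\sim b$ is used precisely to promote a dominating vertex into $N_G(D)$. I note in passing that the argument uses only the existence of the triangle through $v$, so the proper interval structure and connectivity play no role here beyond supplying the clique.
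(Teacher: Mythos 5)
Your proof is correct and takes essentially the same approach as the paper: both dispose of the case $v\in D$ via Observation \ref{obsprelim3}, and for $v\notin D$ both use the triangle edge through $v$ to place a neighbor of $v$ in $N_G(D)$. Your argument by contradiction is merely the contrapositive of the paper's direct case split on whether both clique neighbors $x,y$ lie in $D$, so the two proofs are the same in substance.
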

\begin{proof}
Let $v$ be a vertex of $G$ that is contained in a clique of size at least $3$. This implies that there exist two vertices $x,y\in N_G(v)$ such that $xy\in E(G)$. If $v\in D$, then by Observation \ref{obsprelim3}, $v$ is not an isolated vertex in $G[N_G(D)]$ and hence we are done. So assume that $v\notin D$. If $x,y\in D$, then $v$ is not an isolated vertex in $G[N_G(D)]$. So assume that $x\notin D$ or $y\notin D$. Then either $x\in N_G(D)$ or $y\in N_G(D)$. So $v$ is not an isolated vertex in $G[N_G(D)]$.
\end{proof}

Let $G$ be a connected proper interval graph with a BCO $\sigma=(v_1,v_2,\ldots,v_n)$. For each $v_i, 1\leq i\leq n$, let $\ell(v_i)=\max\{\{i\}\cup\{k:v_iv_k\in E(G)$ and $k>i\}\}$.

\begin{obs}\label{obspig1-2}
Let $\sigma=(v_1,v_2,\ldots,v_n)$ be a BCO of a connected proper interval graph $G$. For every $i<j$, $\ell(v_i)\leq \ell(v_j)$.
\end{obs}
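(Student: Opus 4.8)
The plan is to read $\ell(v_i)$ as the index of the \emph{rightmost} neighbour of $v_i$ in the ordering $\sigma$ (with the convention $\ell(v_i)=i$ when $v_i$ has no neighbour to its right), and to argue that this right endpoint can never decrease as we move right along $\sigma$. Fix $i<j$ and write $m=\ell(v_i)$; the goal is $m\le \ell(v_j)$. Since $j$ always lies in the set defining $\ell(v_j)$, we always have $\ell(v_j)\ge j$, so there is nothing to prove unless $m>j$.

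First I would dispose of the trivial cases. If $m=i$, then $m=i<j\le \ell(v_j)$; and if $i<m\le j$, then $m\le j\le \ell(v_j)$. In both situations we are done. This leaves the single interesting case $m>j$, in which necessarily $m>i$, so by the very definition of $\ell$ the edge $v_iv_m\in E(G)$ must be present.

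The crux is then a one-line application of Observation \ref{obsprelim2}(a). With $i<j<m$ and $v_iv_m\in E(G)$, that observation gives that $\{v_i,v_{i+1},\ldots,v_m\}$ is a clique of $G$; in particular $v_jv_m\in E(G)$. Since $m>j$, this edge witnesses $m\in\{k:v_jv_k\in E(G),\,k>j\}$, whence $\ell(v_j)\ge m=\ell(v_i)$, as required.

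I do not expect any genuine obstacle here: once one recognises that $\ell(v_i)$ is a right-endpoint and that a stretch of vertices spanned by an edge induces a clique (Observation \ref{obsprelim2}(a)), monotonicity is forced. The only points that need care are the bookkeeping of the degenerate case $\ell(v_i)=i$ and the remark that $\ell(v_i)>i$ guarantees an actual edge $v_iv_m$, both of which are immediate from the definition of $\ell$.
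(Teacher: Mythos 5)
Your proof is correct, and it matches the intended justification: the paper states Observation \ref{obspig1-2} without proof, treating it as an immediate consequence of the clique property in Observation \ref{obsprelim2}(a), which is exactly the one-line argument you give (if $m=\ell(v_i)>j$ then $v_iv_m\in E(G)$ forces $\{v_i,\ldots,v_m\}$ to be a clique, so $v_jv_m\in E(G)$ and $\ell(v_j)\geq m$). Your handling of the degenerate cases $\ell(v_i)\leq j$ via $\ell(v_j)\geq j$ is also the right bookkeeping, so there is nothing to add.
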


\begin{lemma}\label{pig2}
Let $G$ be a connected proper interval graph and $\sigma=(v_1,v_2,\ldots,v_n)$ be a BCO of $G$. If $d_G(v_1)\geq 2$ and $\ell(v_1)=j$, then there is a minimum \ntd-set of $G$ containing $v_j$.
\end{lemma}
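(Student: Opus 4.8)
The plan is to start from an arbitrary minimum \ntd-set $D$ of $G$ and to transform it, without increasing its cardinality and without destroying the \ntd-property, into one that contains $v_j$. First I would record the local structure around $v_1$: since $\ell(v_1)=j$ and $d_G(v_1)\geq 2$, Observation \ref{obsprelim2}(a) gives that $N_G[v_1]=\{v_1,v_2,\ldots,v_j\}$ is a clique of size $j\geq 3$. Two consequences drive the whole argument. Applying Observation \ref{obsprelim2}(b) to the indices $1,p,j$ (together with $N_G[v_1]\subseteq N_G[v_j]$, which is immediate from the clique) yields $N_G[v_p]\subseteq N_G[v_j]$ for every $p\leq j$, and from the same inclusion $N_G(v_p)\subseteq N_G(v_j)\cup\{v_j\}$. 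The first says $v_j$ dominates everything any vertex of $N_G[v_1]$ dominates; the second says $v_j$ sees in its open neighbourhood everything $v_p$ sees, except possibly $v_j$ itself.

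Now I would perform the exchange. If $v_j\in D$ we are done. Otherwise $D$ must dominate $v_1$, so $D\cap\{v_1,\ldots,v_j\}\neq\emptyset$; let $v_p$ be the member of this intersection of largest index (so $p<j$) and set $D'=(D\setminus\{v_p\})\cup\{v_j\}$, which has $|D'|\leq|D|$. Domination of $D'$ is immediate from $N_G[v_p]\subseteq N_G[v_j]$. For the \ntd-property I would check that every vertex of $N_G(D')$ has a neighbour inside $N_G(D')$. Writing $A=N_G(D\setminus\{v_p\})$, we have $N_G(D')=A\cup N_G(v_j)$ and $N_G(D)=A\cup N_G(v_p)$, and $N_G(v_p)\subseteq N_G(v_j)\cup\{v_j\}$ shows that the only vertex that can be lost, i.e.\ lie in $N_G(D)\setminus N_G(D')$, is $v_j$. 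Every vertex of $N_G(D')$ lying in the clique $\{v_1,\ldots,v_j\}$ (in particular $v_j$ itself) is automatically non-isolated by Lemma \ref{obspig1}. For a vertex $v_m\in N_G(D')$ with $m>j$ there are two possibilities: if $v_m\not\sim v_j$ then $v_m\in A\subseteq N_G(D)$, so its guaranteed $N_G(D)$-neighbour is not $v_j$ and hence survives in $N_G(D')$; and if $v_m\sim v_j$ with $m\geq j+2$, then $v_{j+1}$ is adjacent to both $v_j$ and $v_m$ (the clique $\{v_j,\ldots,v_m\}$), so $v_{j+1}\in N_G(D')$ is a neighbour of $v_m$ there. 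Thus the verification reduces to the single boundary vertex $v_{j+1}$.

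The main obstacle is exactly this vertex $u=v_{j+1}$ in the case $u\sim v_j$, $u\in N_G(D')$, and the only neighbour of $u$ in $N_G(D)$ was $v_j$: the blind exchange can then isolate $u$ (a concrete proper interval graph on five vertices already exhibits this), so the argument must be refined here. I would first dispose of the easy subcases: if $u\in D$ then $u\in D'$ and Observation \ref{obsprelim3} finishes it; if $u\notin D$ but $u$ has an $N_G(D)$-neighbour other than $v_j$, that neighbour survives. In the remaining subcase $u\notin D$, and since $D$ dominates $u$, it has a private dominator $d\in D\cap N_G(u)$. A short analysis using the clique structure shows that $d$ is either $v_p$ (and then $v_p\in N_G(D')$ via $v_j$ rescues $u$) or a forward neighbour of $u$; and since any two forward neighbours of $u$ would be adjacent, hence each other's $D$-neighbour, there is a unique forward neighbour $v_{j+2}\in D$ with $\ell(v_{j+1})=j+2$ and $\ell(v_j)=j+1$.

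The resolution I expect is a secondary exchange $\tilde{D}=(D\setminus\{v_p,v_{j+2}\})\cup\{v_j,v_{j+1}\}$, which keeps the cardinality, makes $v_{j+1}$ a member (hence non-isolated by Observation \ref{obsprelim3}), and lets $v_{j+1}$ take over the domination role of $v_{j+2}$ on its left end. The delicate point — and where I expect to spend the real effort — is showing that $\tilde{D}$ still dominates the forward neighbours of $v_{j+2}$ and remains a \ntd-set; here the monotonicity of $\ell$ (Observation \ref{obspig1-2}) controls how far $v_{j+1}$ reaches, and the fact that $D$ was already a \ntd-set forces the structure to the right of $v_{j+2}$ (for instance through the pendant condition of Observation \ref{obsprelim0}) to be tame enough for the swap to go through. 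Carefully discharging this last configuration is the crux of the proof; everything else is bookkeeping built on the clique $\{v_1,\ldots,v_j\}$ and the inclusion $N_G[v_p]\subseteq N_G[v_j]$.
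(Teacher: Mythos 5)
Your setup is sound and matches the paper's opening moves: the clique $\{v_1,\ldots,v_j\}$, the inclusion $N_G[v_p]\subseteq N_G[v_j]$, the swap $D'=(D\setminus\{v_p\})\cup\{v_j\}$, and the correct localization of the danger to the single boundary vertex $v_{j+1}$ when $N_{G_j}(v_j)=\{v_{j+1}\}$ (this is the paper's Claim \ref{pig2claim1.1}). Your structural deductions in the bad case ($v_{j+1}\notin D$, unique forward neighbour $v_{j+2}\in D$, $\ell(v_j)=j+1$, $\ell(v_{j+1})=j+2$, and necessarily $N_{G_{j+2}}(v_{j+2})\cap D=\emptyset$) are also correct. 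But the resolution you propose is wrong, not merely ``delicate'': the secondary exchange $\tilde{D}=(D\setminus\{v_p,v_{j+2}\})\cup\{v_j,v_{j+1}\}$ need not be a dominating set, and no amount of effort at your flagged ``crux'' can fix this, because the isolation of $v_{j+1}$ in $G[N_G(D')]$ \emph{forces} $N_{G_{j+2}}(v_{j+2})\cap D=\emptyset$, so the forward neighbours of $v_{j+2}$ may be dominated by $v_{j+2}$ alone, and $v_{j+1}$ cannot reach them since $\ell(v_{j+1})=j+2$. Concretely: take the triangle $\{v_1,v_2,v_3\}$, edges $v_3v_4$, $v_4v_5$, and the clique $\{v_5,v_6,v_7\}$, with BCO $(v_1,\ldots,v_7)$ and $j=\ell(v_1)=3$. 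Then $D=\{v_2,v_5\}$ is a minimum \ntd-set that triggers exactly your bad case ($v_4$ becomes isolated after the first swap), and your $\tilde{D}=\{v_3,v_4\}$ leaves $v_6$ and $v_7$ undominated. The paper's repair in this situation (Claim \ref{pig2claim1.3}) pushes the dominator \emph{forward}, to $(D\setminus\{v_p,v_{j+2}\})\cup\{v_j,v_{j+3}\}=\{v_3,v_6\}$, and in general the choice between pulling back to $v_{j+1}$ and pushing on to $v_{j+3}$ depends on whether $N_{G_{j+3}}(v_{j+3})\cap D'$ is empty (Claim \ref{pig2claim1.4}).

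There is a second, larger omission: your analysis stops after one repair step, but the induced path hanging forward from $v_j$ can be arbitrarily long, and a single two-vertex exchange cannot in general restore the \ntd-property along it — fixing $v_{j+1}$ can create new trouble at $v_{j+4}$, and so on. The paper handles this by first canonicalizing the dominator pattern along the path ($v_j,v_{j+2},v_{j+5},\ldots\in D'$) and then performing \emph{simultaneous} exchanges of whole arithmetic-progression sets ($S_1\to S_2$, $S_4$, $S_6$, etc.), with a case analysis on the path length modulo $3$ (Claims \ref{pig2claim1.5}--\ref{pig2claim1.7}) in which nonemptiness of the intersection sets $S_3,S_5,S_7,S_8$ is ruled out by contradicting the minimality of $D$. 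Your proposal contains no mechanism for this cascade; together with the counterexample above, this means the proof is incomplete precisely where the real work of the lemma lies.
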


\begin{proof}
Let $D$ be a minimum \ntd-set of $G$ and $k$ be the minimum index with respect to $\sigma$ such that $v_k\in D$ and $v_k$ dominates $v_1$. Since $\ell(v_1)=j$, $k\leq j$. If $k=j$, then we are done. So assume that $k<j$. Let $D'=(D\setminus\{v_k\})\cup \{v_j\}$. We now prove that $D'$ is also a \ntd-set of $G$. Since $\sigma$ is a BCO of $G$, $\{v_1,v_2,\ldots,v_j\}$ is a clique by Observation \ref{obsprelim2}(a) and hence $v_kv_j\in E(G)$. Since $v_k$ is a simplicial vertex in $G_k=G[\{v_k,v_{k+1},\ldots,v_n\}]$ and $v_j\in N_{G_k}(v_k)$, by Observation \ref{obsprelim5}(b), $N_{G_k}[v_k]\subseteq N_{G_k}[v_j]$. This implies that $N_{G}[v_k]\subseteq N_{G}[v_j]$ since $\{v_1,v_2,\ldots,v_j\}$ is a clique containing $v_k$ and $v_j$. So $D'$ is a dominating set of $G$. To prove that $G[N_G(D')]$ has no isolated vertex, it is sufficient to show that no vertex of $N_G(v_j)$ is an isolated vertex in $G[N_G(D')]$. Since $|N_G(v_j)\cap \{v_1,v_2,\ldots,v_{j-1}\}|\geq 2$, by Lemma  \ref{obspig1}, each vertex of $N_G(v_j)\cap \{v_1,v_2,\ldots,v_{j-1}\}$ is not an isolated vertex in $G[N_G(D')]$. So we show that no vertex of $N_{G_j}(v_j)$ is an isolated vertex in $G[N_G(D')]$ or $D'$ can be modified to a minimum \ntd-set $D''$ of $G$ such that $v_j\in D''$. We proceed further with the following series of claims.

\begin{claim}\label{pig2claim1.1}
If $|N_{G_j}(v_j)|\neq 1$, then no vertex of $N_{G_j}(v_j)$ is an isolated vertex in $G[N_G(D')]$.
\end{claim}
\begin{proof}[Proof of Claim \ref{pig2claim1.1}:]
If $N_{G_j}(v_j)=\emptyset$, then no vertex of $N_{G_j}(v_j)$ is an isolated vertex in $G[N_G(D')]$ in this case. If $|N_{G_j}(v_j)|\geq 2$, then by Lemma \ref{obspig1}, no vertex of $N_{G_j}(v_j)$ is an isolated vertex in $G[N_G(D')]$.
\end{proof}

If  $|N_{G_j}(v_j)|\neq 1$, then by Claim \ref{pig2claim1.1}, no vertex of $N_{G_j}(v_j)$ is an isolated vertex in $G[N_G(D')]$ implying that $D'$ is a minimum \ntd-set of $G$ containing $v_j$. So assume that $|N_{G_j}(v_j)|=1$. Since $\sigma$ is a BCO of $G$, $N_{G_j}(v_j)=\{v_{j+1}\}$. Let $p$ be the minimum index with respect to $\sigma$ such that $p>j$, $G[\{v_j,v_{j+1},\ldots,v_p\}]$ is a path, and $|N_{G_p}(v_p)|\neq 1$.

\begin{claim}\label{pig2claim1.2}
If $p=j+1$, then no vertex of $N_{G_j}(v_j)$ is an isolated vertex in $G[N_G(D')]$.
\end{claim}
\begin{proof}[Proof of Claim \ref{pig2claim1.2}:]
If $p=j+1$, then either $j+1=n$ or $|N_{G_{j+1}}(v_{j+1})|\geq 2$.
If $|N_{G_{j+1}}(v_{j+1})|\geq 2$, then by Lemma \ref{obspig1}, $v_{j+1}$ is not an isolated vertex in $G[N_G(D')]$. If $N_{G_{j+1}}(v_{j+1})=\emptyset$, then $j+1=n$. If $d_G(v_n)\geq 2$, then by Lemma \ref{obspig1}, $v_n=v_{j+1}$ is not an isolated vertex in $G[N_G(D')]$. So we may assume that $N_G(v_n)=\{v_j\}$. Since $v_j\notin D$,  $v_{n}$ must be in $D$. Moreover, $v_n\in D'$ since $D'=(D\setminus\{v_k\})\cup \{v_j\}$. So by Observation \ref{obsprelim3}, $v_{j+1}=v_n$ is not an isolated vertex in $G[N_G(D')]$, completing the proof of the claim.
\end{proof}

\begin{claim}\label{pig2claim1.3}
If $p=j+2$, then either no vertex of $N_{G_j}(v_j)$ is an isolated vertex in $G[N_G(D')]$ or there is a minimum \ntd-set of $G$ containing $v_j$.
\end{claim}
\begin{proof}[Proof of Claim \ref{pig2claim1.3}:]
If $p=j+2$, then either $j+2=n$ or $|N_{G_{j+2}}(v_{j+2})|\geq 2$. Recall that $N_{G_j}(v_j)=\{v_{j+1}\}$.  If $v_{j+1}\in D'$, then no vertex of $N_{G_j}(v_j)$ is an isolated vertex in $G[N_G(D')]$ and hence we are done. So assume that $v_{j+1}\notin D'$. This implies that $v_{j+1}\notin D$ since $D'=(D\setminus\{v_k\})\cup\{v_j\}$. Since $v_j\notin D$ and $D$ is a dominating set of $G$, $v_{j+2}$ must be in $D$ to dominate $v_{j+1}$. Thus $v_{j+2}\in D'$.

If $j+2=n$, then $(D\setminus\{v_k,v_{j+2}\})\cup \{v_j,v_{j+1}\}$ is a minimum \ntd-set of $G$ containing $v_j$. If $|N_{G_{j+2}}(v_{j+2})|\geq 2$, then let $D''=(D\setminus\{v_k,v_{j+2}\})\cup \{v_j,v_{j+3}\}$. Notice that $D''$ is a dominating set of $G$. If $N_{G_{j+2}}(v_{j+2})\cap D\neq \emptyset$, then $(D\setminus\{v_k,v_{j+2}\})\cup \{v_j\}$ is a smaller \ntd-set of $G$ than $D$. This is a contradiction. So $N_{G_{j+2}}(v_{j+2})\cap D=\emptyset$. This implies that $N_{G_{j+2}}(v_{j+2})\cap D'=\emptyset$. Since $\sigma$ is a BCO of $G$, by Observation \ref{obsprelim2}(b), $N_G[v_{j+3}]\cap \{v_1,v_2,\ldots,v_{j+3}\}\subseteq N_G[v_{j+2}]$ and $N_{G}[v_{j+3}]\cap \{v_{j+3},v_{j+4},\ldots,v_n\}\subseteq N_G[v_{j+4}]$. Since $v_{j+2},v_{j+4}\notin D''$, no vertex of $N_G(v_{j+3})$ is an isolated vertex in $G[N_G(D'')]$. Since $D''=(D'\setminus\{v_{j+2}\})\cup\{v_{j+3}\}$, where $D'=(D\setminus\{v_k\})\cup \{v_j\}$, no vertex of $N_{G_j}(v_j)$ is an isolated vertex in $G[N_G(D'')]$. So $D''$ is a minimum \ntd-set of $G$ containing $v_j$.
\end{proof}

\begin{claim}\label{pig2claim1.4}
If $p=j+3$, then either no vertex of $N_{G_j}(v_j)$ is an isolated vertex in $G[N_G(D')]$ or there is a minimum \ntd-set of $G$ containing $v_j$.
\end{claim}
\begin{proof}[Proof of Claim \ref{pig2claim1.4}:]
By the choice of $p$, $G[\{v_j,v_{j+1},v_{j+2},v_{j+3}\}]$ is a path. Moreover, notice that $v_{j+1}\notin D$ and $v_{j+2}\in D$. Since $D'=(D\setminus\{v_k\})\cup\{v_j\}$, we have $v_{j+1}\notin D'$ and $v_{j+2}\in D'$.

If $j+3=n$, then $v_{j+3}\in D'$ which implies that $v_{j+2},v_{j+3}\in N_G(D')$. So $v_{j+1}$ is not an isolated vertex in $G[N_G(D')]$. If $N_{G_{j+3}}(v_{j+3})\geq 2$, then $v_{j+4},v_{j+5}\in N_{G_{j+3}}(v_{j+3})$. If $v_{j+3}\in D'$, then $D'$ is a minimum \ntd-set of $G$ containing $v_j$. So assume that  $v_{j+3}\notin D'$. If $N_{G_{j+3}}(v_{j+3})\cap D'\neq \emptyset$, then $(D'\setminus\{v_{j+2}\})\cup \{v_{j+1}\}$ is a minimum \ntd-set of $G$ containing $v_j$. If $N_{G_{j+3}}(v_{j+3})\cap D'= \emptyset$, then $(D'\setminus\{v_{j+2}\})\cup \{v_{j+3}\}$ is a minimum \ntd-set of $G$ containing $v_j$.
\end{proof}

\begin{claim}\label{pig2claim1.5}
If $G[\{v_j,v_{j+1},\ldots,v_p\}]$ is a path of length $3r$ for some $r\geq 1$, then either no vertex of $N_{G_j}(v_j)$ is an isolated vertex in $G[N_G(D')]$ or there is a minimum \ntd-set of $G$ containing $v_j$..
\end{claim}
\begin{proof}[Proof of Claim \ref{pig2claim1.5}:]
In this case, $p=j+3r$. If $r=1$, then $G[\{v_j,v_{j+1},v_{j+2},v_{j+3}\}]$ is a path and by Claim \ref{pig2claim1.4}, either no vertex of $N_{G_j}(v_j)$ is an isolated vertex in $G[N_G(D')]$ or there is a minimum \ntd-set of $G$ containing $v_j$. So assume that $r\geq 2$. Then without loss of generality, we can assume (similar to the way we have proceeded in Claim \ref{pig2claim1.4}) that $v_j,v_{j+2},v_{j+5},\ldots,v_{j+3r-1}\in D'$. If $p=j+3r=n$, then $v_{j+3r}\in D'$ or $v_{j+3r-2}\in D'$ since $v_{j+3r}$ is a pendant vertex of $G$. Without loss of generality, assume that $v_{j+3r}\in D'$. Let $S_1=\{v_{j+2},v_{j+5},\ldots,v_{j+3r-1}\}$, $S_2=\{v_{j+3},v_{j+6},\ldots,v_{j+3r-3}\}$, and $S_3=D'\cap S_2$. Notice that $|S_2|=|S_1|-1$. Now let $D''=(D'\setminus S_1)\cup (S_2\setminus S_3)$. Then $D''$ is a smaller \ntd-set of $G$ which is a contradiction since $|D''|=|D'|-|S_1|+|S_2|-|S_3|=|D'|-(|S_1|-|S_2|+|S_3|)<|D'|=|D|$. So $j+3r\neq n$. Then $|N_{G_{j+3r}}(v_{j+3r})|\geq 2$. Moreover, $v_{j+3r}\notin D'$; otherwise $(D'\setminus\{v_{j+2},v_{j+5},\ldots,v_{j+3r-1}\})\cup \{v_{j+3},v_{j+6},\ldots,v_{j+3r-3}\}$ is a smaller \ntd-set of $G$. If $N_{G_{j+3r}}(v_{j+3r})\cap D'\neq \emptyset$, then let $S_4=\{v_{j+1},v_{j+4},\ldots,v_{j+3r-2}\}$ and $S_5=D'\cap S_4$. Notice that $|S_4|=|S_1|$.  Now if $S_5\neq \emptyset$, then $D_1=(D'\setminus S_1)\cup (S_4\setminus S_5)$ is a smaller \ntd-set of $G$ than $D$ since $|D_1|=|D'|-|S_1|+|S_4|-|S_5|=|D'|-|S_5|<|D'|=|D|$. This is a contradiction. So $S_5= \emptyset$. Now $D'''=(D'\setminus S_1)\cup S_4$ is a  minimum \ntd-set of $G$ containing $v_j$ since $v_{j+1}\in D'''$. If $N_{G_{j+3r}}(v_{j+3r})\cap D'=\emptyset$, then let $S_6=\{v_{j+3},v_{j+6},\ldots,v_{j+3r}\}$ and $S_7=D'\cap S_6$. Notice that $|S_6|=|S_1|$. If $S_7\neq \emptyset$, then $(D'\setminus S_1)\cup (S_6\setminus S_7)$ is a smaller \ntd-set of $G$ than $D$ which is a contradiction. So $S_7=\emptyset$. Now $(D'\setminus S_1)\cup S_6$ is a minimum \ntd-set of $G$ containing $v_j$. This completes the proof of the claim. 
\end{proof}
\begin{claim}\label{pig2claim1.6}
If $G[\{v_j,v_{j+1},\ldots,v_p\}]$ is a path of length $3r+1$ for some $r\geq 0$, then either no vertex of $N_{G_j}(v_j)$ is an isolated vertex in $G[N_G(D')]$ or there is a minimum \ntd-set of $G$ containing $v_j$.
\end{claim}
\begin{proof}[Proof of Claim \ref{pig2claim1.6}:]
If $r=0$, then $p=j+1$. So by Claim \ref{pig2claim1.2}, no vertex of $N_{G_j}(v_j)$ is an isolated vertex in $G[N_G(D')]$ and hence we are done. So assume that $r\geq 1$. Clearly $p=j+3r+1$.  Without loss of generality, we can assume that $v_j,v_{j+2},v_{j+5},\ldots,v_{j+3r-1}\in D'$. Let $S_1=\{v_{j+2},v_{j+5},\ldots,v_{j+3r-1}\}$, $S_2=\{v_{j+3},v_{j+6},\ldots,v_{j+3r-3}\}$, and $S_3=D'\cap S_2$.

If $p=j+3r+1=n$, then, since $v_{j+3r+1}$  is a pendant vertex of $G$, $v_{j+3r}$ must be in $D$ as $v_{j+3r-1}\in D$ (since $D'=(D\setminus\{v_k\})\cup \{v_j\}$). 
Notice that $|S_2|=|S_1|-1$. If $S_3\neq \emptyset$, then let $D_1=(D'\setminus S_1)\cup ((S_2\setminus S_3)\cup\{v_{j+3r+1}\})$. Notice that $D_1$ is a \ntd-set of $G$ with $|D_1|=|D'|-|S_1|+|S_2|+1-|S_3|=|D'|-|S_3|<|D'|=|D|$. This is a contradiction. So $S_3=\emptyset$.  Then $(D'\setminus S_1)\cup(S_2\cup\{v_{j+3r+1}\})$ is a minimum \ntd-set of $G$ containing $v_j$. 

If $j+3r+1\neq n$, then $|N_{G_{j+3r+1}}(v_{j+3r+1})|\geq 2$. To dominate $v_{j+3r+1}$, either $v_{j+3r}\in D$ or $N_{G_{j+3r+1}}[v_{j+3r+1}]\cap D\neq \emptyset$. If $v_{j+3r}\in D$ and $N_{G_{j+3r+1}}[v_{j+3r+1}]\cap D\neq \emptyset$, then let $D_2=(D'\setminus S_1)\cup(S_2\setminus S_3)$. Notice $D_2$ is a \ntd-set of $G$ with $|D_2|<|D|$. This is a contradiction. If $N_{G_{j+3r+1}}[v_{j+3r+1}]\cap D\neq \emptyset$ and $v_{j+3r}\notin D$, then let $D_3=(D'\setminus S_1)\cup(S_2\setminus S_3)\cup\{v_{j+3r}\}$. If $S_3\neq \emptyset$, then $D_3$ is a \ntd-set of $G$ with $|D_3|<|D|$. This is again a contradiction to the minimality of $D$. So $S_3=\emptyset$ and hence $D_3$ is a minimum \ntd-set of $G$ containing $v_j$. If $N_{G_{j+3r+1}}[v_{j+3r+1}]\cap D=\emptyset$, then $v_{j+3r}\in D$. Now let $D_4=(D'\setminus S_1)\cup (S_2\setminus S_3)\cup \{v_{j+3r+1}\}$. If $S_3\neq \emptyset$, then $D_4$ is a \ntd-set of $G$ with $|D_4|<|D|$. This is again a contradiction to the minimality of $D$. So $S_3=\emptyset$ and hence $D_4$ is a minimum \ntd-set of $G$ containing $v_j$. This completes the proof of the claim. 
\end{proof}

\begin{claim}\label{pig2claim1.7}
If $G[\{v_j,v_{j+1},\ldots,v_p\}]$ is a path of length $3r+2$ for some $r\geq 0$, then either no vertex of $N_{G_j}(v_j)$ is an isolated vertex in $G[N_G(D')]$ or there is a minimum \ntd-set of $G$ containing $v_j$.
\end{claim}
\begin{proof}[Proof of Claim \ref{pig2claim1.7}:]
If $r=0$, then $p=j+2$. So by Claim \ref{pig2claim1.3}, either no vertex of $N_{G_j}(v_j)$ is an isolated vertex in $G[N_G(D')]$ or there is a minimum \ntd-set of $G$ containing $v_j$ and hence we are done. So assume that $r\geq 1$. Clearly $p=j+3r+2$. Without loss of generality, we can assume that $v_j,v_{j+2},v_{j+5},\ldots,v_{j+3r-1},v_{j+3r+2}\in D'$. Let $S_1=\{v_{j+2},v_{j+5},\ldots,v_{j+3r-1},v_{j+3r+2}\}$, $S_2=\{v_{j+3},v_{j+6},\ldots,v_{j+3r},v_{j+3r+1}\}$, and $S_3=D'\cap S_2$. Notice that $|S_1|=|S_2|$.

If $p=j+3r+2=n$, then let $D_1=(D'\setminus S_1)\cup (S_2\setminus S_3)$. If $S_3\neq \emptyset$, then $D_1$ is a \ntd-set of $G$ with $|D_1|<|D|$. This is again a contradiction to the minimality of $D$. So $S_3=\emptyset$ and hence $D_1$ is a minimum \ntd-set of $G$ containing $v_j$.

If $j+3r+2\neq n$, then $|N_{G_{j+3r+2}}(v_{j+3r+2})|\geq 2$. If $N_{G_{j+3r+2}}[v_{j+3r+2}]\cap D'=N_{G_{j+3r+2}}[v_{j+3r+2}]\cap D=\{v_{j+3r+2}\}$ i.e., $N_{G_{j+3r+2}}(v_{j+3r+2})\cap D=\emptyset$, then let $S_4=\{v_{j+3},v_{j+6},\ldots,v_{j+3r},v_{j+3r+3}\}$ and $S_5=D'\cap S_4$. Notice that $|S_4|=|S_1|$. Now let $D_2=(D'\setminus S_1)\cup (S_4\setminus S_5)$. If $S_5\neq \emptyset$, then $D_2$ is a \ntd-set of $G$ with $|D_2|<|D|$. This is again a contradiction to the minimality of $D$. So $S_5=\emptyset$ and hence $D_2$ is a minimum \ntd-set of $G$ containing $v_j$. If $N_{G_{j+3r+2}}[v_{j+3r+2}]\cap D'=N_{G_{j+3r+2}}[v_{j+3r+2}]\cap D\neq \{v_{j+3r+2}\}$, then let $S_6=\{v_{j+2},v_{j+5},\ldots,v_{j+3r-1}\}$, $S_7=\{v_{j+3},v_{j+6},\ldots,v_{j+3r}\}$, and $S_8=D'\cap S_7$. Notice that $|S_6|=|S_7|$. Let $D_3=(D'\setminus S_6)\cup (S_7\setminus S_8)$. Recall that $v_{j+3r+2}\in D'$. If $S_8\neq \emptyset$, then $D_3$ is a \ntd-set of $G$ with $|D_3|<|D|$. This is again a contradiction to the minimality of $D$. So $S_8=\emptyset$ and hence $D_3$ is a minimum \ntd-set of $G$ containing $v_j$. This completes the proof of the claim. 
\end{proof}

We now return to the proof of Lemma \ref{pig2}. If $|N_{G_j}(v_j)|\neq 1$, then by Claim \ref{pig2claim1.1}, no vertex of $N_{G_j}(v_j)$ is an isolated vertex in $G[N_G(D')]$. This implies that $D'$ is a minimum \ntd-set of $G$ containing $v_j$. If $|N_{G_j}(v_j)|= 1$, then let $p$ be the minimum index such that $p>j$, $G[\{v_j,v_{j+1},\ldots,v_p\}]$ is a path, and $|N_{G_p}(v_p)|\neq 1$. If $G[\{v_j,v_{j+1},\ldots,v_p\}]$ is a path of length $3r$ for some $r\geq 1$, then by Claim \ref{pig2claim1.5}, either no vertex of $N_{G_j}(v_j)$ is an isolated vertex in $G[N_G(D')]$ or there is a minimum \ntd-set of $G$ containing $v_j$.
If $G[\{v_j,v_{j+1},\ldots,v_p\}]$ is a path of length $3r+1$ (resp. $3r+2$) for some $r\geq 0$, then by Claim \ref{pig2claim1.6} (resp. Claim \ref{pig2claim1.7}), either no vertex of $N_{G_j}(v_j)$ is an isolated vertex in $G[N_G(D')]$ or there is a minimum \ntd-set of $G$ containing $v_j$. This completes the proof of the lemma.
\end{proof}

\begin{lemma}\label{pig1}
Let $G$ be a connected proper interval graph having at least three vertices and $\sigma=(v_1,v_2,\ldots,v_n)$ be a BCO of $G$. If $v_1$ is a pendant vertex of $G$, then the following are true.
\begin{enumerate}
\item If $\ell(v_{\ell(v_2)})=\ell(v_{\ell(v_3)})$, then there is a minimum \ntd-set of $G$ containing $\{v_2,v_{\ell(v_2)}\}$.
\item If $\ell(v_{\ell(v_2)})\neq \ell(v_{\ell(v_3)})$, then there is a minimum \ntd-set of $G$ containing $\{v_1,v_{\ell(v_3)}\}$.
\end{enumerate}
\end{lemma}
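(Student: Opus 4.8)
The plan is to start from an arbitrary minimum \ntd-set $D$ of $G$ and transform it, through a sequence of exchanges that neither increase $|D|$ nor destroy the \ntd-property, into a minimum \ntd-set containing the prescribed pair of vertices; this follows the template already used in the proof of Lemma~\ref{pig2}. First I fix notation. Since $v_1$ is pendant and $\sigma$ is a BCO of the connected graph $G$, we have $N_G(v_1)=\{v_2\}$ and $\ell(v_1)=2$, while $v_2v_3\in E(G)$ (consecutive vertices of the Hamiltonian path induced by $\sigma$) forces $a:=\ell(v_2)\ge 3$. Put $b:=\ell(v_3)$. By Observation~\ref{obspig1-2} we get $a\le b$ and hence $\ell(v_a)\le\ell(v_b)$, and by Observation~\ref{obsprelim2}(a) the set $C:=\{v_2,\dots,v_a\}$ is a clique whose farthest-reaching vertex is $v_a$ (again by Observation~\ref{obspig1-2}). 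Thus the hypotheses of (a) and (b) form the dichotomy: routing domination out of $C$ through the clique vertex $v_a=v_{\ell(v_2)}$ reaches exactly as far as routing it through $v_b=v_{\ell(v_3)}$ (case (a)), or strictly less far (case (b)).

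For part (a) I would install both $v_2$ and $v_a$ and verify the \ntd-property jointly at the end. Because $N_G[v_1]=\{v_1,v_2\}$, any dominating set of $G$ meets $\{v_1,v_2\}$, and since $N_G[v_1]\subseteq N_G[v_2]$ I may replace a choice of $v_1$ by $v_2$ and assume $v_2\in D$. Next, exactly as in Lemma~\ref{pig2}, I move the vertex of $D$ dominating out of $C$ to $v_a$: the clique structure of $C$ together with the containments $N_G[v_k]\subseteq N_G[v_a]$ supplied by Observations~\ref{obsprelim2}(b) and~\ref{obsprelim5}(b) keeps $D$ dominating, and the hypothesis $\ell(v_a)=\ell(v_b)$ guarantees that no forward reach is lost by this substitution. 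Finally, once $v_2,v_a\in D$, the edge $v_2v_a\in E(G)$ places both in $N_G(D)$ as mutual neighbours, so neither is isolated in $G[N_G(D)]$ (Observation~\ref{obsprelim3}), and by Lemma~\ref{obspig1} every other vertex of the clique $C$ is non-isolated as well.

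For part (b), where $\ell(v_a)<\ell(v_b)$ forces $a<b$ and hence $v_2v_b\notin E(G)$, I would first install $v_b$ as the reach vertex and only then route the domination of $v_1$ through $v_1$ itself. Installing $v_b$ proceeds as in part (a) via Observations~\ref{obsprelim2}(b) and~\ref{obsprelim5}(b); since $\{v_3,\dots,v_b\}$ is a clique (Observation~\ref{obsprelim2}(a)), $v_b$ then dominates all of $v_3,\dots,v_{\ell(v_b)}$, in particular the whole of $\{v_3,\dots,v_a\}$. With those vertices already covered by $v_b$, the vertex dominating $v_1$ is needed only to cover $\{v_1,v_2\}$, so it may be taken to be $v_1$, and the swap $v_2\mapsto v_1$ no longer loses domination. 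The delicate point is the \ntd-property at $v_2$: although $v_2v_b\notin E(G)$, we have $v_3\in N_G(v_b)\subseteq N_G(D)$ and $v_2v_3\in E(G)$, so $v_2$ keeps the neighbour $v_3$ inside $N_G(D)$ and is not isolated in $G[N_G(D)]$.

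The step I expect to be the main obstacle, in both parts, is re-establishing the \ntd-condition on the part of $G$ lying beyond $v_a$ (respectively $v_b$). When the vertices following the first clique induce a long path, forcing $v_a$ (resp.\ $v_b$) into $D$ can leave an isolated vertex of $G[N_G(D)]$ somewhere down that path, and removing it requires sliding the selected vertices one step along the path depending on the path length modulo $3$, all while preserving $|D|$ and domination. This is precisely the case analysis (path length $\equiv 0,1,2 \pmod 3$) carried out in the Claims inside the proof of Lemma~\ref{pig2}; I expect reproducing it here, now anchored at $v_a$ or $v_b$ instead of at $v_{\ell(v_1)}$, to be the most technical and lengthy part of the argument.
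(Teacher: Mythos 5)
Your proposal is correct and follows essentially the same route as the paper's proof: the identical vertex-exchange template built on Observations~\ref{obsprelim2}, \ref{obsprelim5} and~\ref{obspig1-2}, with the hypothesis $\ell(v_{\ell(v_2)})=\ell(v_{\ell(v_3)})$ used exactly where the paper uses it (to get $N_G[v_r]\subseteq N_G[v_{\ell(v_2)}]$ for the double swap in part (a)), and part (b) reduced---after exchanging $v_2$ for $v_1$---to installing $v_{\ell(v_3)}$ and then repairing the tail via the path-length mod~$3$ case analysis, which the paper likewise discharges by invoking the claims inside Lemma~\ref{pig2}. One small refinement: that tail analysis is needed only in part (b); in part (a) the adjacency $v_2v_{\ell(v_2)}\in E(G)$ makes the joint final verification you describe already complete, as your own second paragraph in fact shows.
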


\begin{proof}
Let $D$ be a minimum \ntd-set of $G$ and $k$ be the minimum index with respect to $\sigma$ such that $v_k\in D$ and $v_k$ dominates $v_1$.

(a) Since $\sigma$ is a BCO of $G$ and $v_1$ is a pendant vertex of $G$, $N_G[v_1]=\{v_1,v_2\}$. Depending on whether $v_2\in D$ or $v_2\notin D$, we consider the following two cases and in each case either we show that $v_2,v_{\ell(v_2)}\in D$ or construct a minimum \ntd-set $D'$ of $G$ from $D$ such that $v_2,v_{\ell(v_2)}\in D'$.

\noindent\textbf{Case 1:} $v_2\in D$.

If $v_{\ell(v_2)}\in D$, then we are done. So assume that $v_{\ell(v_2)}\notin D$.
If $k=1$, then let $D'=(D\setminus\{v_1\})\cup\{v_{\ell(v_2)}\}$. Clearly $D'$ is a minimum \ntd-set of $G$ since $D$ is a minimum \ntd-set of $G$. If $k\neq 1$, then $k=2$ since $v_2\in D$. Since $v_1$ is a pendant vertex of $G$ and $v_1\notin D$, by Observation \ref{obsprelim0}, $|N_G[v_2]\cap D|\geq 2$. So there must be a vertex $v_r$ such that $r\neq 1$ and $v_r\in N_G(v_2)\cap D$. Moreover, since $v_{\ell(v_2)}\notin D$, $r\neq \ell(v_2)$. So $r<\ell(v_2)$. Now let $D'=(D\setminus\{v_r\})\cup \{v_{\ell(v_2)}\}$. We now prove that $D'$ is also a \ntd-set of $G$. Since $\sigma$ is a BCO of $G$, by Observation \ref{obsprelim2}(a), $\{v_2,v_3,\ldots,v_{\ell(v_2)}\}$ is a clique and hence $v_rv_{\ell(v_2)}\in E(G)$. Moreover, $v_r$ is a simplicial vertex in $G_r=G[\{v_r,v_{r+1},\ldots,v_n\}]$. So by Observation \ref{obsprelim5}(b), $N_{G_r}[v_r]\subseteq N_{G_r}[v_{\ell(v_2)}]$. This implies that $N_{G}[v_r]\subseteq N_G[v_{\ell(v_2)}]$ since $v_1$ is not adjacent to $v_r$. So $D'$ is a dominating set of $G$. To prove $D'$ is a \ntd-set, we need to show that no vertex from $N_G[v_2]\cup N_{G}[v_{\ell(v_2)}]$ is an isolated vertex in $G[N_G(D')]$. Since $v_2,v_{\ell(v_2)}\in D'$, no vertex from$N_G[v_2]\cup N_{G}[v_{\ell(v_2)}]$ is an isolated vertex in $G[N_G(D')]$. This implies that $D'$ is also a minimum \ntd-set of $G$ since $|D'|=|D|$.

\noindent\textbf{Case 2:} $v_2\notin D$.

We note that $k=1$ since $v_1$ is a pendant vertex of $G$ and $v_2\notin D$. Moreover, since $v_3v_1\notin E(G)$, there must be a vertex in $D$ different from $v_2$ that dominates $v_3$. Among all such vertices, let $v_r$ be the vertex with the minimum index. Clearly $3\leq r\leq \ell(v_3)$. If $v_r$ is adjacent to $v_2$, then let $D'=(D\setminus\{v_1\})\cup \{v_{2}\}$. Notice that $D'$ is a minimum \ntd-set of $G$. So by Case 1, we can find a minimum \ntd-set of $G$ containing $v_2$ and $v_{\ell(v_2)}$. Now assume that $v_2$ is not adjacent to $v_r$. Let $D'=(D\setminus\{v_k,v_r\})\cup \{v_2,v_{\ell(v_2)}\}$. Since $v_2v_r\notin E(G)$ and $v_3v_r\in E(G)$, $\ell(v_2)<r\leq \ell(v_3)$. This implies that $3\leq \ell(v_2)<r\leq \ell(v_3)$. So by Observation \ref{obsprelim2}(a), $\{v_3,v_4,\ldots,v_{\ell(v_3)}\}$ is a clique and hence $\{v_3,v_4,\ldots,v_{\ell(v_3)}\}\subseteq N_G[v_r]$. Also we notice that $\{v_3,v_4,\ldots,v_{\ell(v_3)}\}\subseteq N_G[v_{\ell(v_2)}]$ and $v_rv_{\ell(v_2)}\in E(G)$. So $N_G[v_r]\subseteq N_G[v_{\ell(v_2)}]$ since $\ell(v_{\ell(v_2)})=\ell(v_{\ell(v_3)})$. This implies that $D'$ is a dominating set of $G$.  Since $v_2,v_{\ell(v_2)}\in D'$, no vertex from $N_G[v_2]\cup N_{G}[v_{\ell(v_2)}]$ is an isolated vertex in $G[N_G(D')]$. So $D'$ is also a minimum \ntd-set of $G$ since $|D'|=|D|$. This completes the proof of (a).

(b) Recall that $v_k\in D$ is the minimum indexed vertex that dominates $v_1$.

\begin{claim}\label{pig1claim1}
If $k=1$, then there is a minimum \ntd-set of $G$ containing $\{v_1,v_{\ell(v_3)}\}$.
\end{claim}
\begin{proof}[Proof of Claim \ref{pig1claim1}:]
If $v_{\ell(v_3)}\in D$, then we are done. So assume that $v_{\ell(v_3)}\notin D$. To dominate $v_3$, there must be a vertex $v_a\in N_G[v_3]$ in $D$. Now let $D'=(D\setminus\{v_a\})\cup \{v_{\ell(v_3)}\}$. Since $3\leq a< \ell(v_3)$, by Observation \ref{obsprelim2}(a), $\{v_3,v_4,\ldots,v_{\ell(v_3)}\}$ is a clique and hence $N_G[v_a]\cap \{v_3,v_4,\ldots,v_{\ell(v_3)}\}\subseteq N_G[v_{\ell(v_3)}]$. Again since $v_a$ is a simplicial vertex in $G_a=G[\{v_a,v_{a+1},\ldots,v_n\}]$, by Observation \ref{obsprelim5}(b), $N_{G_a}[v_a]\subseteq N_{G_a}[v_{\ell(v_3)}]$. So $D'$ is a dominating set of $G$.  Moreover, no vertex from the set $\{v_1,v_2,\ldots,v_{\ell(v_3)}\}$ is an isolated vertex in $G[N_G(D')]$. So if there is an isolated vertex in $G[N_G(D')]$, then it must belong to $N_G(v_{\ell(v_3)})\cap \{v_{\ell(v_3)+1}, v_{\ell(v_3)+2},\ldots, v_n\}$. Let $j=\ell(v_3)$. Now considering the cardinality of $N_{G_j}(v_j)$ similar to that we have done in Lemma \ref{pig2}, we can show that either $D'$ is a minimum \ntd-set of $G$ containing $\{v_1,v_{\ell(v_3)}\}$ or we can construct a minimum \ntd-set of $G$ from $D'$ that contains $\{v_1,v_{\ell(v_3)}\}$. This completes the proof of the claim.
\end{proof}

If $k\neq 1$, then $v_2,v_r\in D$, where $v_r$ is a neighbor of $v_2$. This is true since $v_1$ is a pendant vertex of $G$. Clearly $3\leq r\leq \ell(v_2)$.

If $v_{\ell(v_3)}\in D$, then let $D_1=(D\setminus\{v_r,v_2\})\cup \{v_1\}$. Since $\ell(v_r)\leq \ell(v_2)$ and $\ell(v_2)<\ell(v_3)$, by Observation \ref{obspig1-2}, we have $\ell(v_{\ell(v_r)})\leq \ell(v_{\ell(v_2)})\leq \ell(v_{\ell(v_3)})$. So no vertex of $\{v_2,v_3,\ldots,v_{p}\}$, where $p=\ell(v_{\ell(v_3)})$, is an isolated vertex in $G[N_G(D_1)]$. Since $v_1$ dominates $v_2$ and $v_{\ell(v_3)}$ dominates the vertex set $\{v_3, v_4,\ldots,v_{p}\}$, $D_1$ is a  \ntd-set of $G$ smaller than $D$. This is a contradiction. So $v_{\ell(v_3)}\notin D$. Let $D_2=(D\setminus\{v_2\})\cup \{v_1\}$. Since $\sigma$ is a BCO of $G$ and $3\leq r\leq \ell(v_2)$, $N_{G}[v_2]\cap \{v_2,v_3\ldots,v_n\}\subseteq N_{G}[v_r]\cap  \{v_2,v_3\ldots,v_n\}$. So $D_2$ is a dominating set of $G$. Moreover, there is no isolated vertex in $G[N_G(D_2)]$. So $D_2$ is also a minimum \ntd-set of $G$ since $|D_2|=|D|$. Again since $v_1$ is a pendant vertex of $G$ and $v_2\notin D_2$, $v_1$ is the minimum indexed vertex present in $D_2$ that dominates $v_1$. So by Claim \ref{pig1claim1}, there is a minimum \ntd-set of $G$ containing $\{v_1,v_{\ell(v_3)}\}$. This completes the proof of (b) and thus the proof of the lemma.
\end{proof}

We now describe our algorithm, namely \textsc{MNTDS-PIG($G$)} to compute a minimum \ntd-set of a given connected proper interval graph $G$. To have a better understanding of the algorithm \textsc{MNTDS-PIG($G$)}, we first explain the different cases of the algorithm and the updates that will be made. The details of the algorithm \textsc{MNTDS-PIG($G$)} are as follows. If $G$ is a proper interval graph with at most two vertices, then by Observation \ref{obsprelim1}, it is easy to construct a minimum \ntd-set of $G$. Therefore, below we discuss the details of \textsc{MNTDS-PIG($G$)} for a given connected proper interval graph with at least $3$ vertices. Let $G$ be a connected proper interval graph having vertices $v_1,v_2\ldots,v_n; n\geq 3$.

\begin{itemize}
\item \textsc{MNTDS-PIG($G$)} first computes a BCO $\sigma=(v_1,v_2,\ldots,v_n)$ of $G$ and maintains a set $D$ i.e., a set that will form a minimum \ntd-set of $G$.

\item \textsc{MNTDS-PIG($G$)} checks the status (dominated or undominated with respect to the set $D$) of every vertex $v_i$; $1\leq i\leq n$ one by one with respect to $\sigma$ and decides the appropriate vertex or vertices to be selected into $D$.

\item In the first step i.e., while considering the vertex $v_1$, if $v_1$ is a pendant vertex of $G$, then \textsc{MNTDS-PIG($G$)} selects the vertices into $D$ according to Lemma \ref{pig1}.

\begin{itemize}
\item[$\bullet$] If \textsc{MNTDS-PIG($G$)} selects the vertices into $D$ according to Lemma \ref{pig1}(a), then notice that no vertex from $\{v_1,v_2,\ldots, v_{\ell(v_2)}\}$ is an isolated vertex in $G[N_G(D)]$, where $D$ is the updated set after first iteration. So $i=1$ is updated to $i=\ell(v_{\ell(v_2)})+1$.

\item[$\bullet$] Suppose \textsc{MNTDS-PIG($G$)} selects the vertices into $D$ according to Lemma \ref{pig1}(b). If $|N_G(v_{\ell(v_3)})\cap V(G_{\ell(v_3)+1})|\geq 2$ or $v_{\ell(v_3)+1}v_{\ell(v_3)-1}\in E(G)$, then notice that no vertex from $\{v_1,v_2,\ldots, v_{v_{\ell(\ell(v_3))}}\}$ is an isolated vertex in $G[N_G(D)]$, where $D$ is the updated set after first iteration. Otherwise notice that no vertex from $\{v_1,v_2,\ldots,v_{\ell(v_3)}\}$ is an isolated vertex in $G[N_G(D)]$. So $i=1$ is updated to either $i=\ell(v_{\ell(v_3)})+1$ or $i=\ell(v_{\ell(v_3)})$.
\end{itemize}
\item If $v_1$ is not a pendant vertex of $G$, then \textsc{MNTDS-PIG($G$)} selects $v_j$, where $j=\ell(v_1)$ according to Lemma \ref{pig2}. If $|N_G(v_j)\cap V(G_{j+1})|\geq 2$ or $v_{j+1}v_{j-1}\in E(G)$, then notice that no vertex from $\{v_1,v_2,\ldots,v_{\ell(v_j)}\}$ is an isolated vertex in $G[N_G(D)]$, where $D$ is the updated set after first iteration. Otherwise notice that no vertex from $\{v_1,v_2,\ldots,v_j\}$ is an isolated vertex in $G[N_G(D)]$. So $i=1$ is updated to either $i=\ell(v_j)+1$ or $i=\ell(v_j)=j+1$.

\item Suppose \textsc{MNTDS-PIG($G$)} is considering the vertex $v_i$, where $i>1$.

\begin{itemize}
\bitem If $v_i$ is not dominated by the so far constructed set $D$, then \textsc{MNTDS-PIG($G$)} selects the vertex $v_j$, where $j=\ell(v_i)$ and $i$ is updated to either $\ell(v_j)+1$ or $\ell(v_j)$.

\bitem If $v_i$ is dominated, then \textsc{MNTDS-PIG($G$)} selects new vertices to $D$ considering two cases. First one is either $i=n$ or $i+1=n$ and the second one is $i\neq n$ and $i+1\neq n$. In the first case, since $v_i$ is selected, no vertex is an  isolated vertex in $G[N_G(D)]$. If $i\neq n$ and $i+1\neq n$, then \textsc{MNTDS-PIG($G$)} avoids to select $v_{i+1}$ so that $v_i,v_{i+1}\in V\setminus D$ and selects $v_p$, where $p=\ell(v_{i+1})$. Notice that $\ell(v_{i+1})\neq n$. Then $i$ is updated to $\ell(v_{p})+1$ if $|N_G(v_p)\cap V(G_{p+1})|\geq 2$ or $v_{p+1}v_{p-1}\in E(G)$; otherwise $i$ is updated to $\ell(v_p)=p+1$.
\end{itemize}

\end{itemize}

%%%%%%%%%%%%%%%%%%%%%%%%%%%%%%%%%%%%%%%%%%%%%%%%%%
\begin{algorithm}[H]
\SetAlgoVlined
{\relsize{-1}
\KwIn{A connected proper interval graph $G=(V ,E)$ with at least three vertices\;}
\KwOut{A minimum \ntd-set $D$ of $G$\;}
Compute a BCO $\sigma=(v_1, v_2, \ldots, v_n)$ of $G$\;
Initialize $D=\emptyset$\;

$i=1$\;
\If{$(v_1$ is a pendant vertex of $G)$}{\eIf(\tcc*[f]{Case 1}){$(\ell(v_{\ell(v_2)})=\ell(v_{\ell(v_{3})}))$}{$D=D\cup \{v_2, v_{\ell(v_2)}\}$\;$i=\ell(v_{\ell(v_2)})+1$\;}(\tcc*[f]{Case 2}){$D\cup \{v_1, v_{\ell(v_{3})}\}$\;
$
i=
\begin{cases}
\ell(v_{\ell(v_{3})})+1,& \text{if $|N_G(v_{\ell(v_{3})})\cap V(G_{\ell(v_{3})+1})|\geq 2$ or}\\
 & \text{ $v_{\ell(v_{3})+1} v_{\ell(v_{3})-1}\in E$\;}\\
\ell(v_{\ell(v_{3})}),& \text{otherwise.}
\end{cases}
$}}
\While{$(i\leq n)$}{ Let $\ell(v_i)=j$\;\eIf(\tcc*[f]{Case 3}){$(v_i$ is not dominated$)$}{$D=D\cup \{v_j\}$\;
$
i=
\begin{cases}
\ell(v_j)+1,& \text{if $|N_G(v_j)\cap V(G_{j+1})|\geq 2$ or  $v_{j-1} v_{j+1}\in E$\;}\\
\ell(v_j),& \text{otherwise.}
\end{cases}
$
}
{
 \eIf(\tcc*[f]{Case 4}){$((d_{G_i}(v_i)=0)$ or $(d_{G_i}(v_i)=1$ and $v_j=v_n) )$}{$D=D\cup \{v_i\}$\; \Return $D$\;}(\tcc*[f]{Case 5}){Let $\ell(v_{i+1})=p$\;$D=D\cup \{v_p\}$\;$
i=
\begin{cases}
\ell(v_{p})+1,& \text{if $|N_G(v_{p})\cap V(G_{p+1})|\geq 2$ or  $v_{p-1} v_{p+1}\in E$\;}\\
\ell(v_{p}),& \text{otherwise.}
\end{cases}
$}

 }
}

\Return $D$\;

\caption{\label{ntds-pig2}\textsc{MNTDS-PIG($G$)}}}
\end{algorithm}

%%%%%%%%%%%%%%%%%%%%%%%%%%%%%%%%%%%%%%%%%%%%%%%%%%%%%%%%%%%%%%%%%%%

Before presenting the proof of correctness of the algorithm \textsc{MNTDS-PIG($G$)}, we illustrate different iterations of the algorithm \textsc{MNTDS-PIG($G$)} through the graphs shown in Figure \ref{algoillu}. We have taken two proper interval graphs in Figure \ref{algoillu} to explain the two cases of Lemma \ref{pig1} since both the cases cannot occur simultaneously. Notice that in Figure \ref{algoillu}, $\sigma_1=(v_1,v_2,\ldots,v_{11})$ is a BCO of $\mathcal{H}_1$ and $\sigma_2=(v_1,v_2,\ldots,v_{15})$ is a BCO of $\mathcal{H}_2$. We maintain two sets $D_1$ and $D_2$ that will be computed by the algorithm \textsc{MNTDS-PIG($G$)} on $\mathcal{H}_1$ and $\mathcal{H}_2$, respectively. Let $D_1=D_2=\emptyset$. Notice that $\{v_2,v_4,v_9,v_{10}\}$ is a minimum \ntd-set of $\mathcal{H}_1$ and $\{v_1,v_5,v_9,v_{13},v_{14}\}$ is a minimum \ntd-set of $\mathcal{H}_2$.

\begin{figure}[h]
\begin{center}
\includegraphics[scale=.4]{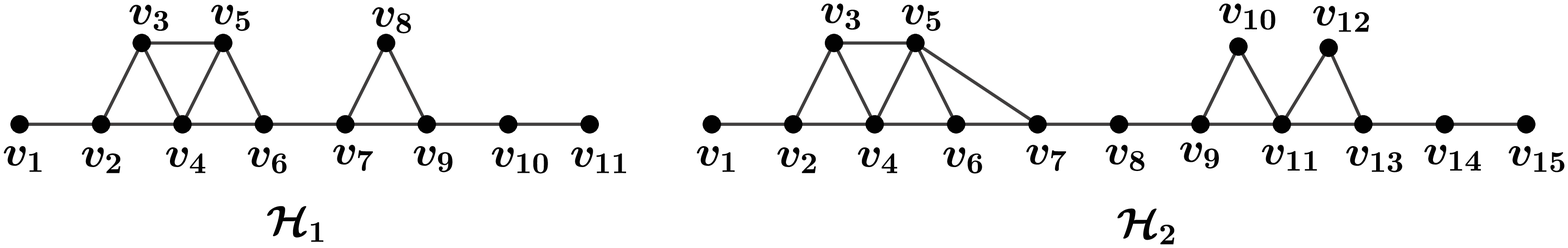}
\caption{\label{algoillu} Two proper interval graphs $\mathcal{H}_1$ and $\mathcal{H}_2$}
\end{center}
\end{figure}

\noindent\textbf{Illustration of the algorithm \textsc{MNTDS-PIG($G$)}:}
\begin{itemize}
\item In both $\mathcal{H}_1$ and $\mathcal{H}_2$, $v_1$ is a pendant vertex. So Case 1 or Case 2 of the algorithm will be applied at the first iteration of the algorithm.

\item \textbf{1st iteration:} In $\mathcal{H}_1$, $\ell(v_2)=4$, $\ell(v_3)=5$, and $\ell(v_4)=\ell(v_5)=6$. So Case 1 of the algorithm \textsc{MNTDS-PIG($G$)} is applied and $D_1=D_1\cup\{v_2,v_4\}$. Then the index of the vertex that will be considered in the next iteration of the algorithm is updated to $\ell(v_4)+1=7$ (see Line 7).

In $\mathcal{H}_2$,  $\ell(v_2)=4$, $\ell(v_3)=5$, $\ell(v_4)=6$, and $\ell(v_5)=7$. Thus $\ell(v_4)\neq \ell(v_5)$. So Case 2 of the algorithm \textsc{MNTDS-PIG($G$)} is applied and $D_2=D_2\cup\{v_1,v_5\}$. Then the index of the vertex that will be considered in the next iteration of the algorithm is updated to $\ell(v_5)+1=8$ (see Line 10).

\item \textbf{2nd iteration:} In $\mathcal{H}_1$, the second iteration of the algorithm \textsc{MNTDS-PIG($G$)}  considers the vertex $v_7$. Notice that $v_7$ is not dominated by $D_1$. So Case 3 of the algorithm \textsc{MNTDS-PIG($G$)} is applied and $D_1=D_1\cup\{v_9\}$. Then the index of the vertex that will be considered in the next iteration of the algorithm is updated to $\ell(v_7)+1=10$ (see Line 15).

In $\mathcal{H}_2$, the second iteration of the algorithm \textsc{MNTDS-PIG($G$)}  considers the vertex $v_8$. Notice that $v_8$ is not dominated by $D_2$. So Case 3 of the algorithm \textsc{MNTDS-PIG($G$)} is applied and $D_2=D_2\cup\{v_9\}$. Then the index of the vertex that will be considered in the next iteration of the algorithm is updated to $\ell(v_8)+1=12$ (see Line 15).

\item \textbf{3rd iteration:} In $\mathcal{H}_1$, the third iteration of the algorithm \textsc{MNTDS-PIG($G$)}  considers the vertex $v_{10}$. Notice that $v_{10}$ is dominated by $D_1$, $d_{G_{11}}(v_{10})=1$, and $\ell(v_{10})=11$. So Case 4 of the algorithm \textsc{MNTDS-PIG($G$)} is applied and $D_1=D_1\cup\{v_{10}\}$.

In $\mathcal{H}_2$, the third iteration of the algorithm \textsc{MNTDS-PIG($G$)}  considers the vertex $v_{12}$. Notice that $v_{12}$ is not dominated by $D_2$. So Case 3 of the algorithm \textsc{MNTDS-PIG($G$)} is applied and $D_2=D_2\cup\{v_{13}\}$. Then the index of the vertex that will be considered in the next iteration of the algorithm is updated to $\ell(v_{12})+1=14$ (see Line 15).

\item \textbf{4th iteration:} In $\mathcal{H}_2$, the fourth iteration of the algorithm \textsc{MNTDS-PIG($G$)}  considers the vertex $v_{14}$. Notice that $v_{14}$ is dominated by $D_2$, $d_{G_{14}}(v_{14})=1$, and $\ell(v_{14})=15$. So Case 4 of the algorithm \textsc{MNTDS-PIG($G$)} is applied and $D_2=D_2\cup\{v_{14}\}$.
\end{itemize}

Suppose that \textsc{MNTDS-PIG($G$)} executes for $k$ number of iterations. Then $k\leq n$. Let $D_r$, $1\leq r\leq k$ be the set constructed by \textsc{MNTDS-PIG($G$)} after the execution of the $r$-th iteration. The following can be observed from the algorithm  \textsc{MNTDS-PIG($G$)}. However, to have better understanding, we have given the proofs for Observation \ref{obspig2}-\ref{obspig4}.

\begin{obs}\label{obspig2-0}
For every $0\leq r<k$, if $v_a$ is not an isolated vertex in $G[N_G(D_r)]$, then $v_a$ is not an isolated vertex in $G[N_G(D_{r'})]$ for every $r<r'\leq k$.
\end{obs}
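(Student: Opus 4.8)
The plan is to reduce the claim to the single structural fact that the set maintained by \textsc{MNTDS-PIG($G$)} never loses an element during its execution. First I would observe that every operation that modifies $D$ in the algorithm has the form $D = D\cup\{\cdots\}$ (this covers Cases 1--5 as well as the terminal insertion in Case 4), so no vertex is ever deleted from $D$. Consequently $D_0\subseteq D_1\subseteq\cdots\subseteq D_k$, and in particular $D_r\subseteq D_{r'}$ whenever $r\leq r'$.

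Second, I would invoke the elementary monotonicity of the open neighborhood: if $D_r\subseteq D_{r'}$, then $N_G(D_r)=\bigcup_{u\in D_r}N_G(u)\subseteq\bigcup_{u\in D_{r'}}N_G(u)=N_G(D_{r'})$. Thus the vertex set of the induced subgraph can only grow as we pass from iteration $r$ to iteration $r'$; no vertex present in $N_G(D_r)$ is ever removed.

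Finally, I would unwind the definition of an isolated vertex. To say that $v_a$ is not isolated in $G[N_G(D_r)]$ means that $v_a\in N_G(D_r)$ and there exists a vertex $v_b\in N_G(D_r)$ with $v_av_b\in E(G)$. Since $N_G(D_r)\subseteq N_G(D_{r'})$, both $v_a$ and $v_b$ still lie in $N_G(D_{r'})$, and the edge $v_av_b$ of $G$ is unchanged; hence $v_a$ has a neighbor inside $G[N_G(D_{r'})]$ and is therefore not isolated there. This establishes the statement for every $r<r'\leq k$ simultaneously.

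I do not anticipate a genuine obstacle here: the assertion is purely a monotonicity observation, and the only point that must be checked honestly is that the algorithm performs no deletions, which is immediate from an inspection of its pseudocode. If one prefers, the passage from $D_r$ to a general $D_{r'}$ can instead be obtained by a trivial induction on $r'$ using only the single inclusion $D_{r'-1}\subseteq D_{r'}$, but the direct argument above already handles all $r'>r$ at once.
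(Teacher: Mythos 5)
Your argument is correct and is precisely the justification the paper has in mind: the authors state Observation \ref{obspig2-0} without proof, noting it ``can be observed from the algorithm,'' and the observation in question is exactly your point that every update in \textsc{MNTDS-PIG($G$)} has the form $D=D\cup\{\cdots\}$, so $D_r\subseteq D_{r'}$, hence $N_G(D_r)\subseteq N_G(D_{r'})$, and any witnessing neighbor of $v_a$ in $G[N_G(D_r)]$ survives in $G[N_G(D_{r'})]$. No gap; your write-up simply makes explicit what the paper leaves implicit.
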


Notice that if a vertex $v_a$ is selected by the algorithm \textsc{MNTDS-PIG($G$)} at some iteration, then the next iteration of the algorithm \textsc{MNTDS-PIG($G$)} considers the vertex $v_b$, where $b=a+1=\ell(v_a)$ (see Line 10, 15, and 22 of the algorithm) or $b=\ell(v_a)+1$. To use this fact later, we record this in the following observation. 

\begin{obs}\label{obspig2-1}
If $v_a$ is the vertex selected by the algorithm \textsc{MNTDS-PIG($G$)} at some iteration, then the next iteration of the algorithm considers the vertex $v_b$, where $b\geq \ell(v_a)$.
\end{obs}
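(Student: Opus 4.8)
The plan is to verify the claim by a direct inspection of the five cases of \textsc{MNTDS-PIG($G$)}, using only the update rule that fires in each case together with two elementary facts about $\ell$: by definition $\ell(v_m)\geq m$ for every $m$, and by Observation \ref{obspig1-2}, $\ell$ is nondecreasing along $\sigma$. In each case I would record exactly which vertex (or vertices) are added to $D$ and the value $b$ to which the index $i$ is reset, and then check $b\geq \ell(v_a)$ for each selected $v_a$.

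First I would dispose of the cases in the while loop, which are immediate. In Case 3 the selected vertex is $v_j$ with $j=\ell(v_i)$, and $i$ is reset to $b\in\{\ell(v_j),\ell(v_j)+1\}$; since $v_a=v_j$, we have $b\geq \ell(v_j)=\ell(v_a)$. Case 5 is identical with $v_a=v_p$, $p=\ell(v_{i+1})$, and $b\in\{\ell(v_p),\ell(v_p)+1\}$, so $b\geq \ell(v_p)=\ell(v_a)$. Case 4 terminates the algorithm with a \texttt{Return}, so there is no subsequent iteration and the statement is vacuous there. These three cases already establish the claim for every vertex selected after the first iteration.

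The only genuine work is the first iteration, Cases 1 and 2, where \emph{two} vertices are added at once and the claim must hold for both. In Case 1 the selected vertices are $v_2$ and $v_{\ell(v_2)}$, and $b=\ell(v_{\ell(v_2)})+1$. For $v_a=v_{\ell(v_2)}$ the bound $b=\ell(v_{\ell(v_2)})+1>\ell(v_a)$ is immediate. For the earlier vertex $v_a=v_2$ I would argue as follows: writing $m=\ell(v_2)\geq 2$, the defining inequality $\ell(v_m)\geq m$ gives $\ell(v_{\ell(v_2)})\geq \ell(v_2)$, whence $b=\ell(v_{\ell(v_2)})+1>\ell(v_2)=\ell(v_a)$. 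In Case 2 the selected vertices are $v_1$ and $v_{\ell(v_3)}$, and $b\in\{\ell(v_{\ell(v_3)}),\ell(v_{\ell(v_3)})+1\}$. For $v_a=v_{\ell(v_3)}$ the bound $b\geq \ell(v_{\ell(v_3)})=\ell(v_a)$ is read directly off the update rule. For $v_a=v_1$, note that $v_1$ being pendant forces $N_G[v_1]=\{v_1,v_2\}$, so $\ell(v_1)=2$; since $\ell(v_3)\geq 3$, Observation \ref{obspig1-2} and $\ell(v_m)\geq m$ give $\ell(v_{\ell(v_3)})\geq \ell(v_3)\geq 3>2=\ell(v_1)$, hence $b>\ell(v_a)$.

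The main obstacle, such as it is, lies entirely in the two-vertex first iteration: for the earlier-indexed of the two selected vertices the update rule is phrased in terms of the later vertex, so the bound is not literally the increment $\ell(v_a)$ or $\ell(v_a)+1$ described in the surrounding remark and must instead be recovered by chaining $\ell(v_m)\geq m$ with the monotonicity of $\ell$. Once these two inequalities are in hand the verification is routine, and assembling the five cases completes the proof of Observation \ref{obspig2-1}.
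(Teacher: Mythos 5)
Your proof is correct and takes essentially the same route as the paper, which records this observation without a formal proof, justifying it by direct inspection of the index updates in Lines 7, 10, 15, and 22 of \textsc{MNTDS-PIG($G$)}. Your extra care in the first iteration --- checking the earlier-indexed of the two selected vertices via $\ell(v_m)\geq m$ and the monotonicity of $\ell$ from Observation \ref{obspig1-2} --- fills in exactly the detail the paper leaves implicit, and is sound.
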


We now present two lemmas that will help in proving that the set $D_k$ is a \ntd-set of $G$.

\begin{lemma}\label{obspig2}
Let $v_i$ be the vertex of $G$ considered by the algorithm \textsc{MNTDS-PIG($G$)} at the beginning of the $r$-th iteration. If $v_i$ is dominated by $D_{r-1}$, then  $|N_G(v_{i})\cap \{v_1,v_2,\ldots,v_{i-1}\}|=1$, $v_{i-1}\in D_{r-1}$,  $N_G(v_{i-1})\cap D_{r-1}=\emptyset$, and $v_{i-2}v_{i},v_{i-1}v_{i+1}\notin E(G)$.
\end{lemma}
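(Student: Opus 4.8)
The plan is to trace how the scan index of \textsc{MNTDS-PIG($G$)} can land on a dominated vertex, and then to read the four conclusions off the update that produced that jump. A fact I would use throughout is a monotonicity property implicit in Observation~\ref{obspig2-1}: whenever the algorithm selects a vertex $v_a$, the scan index at the next iteration is at least $\ell(v_a)\ge a$, and the scan index never decreases. Consequently every vertex placed in $D$ during the first $t$ iterations has index strictly below the scan index at the start of iteration $t+1$; writing $v_{i'}$ for the vertex examined at the start of iteration $r-1$, this says that all vertices of $D_{r-2}$ have index smaller than $i'$.

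First I would pin down the update used in iteration $r-1$. Every index update has the form $i=\ell(v_a)$ or $i=\ell(v_a)+1$, where $v_a$ is the vertex selected in that iteration. If a ``$+1$'' branch had been used then $\ell(v_a)=i-1$, so $v_av_i\notin E$; but by monotonicity every vertex of $D_{r-1}$ has index at most $a$, and since cliques are intervals of $\sigma$ (Observation~\ref{obsprelim2}(a)) none of them can be adjacent to $v_i$, forcing $v_i$ to be undominated -- a contradiction. Hence an ``otherwise'' branch was taken, so $\ell(v_a)=i$; the guard $|N_G(v_a)\cap V(G_{a+1})|\le 1$ of that branch, together with $v_av_i\in E$ and Observation~\ref{obsprelim2}(a), then forces $a=i-1$ and makes $v_i$ the unique successor-neighbor of $v_{i-1}$. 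This already yields $v_{i-1}\in D_{r-1}$, the relation $v_{i-1}v_{i+1}\notin E$ (only one successor-neighbor), and $v_{i-2}v_i\notin E$ (the second part of the guard); and since $v_{i-1}v_i\in E$ while $v_{i-2}v_i\notin E$, Observation~\ref{obsprelim2}(a) gives $N_G(v_i)\cap\{v_1,\dots,v_{i-1}\}=\{v_{i-1}\}$, the first conclusion.

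The remaining and hardest conclusion is $N_G(v_{i-1})\cap D_{r-1}=\emptyset$, which I would establish by induction on $r$, distinguishing the branch of iteration $r-1$. If iteration $r-1$ examined an undominated vertex $v_{i'}$ (Case 3), then $N_G[v_{i'}]\cap D_{r-2}=\emptyset$; since $\ell(v_{i'})=i-1$ makes $v_{i'}$ a predecessor-neighbor of $v_{i-1}$, the interval structure of $\sigma$ shows that every predecessor-neighbor of $v_{i-1}$ of index at most $i'$ lies in $N_G[v_{i'}]$ and hence avoids $D_{r-2}$, while those of index exceeding $i'$ (and the lone successor-neighbor $v_i$) avoid $D_{r-2}$ by monotonicity; as $D_{r-1}=D_{r-2}\cup\{v_{i-1}\}$, this gives the claim. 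The initial Case~2 is similar with $D_{r-2}=\emptyset$, the only point being that the extra selected vertex $v_1$, a pendant with neighbor $v_2$, is not adjacent to $v_{i-1}=v_{\ell(v_3)}$ because $\ell(v_3)\ge 3$.

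If instead iteration $r-1$ examined a dominated vertex $v_{i'}$ (Case~5, where $v_{i-1}=v_{\ell(v_{i'+1})}$), then the induction hypothesis, applied to the earlier dominated vertex $v_{i'}$, supplies $v_{i'-1}\in D_{r-2}$ and $v_{i'-1}v_{i'+1}\notin E$. Since $v_{i'+1}$ is a predecessor-neighbor of $v_{i-1}$ while $v_{i'-1}$ is not (else Observation~\ref{obsprelim2}(a) would force $v_{i'-1}v_{i'+1}\in E$), the predecessor-neighbor block of $v_{i-1}$ starts at index at least $i'$ and is therefore disjoint from $D_{r-2}$; the successor side is again handled by monotonicity. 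I expect this Case~5 step -- chaining the induction correctly through a maximal run of consecutively dominated vertices, and using the previous iteration's instance of the relation $v_{i'-1}v_{i'+1}\notin E$ to keep earlier selections of $D$ off the neighborhood of $v_{i-1}$ -- to be the main obstacle, the undominated case and conclusions $(1)$, $(2)$, $(4)$ being routine consequences of the interval structure of $\sigma$.
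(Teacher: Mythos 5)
Your proposal is correct, and for the first three conclusions it is essentially the paper's argument in contrapositive form: the paper derives $|N_G(v_i)\cap\{v_1,\dots,v_{i-1}\}|=1$, $v_{i-1}\in D_{r-1}$, and $v_{i-2}v_i,v_{i-1}v_{i+1}\notin E(G)$ by assuming a failure and showing the update would then have been $\ell(v_a)+1>i$ (via Observation~\ref{obspig2-1} and the branch guards), whereas you read the same facts directly off the guard of the ``otherwise'' branch that must have fired; these are the same mechanism. The genuine divergence is in the conclusion $N_G(v_{i-1})\cap D_{r-1}=\emptyset$. The paper disposes of it in two lines: if $v_{i-1}$ had a $D_{r-1}$-neighbor, then $v_{\ell(v_{i-1})}=v_i$ would not be isolated in $G[N_G(D_{r-1})]$, so the next iteration ``would have considered'' $v_c$ with $c=\ell(v_{i-1})+1$ --- an appeal to the intended semantics of the update rule, since the literal guard in the pseudocode ($|N_G(v_j)\cap V(G_{j+1})|\geq 2$ or $v_{j-1}v_{j+1}\in E$) tests successor-side structure and does not obviously trigger when $v_{i-1}$ merely has a \emph{predecessor} neighbor in $D_{r-1}$. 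You instead prove that this situation never arises, by induction on the iteration count with a case split on how iteration $r-1$ ran: in Case~3 the undominatedness of the scanned vertex $v_{i'}$ plus the clique-interval property (Observation~\ref{obsprelim2}(a)) and scan-index monotonicity keep $D_{r-2}$ off $N_G(v_{i-1})$; in Case~5 you chain the inductive instance of $v_{i'-1}v_{i'+1}\notin E$ to show the predecessor-neighbor block of $v_{i-1}$ starts at index at least $i'$, above all of $D_{r-2}$. Your route is longer but is grounded in the pseudocode's actual branch conditions and in fact patches the one step where the paper's proof is loosest; the paper's route buys brevity and avoids induction. (Both treatments share the same unexamined corner case $a=i$, which the paper dismisses with the bare assertion ``$a\neq i$'' and you implicitly exclude by assuming $v_av_i\in E$, so you are not behind the paper there.)
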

\begin{proof} Let $\sigma=(v_1,v_2,\ldots,v_n)$ be a BCO of $G$. Since $D=\emptyset$ at the beginning of the 1st iteration, $v_1$ cannot be dominated. So assume that $r\neq 1$ and $i\neq 1$. Since $\sigma$ is a BCO of $G$, $v_i$ is a simplicial vertex in $G[\{v_1,v_2,\ldots,v_{i}\}]$ as well as $G[\{v_i,v_{i+1},\ldots,v_n\}]$. Let $v_a$ be the maximum indexed vertex selected at the $(r-1)$-th iteration. So $v_a\in D_{r-1}$. Since the algorithm considers the vertex $v_i$ at the $r$-th iteration and $v_i$ is dominated by $D_{r-1}$, by Observation \ref{obspig2-1}, $\ell(v_a)=i$. Since $\sigma$ is a BCO of $G$, $v_av_i\in E(G)$.  Moreover, since $v_i$ is the considered at the $r$-th iteration, $a\neq i$. This implies that $a<i$. 

\begin{claim}\label{obspig2claim1}
$|N_G(v_i)\cap \{v_1,v_{2},\ldots,v_{i-1}\}|=1$.

\end{claim}
\begin{proof}[Proof of Claim \ref{obspig2claim1}:]
If $|N_G(v_i)\cap \{v_1,v_{2},\ldots,v_{i-1}\}|\geq 2$, then there must exist a vertex $v_b$ other than $v_a$. Since $v_i$ is simplicial in $G[\{v_1,v_2,\ldots,v_i\}]$, $v_av_b\in E(G)$. Notice that $b<i$. If $a<b$, then $|N_G(v_a)\cap V(G_{a+1})|\geq 2$. Again if $a>b$, then since $\{v_b,\ldots ,v_i\}$ is a clique, $v_{a-1}v_{a+1}\in E(G)$. In both cases, the next iteration of the algorithm would have considered the vertex $v_c$, where $c=\ell(v_a)+1$. Recall that $\ell(v_a)=i$. So we have $\ell(v_a)+1>i$ and hence $c>i$. This is a contradiction to the fact that the vertex $v_i$ is considered at $r$-th iteration of the algorithm. So $|N_G(v_i)\cap \{v_1,v_{2},\ldots,v_{i-1}\}|\leq 1$. This implies that $|N_G(v_i)\cap \{v_1,v_{2},\ldots,v_{i-1}\}|= 1$ since $v_a\in N_G(v_i)$.
\end{proof}
\begin{claim}\label{obspig2claim2}
$N_G(v_{i-1})\cap D_{r-1}\neq \emptyset$ and $v_{i-2}v_{i},v_{i-1}v_{i+1}\notin E(G)$.
\end{claim}
\begin{proof}[Proof of Claim \ref{obspig2claim2}:]
Since $v_i$ is dominated by $D_{r-1}$ and $v_av_i\in E(G)$ with $a<i$, by Claim \ref{obspig2claim1}, it is clear that $v_{i-1}\in D_{r-1}$. If $N_G(v_{i-1})\cap D_{r-1}\neq \emptyset$, then, since $v_{i-1}\in D_{r-1}$, by Observation \ref{obspig2-1}, the next iteration of the algorithm would have considered the vertex $v_c$, where $c\geq \ell(v_{i-1})$. In particular, $c=\ell(v_{i-1})+1$ as $v_{\ell(v_{i-1})}$ is not an isolated vertex in $G[N_G(D_{r-1})]$. Since $v_{i-1}v_i\in E(G)$, $\ell(v_{i-1})+1>i$. This is a contradiction to the fact that the vertex $v_i$ is considered at $r$-th iteration of the algorithm. So $N_G(v_{i-1})\cap D_{r-1}=\emptyset$. If $v_{i-2}v_{i}\in E(G)$ or $v_{i-1}v_{i+1}\in E(G)$, then the next iteration of the algorithm would have considered the vertex $v_c$, where $c=\ell(v_{i-1})+1$ (it will consider the vertex $v_c$ using Line 10 or Line 15 or Line 22 of the algorithm). Since $\ell(v_{i-1})+1>i$, this is again a contradiction to the fact that the vertex $v_i$ is considered at $r$-th iteration of the algorithm. So $v_{i-2}v_{i},v_{i-1}v_{i+1}\notin E(G)$. 
\end{proof}
By Claim \ref{obspig2claim1} and Claim \ref{obspig2claim2}, the proof of the lemma follows. 
\end{proof}

\begin{lemma}\label{obspig2-01}
Let $v_a$ and $v_b$ be the vertices considered by the algorithm \textsc{MNTDS-PIG($G$)} at the beginning of the $s$-th and the $(s+1)$-th iterations, respectively. Then every $x\in \{v_a,v_{a+1},\ldots,v_{b-1}\}$ is dominated by $D_{s}$ and is not an isolated vertex in $G[N_G(D_{s})]$.
\end{lemma}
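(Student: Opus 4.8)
\textbf{Proof proposal for Lemma \ref{obspig2-01}.}

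The plan is to analyze the algorithm's behavior on a single iteration. At the $s$-th iteration the algorithm considers $v_a$, selects one or two vertices into $D$, and then advances the index to $b$ so that $v_b$ is the vertex considered at the $(s+1)$-th iteration. I would argue that every vertex in the block $\{v_a, v_{a+1}, \ldots, v_{b-1}\}$ has already been made dominated and non-isolated by the choices $D_s$ made during (and before) this iteration. Since Observation \ref{obspig2-0} guarantees that a vertex which is non-isolated in $G[N_G(D_s)]$ stays non-isolated in every later $G[N_G(D_{s'})]$, it suffices to verify the claim for $D_s$ alone.

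First I would case-split according to which line of the algorithm fires at the $s$-th iteration, matching the case analysis already recorded in the itemized description preceding the algorithm. In Case 3 ($v_a=v_i$ undominated), the algorithm selects $v_j$ with $j=\ell(v_a)$, so that $\{v_a,\ldots,v_j\}$ is a clique by Observation \ref{obsprelim2}(a); since $v_j\in D_s$, Observation \ref{obsprelim3} plus the clique structure shows every vertex up to $v_j$ is dominated and non-isolated, and the index update to $\ell(v_j)+1$ or $\ell(v_j)$ exactly tracks how far the non-isolation extends (using Lemma \ref{obspig1} and Observation \ref{obsprelim5} to handle the boundary vertices $v_{j+1}$, $v_{j-1}$). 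In Case 5 ($v_a=v_i$ dominated, $i\neq n$, $i+1\neq n$) I would invoke Lemma \ref{obspig2}: because $v_a$ is dominated, we have $v_{a-1}\in D_{s-1}\subseteq D_s$ and $v_{a-1}v_a\in E(G)$, so $v_a$ itself is non-isolated; the algorithm then deliberately skips $v_{a+1}$ and selects $v_p$ with $p=\ell(v_{a+1})$, and the clique $\{v_{a+1},\ldots,v_p\}$ together with $v_{a+1}$ being dominated by $v_p$ (and $v_a$ by $v_{a-1}$) makes the whole block $\{v_a,\ldots,v_{b-1}\}$ dominated and non-isolated. The first-iteration cases (Case 1 and Case 2 of Lemma \ref{pig1}) are handled by the same reasoning already supplied in the itemized description, where the update values $\ell(v_{\ell(v_2)})+1$, $\ell(v_{\ell(v_3)})+1$, and $\ell(v_{\ell(v_3)})$ are precisely justified by clique structure.

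The main obstacle will be the boundary vertex $v_{b-1}$ in the two variants of each index update, namely distinguishing when $b=\ell(v_p)+1$ versus $b=\ell(v_p)$ (and similarly for $v_j$). I expect the index is pushed to $\ell(v_p)+1$ exactly when the extra vertex $v_{\ell(v_p)}$ is already non-isolated, which happens precisely under the tested condition $|N_G(v_p)\cap V(G_{p+1})|\geq 2$ or $v_{p-1}v_{p+1}\in E(G)$; in the complementary case the update stops one vertex earlier so that $v_{\ell(v_p)}=v_{b-1}$ is still guaranteed non-isolated by virtue of being adjacent to the selected $v_p$. Verifying that these two conditions correctly characterize when the block extends by the extra vertex — via Lemma \ref{obspig1} (membership in a triangle) and the path-structure analysis of Observation \ref{obsprelim4} — is the delicate bookkeeping step, but it parallels the claims already proved inside Lemma \ref{pig2} and so should go through. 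I would conclude by combining the per-iteration statement with Observation \ref{obspig2-0} to obtain the lemma.
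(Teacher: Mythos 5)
Your proposal is correct and follows the paper's own proof essentially step for step: a per-iteration case analysis on which branch of the algorithm fires (the first iteration via Lemmas \ref{pig1} and \ref{pig2}; the undominated case via the clique $\{v_a,\ldots,v_{\ell(v_a)}\}$ from Observation \ref{obsprelim2}(a); the dominated case via Lemma \ref{obspig2} and the selection of $v_p=v_{\ell(v_{a+1})}$), with the two index-update variants handled by Lemma \ref{obspig1} and the conclusion propagated by Observation \ref{obspig2-0}. Two small slips to repair when writing it out: $v_{a-1}\in D_s$ alone does not make $v_a$ non-isolated (Lemma \ref{obspig2} gives $N_G(v_{a-1})\cap D_{s-1}=\emptyset$, so $v_{a-1}\notin N_G(D_{s-1})$; non-isolation of $v_a$ needs its neighbor $v_{a+1}$ to enter $N_G(D_s)$ through the newly chosen $v_p$, exactly as your following clause supplies), and in the $b=\ell(v_p)$ variant the boundary vertex is $v_{b-1}=v_{\ell(v_p)-1}$, not $v_{\ell(v_p)}$.
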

\begin{proof}
If $a=1$, then the algorithm chooses vertices using Lemma \ref{pig2} or Lemma \ref{pig1} and hence the vertex $v_a$ is dominated. If $v_1$ is a pendant vertex of $G$, then $s=1$ and $D_1=\{v_2,v_{\ell(v_2)}\}$ or $D_1=\{v_1,v_{\ell(v_3)}\}$. If $D_1=\{v_2,v_{\ell(v_2)}\}$, then since the next iteration of the algorithm considers the vertex $v_b$, $b=\ell(v_{\ell(v_2)})+1$. Since $v_2v_{\ell(v_2)}\in E(G)$, every vertex $x\in \{v_1,v_2,\ldots,v_{b-1}\}\}$ is dominated by $D_1$ and $x$ is not an isolated vertex in $G[N_G(D_1)]$. If $D_1=\{v_1,v_{\ell(v_3)}\}$, then $b=\ell(v_3)+1$ or $b=\ell(v_{\ell(v_3)})+1$. Notice that in each case, every vertex $x\in \{v_1,v_2,\ldots,v_{b-1}\}$ is dominated by $D_1$ and $x$ is not an isolated vertex in $G[N_G(D_1)]$. Similarly if $v_1$ is not a pendant vertex of $G$, then $D_1=\{v_{j}\}$, where $\ell(v_1)=j$. Then $b=\ell(v_j)+1$ or $b=j+1$. In each case, every vertex $x\in \{v_1,v_2,\ldots,v_{b-1}\}$ is dominated by $D_1$ and $x$ is not an isolated vertex in $G[N_G(D_1)]$.

First assume that $a\neq 1$. Clearly $a<b$. If $v_a$ is not dominated by $D_{s-1}$, then at the $s$-th iteration the algorithm would have chosen the vertex $v_c$, where $c=\ell(v_a)$. Since $a<b$ and the vertex $v_b$ is considered at $(s+1)$-th iteration, we have $a\neq n$ and $\ell(v_a)\neq n$. This implies that $a<\ell(v_a)$. Moreover, since the next iteration of the algorithm considers the vertex $v_b$, we have $b=c+1$ or $b=\ell(v_c)+1$. Notice that in each case, every vertex $x\in \{v_a,v_{a+1},\ldots,v_{b-1}\}$ is dominated by $D_s$ and $x$ is not an isolated vertex in $G[N_G(D_s)]$.

Now assume that $a\neq 1$ and $v_a$ is dominated by $D_{s-1}$. Clearly $a\neq n$ since $a<b$. If $a=n-1$, then the algorithm would have selected the vertex $v_a$ (see Line 19) at the $s$-th iteration. This implies that $v_a$ is not an isolated vertex in $G[N_G(D_s)]$. So assume that $a\neq n-1$. Then by Lemma \ref{obspig2}, $|N_G(v_a)\cap \{v_1,v_2,\ldots,v_{i-1}\}|=1$, $v_{a-1}\in D_{s-1}$, $N_G(v_{a-1})\cap D_{s-1}=\emptyset$, and $v_{a-2}v_{a},v_{a-1}v_{a+1}\notin E(G)$. Then the algorithm would have selected the vertex $v_{a'}$, where $a'=\ell(v_{a+1})$ (see Line 22) at the $s$-th iteration. Since $v_{a+1}\notin D_s$, $v_a$ is not an isolated vertex in $G[N_G(D_s)]$. Since the next iteration of the algorithm considers the vertex $v_b$, either $b=a'+1$ or $b=\ell(v_{a'})+1$. Notice that in each case, every vertex $x\in \{v_a,v_{a+1},\ldots,v_{b-1}\}$ is dominated by $D_s$ and $x$ is not an isolated vertex in $G[N_G(D_s)]$.
\end{proof}

Using Observation \ref{obspig2-0} and Lemma \ref{obspig2-01}, we have the following corollary.

\begin{coro}\label{obspig4}
If $v_i$ is the vertex of $G$ considered by the algorithm \textsc{MNTDS-PIG($G$)} at the beginning of the $r$-th iteration, then every $x\in \{v_1,v_2,\ldots,v_{i-1}\}$ is dominated by $D_{r-1}$ and is not an isolated vertex in $G[N_G(D_{r-1})]$.
\end{coro}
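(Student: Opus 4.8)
The plan is to prove this by induction-free decomposition: I would chop the index interval $\{1,2,\ldots,i-1\}$ into the consecutive blocks handled by the successive iterations, and then glue together the two preceding results over these blocks. Let $v_{i_1},v_{i_2},\ldots$ denote the vertices considered at the beginnings of iterations $1,2,\ldots$, so that $i_1=1$, the sequence $i_1<i_2<\cdots$ is strictly increasing (each iteration selects vertices and advances the index), and the vertex considered at the start of the $r$-th iteration is exactly $v_i=v_{i_r}$, i.e.\ $i=i_r$. The structural observation driving the proof is that the blocks $\{v_{i_s},v_{i_s+1},\ldots,v_{i_{s+1}-1}\}$, for $s=1,\ldots,r-1$, partition precisely the set $\{v_1,v_2,\ldots,v_{i-1}\}$, since $i_1=1$ and $i_r=i$.

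First I would fix an arbitrary $x\in\{v_1,\ldots,v_{i-1}\}$ and locate the unique $s$ with $1\le s\le r-1$ such that $x\in\{v_{i_s},\ldots,v_{i_{s+1}-1}\}$. Applying Lemma \ref{obspig2-01} with $v_a=v_{i_s}$ and $v_b=v_{i_{s+1}}$ (the vertices considered at the starts of iterations $s$ and $s+1$) then yields immediately that $x$ is dominated by $D_s$ and is not an isolated vertex in $G[N_G(D_s)]$. Thus each of the two desired properties already holds at the intermediate time index $s$, and the only remaining work is to propagate them forward to the time index $r-1$.

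For domination the propagation is trivial: the algorithm only ever adds vertices to $D$, so $D_s\subseteq D_{r-1}$, whence $N_G[D_s]\subseteq N_G[D_{r-1}]$ and $x$ stays dominated by $D_{r-1}$. For non-isolation I would invoke Observation \ref{obspig2-0}: since $x$ is non-isolated in $G[N_G(D_s)]$ and $s\le r-1\le k$, the observation (with its indices instantiated as $s$ in the hypothesis and $r-1$ in the conclusion) gives that $x$ is non-isolated in $G[N_G(D_{r-1})]$. The degenerate block $s=r-1$ requires no appeal to the observation, because there Lemma \ref{obspig2-01} already delivers the conclusion directly at $D_{r-1}$.

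Since the argument is essentially bookkeeping, I expect no genuine obstacle; the only points needing care are verifying that the blocks tile $\{v_1,\ldots,v_{i-1}\}$ exactly (using $i_1=1$ and $i_r=i$) and correctly matching the index conventions of Observation \ref{obspig2-0}, whose statement is phrased for a single fixed starting iteration, to the per-block starting iteration $s$ and target index $r-1$. I would simply make sure the endpoint cases $s=1$ and $s=r-1$ are treated consistently, and the corollary follows.
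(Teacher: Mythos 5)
Your proof is correct and is essentially the paper's own argument: the paper derives Corollary \ref{obspig4} directly from Lemma \ref{obspig2-01} and Observation \ref{obspig2-0}, i.e., exactly your decomposition of $\{v_1,\ldots,v_{i-1}\}$ into per-iteration blocks, with Lemma \ref{obspig2-01} handling each block and Observation \ref{obspig2-0} (plus the trivial monotonicity $D_s\subseteq D_{r-1}$ for domination) propagating the properties forward to $D_{r-1}$. Your explicit treatment of the degenerate block $s=r-1$ simply fills in bookkeeping the paper leaves implicit.
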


By Corollary \ref{obspig4}, at the end of the $k$-th iteration, every vertex $x$ of the graph $G$ is dominated by $D_k$ and is not an isolated vertex in $G[N_G(D_k)]$. Therefore, $D_k$ is a \ntd-set of $G$. We record this in the following lemma.

\begin{lemma}\label{pigcorrect1}
$D_k$ is a \ntd-set of the connected proper interval graph $G$.
\end{lemma}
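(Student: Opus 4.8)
The plan is to verify directly that $D_k$ satisfies the two defining conditions of a \ntd-set: that it dominates every vertex of $G$, and that $G[N_G(D_k)]$ has no isolated vertex. Both conditions will be checked vertex by vertex, organised around the fact that the indices $1=i_1<i_2<\cdots<i_k$ considered at the successive iterations are strictly increasing: in each update (Lines 10, 15, 22) the newly selected vertex has index at least the current $i$, and $i$ is reset to at least its $\ell$-value, which is strictly larger unless the algorithm has already reached $v_n$ and returned. Thus the iterations carve $\{1,\ldots,n\}$ into consecutive blocks.

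First I would invoke Corollary \ref{obspig4} at the beginning of the final ($k$-th) iteration: it guarantees that every $x\in\{v_1,\ldots,v_{i_k-1}\}$ is dominated by $D_{k-1}$ and is non-isolated in $G[N_G(D_{k-1})]$. Since the algorithm only ever adds vertices to $D$, Observation \ref{obspig2-0} and the monotonicity of domination propagate both properties to $G[N_G(D_k)]$. Hence the entire difficulty is concentrated in the last block $\{v_{i_k},\ldots,v_n\}$, and I would split the argument according to how the algorithm halts.

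If the \textbf{while} loop terminates because the updated index exceeds $n$, then Lemma \ref{obspig2-01}, read for the last iteration with its (virtual) successor index $>n$, shows that every vertex of $\{v_{i_k},\ldots,v_n\}$ is dominated by $D_k$ and non-isolated in $G[N_G(D_k)]$; together with the previous paragraph this covers all of $V(G)$. If instead the algorithm returns inside Case 4, then $v_{i_k}$ is dominated by $D_{k-1}$, so Lemma \ref{obspig2} applies and yields $v_{i_k-1}\in D_{k-1}\subseteq D_k$. Since $v_{i_k}$ is then added to $D$, the adjacent pair $v_{i_k-1},v_{i_k}$ both lie in $N_G(D_k)$ (each being a neighbour of the other, which is in $D_k$), so both are non-isolated. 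Using the Hamiltonicity of the BCO one checks that the Case-4 condition forces either $i_k=n$, in which case $v_{i_k}=v_n$ is the only remaining vertex, or $i_k=n-1$ with unique forward neighbour $v_n$; in the latter case $v_n$ is dominated by $v_{i_k}\in D_k$ and is non-isolated because its neighbour $v_{i_k}\in N_G(D_k)$. Either way $\{v_{i_k},\ldots,v_n\}$ is covered.

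Combining the two cases, every vertex of $G$ is dominated by $D_k$ and is non-isolated in $G[N_G(D_k)]$, so $D_k$ is a \ntd-set. The main obstacle is precisely the terminal block: both Corollary \ref{obspig4} and Lemma \ref{obspig2-01} are phrased relative to a ``next considered vertex'' that no longer exists once the loop stops, so the last one or two vertices — especially in the Case-4 return, where $v_n$ becomes dominated only through the freshly inserted $v_{i_k}$ — must be verified by hand rather than inherited from a subsequent iteration.
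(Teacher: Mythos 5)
Your proof is correct and takes essentially the same route as the paper, whose entire argument is a one-line application of Corollary \ref{obspig4} (itself assembled from Observation \ref{obspig2-0} and Lemma \ref{obspig2-01}) ``at the end of the $k$-th iteration.'' The only difference is that you explicitly verify the terminal block---reading Lemma \ref{obspig2-01} with a virtual successor index when the loop exits with $i>n$, and hand-checking the Case-4 return at $i_k=n$ or $i_k=n-1$ via Lemma \ref{obspig2}---a boundary case the paper's proof leaves implicit, so this is a careful gap-filling of the same argument rather than a different one.
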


Next we prove that $D_k$ is a minimum \ntd-set of $G$. To prove this, we use induction on the number of iterations. In particular, we prove that for each iteration $r$, $1\leq r\leq k$, the set $D_r$ is contained is some minimum \ntd-set of $G$.

\begin{lemma}\label{pigcorrect2}
For each $r$, $1\leq r\leq k$, $D_r$ is contained in some minimum \ntd-set of the connected proper interval graph $G$.
\end{lemma}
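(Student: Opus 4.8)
The plan is to argue by induction on the iteration count $r$. For the base case $r=1$ the first iteration is governed entirely by Lemma~\ref{pig1} (when $v_1$ is a pendant vertex) or by Lemma~\ref{pig2} (when it is not): the set $D_1$ produced by \textsc{MNTDS-PIG($G$)} is exactly $\{v_2,v_{\ell(v_2)}\}$, $\{v_1,v_{\ell(v_3)}\}$, or $\{v_{\ell(v_1)}\}$, each of which those lemmas place inside some minimum \ntd-set of $G$. So suppose $r\ge 2$ and, by the inductive hypothesis, fix a minimum \ntd-set $D^{*}$ with $D_{r-1}\subseteq D^{*}$; I want to produce a minimum \ntd-set $D'$ with $D_{r}\subseteq D'$. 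Let $v_i$ be the vertex examined at the $r$-th iteration. The crucial simplification is Corollary~\ref{obspig4}: every vertex of $\{v_1,\ldots,v_{i-1}\}$ is already dominated by $D_{r-1}$ and is non-isolated in $G[N_G(D_{r-1})]$, hence, by Observation~\ref{obspig2-0}, non-isolated in $G[N_G(D')]$ for any $D'\supseteq D_{r-1}$. Thus the left segment of the ordering imposes no further constraint, and I only have to control the vertices $v_i,\ldots,v_n$.

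Next I would split according to the branch of the algorithm taken at iteration $r$. In Case~3, $v_i$ is undominated by $D_{r-1}$ and the algorithm adds $v_j$ with $j=\ell(v_i)$. Since $D^{*}$ is a \ntd-set it dominates $v_i$; let $v_k$ be the smallest-index vertex of $D^{*}$ dominating $v_i$. As $v_i$ is undominated by $D_{r-1}$ we have $v_k\notin D_{r-1}$, and $k\le\ell(v_i)=j$. If $v_j\in D^{*}$ we are already done, so assume not and set $D'=(D^{*}\setminus\{v_k\})\cup\{v_j\}$; then $D'\supseteq D_{r-1}\cup\{v_j\}=D_r$ and $|D'|=|D^{*}|$. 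That $D'$ remains a dominating set follows from the clique $\{v_i,\ldots,v_j\}$ of Observation~\ref{obsprelim2}(a) together with Observation~\ref{obsprelim5}, which guarantee that $v_j$ inherits every domination obligation of $v_k$ among $\{v_i,\ldots,v_n\}$, while the vertices of index $<i$ are already covered by $D_{r-1}$. The no-isolated-vertex condition for $N_G(v_j)$ is then handled exactly by the path-length analysis of Claims~\ref{pig2claim1.1}--\ref{pig2claim1.7}: the backward neighbours of $v_j$ are non-isolated by Lemma~\ref{obspig1}, and for the forward neighbours one repeats the same case distinction on $|N_{G_j}(v_j)|$ and on the length modulo $3$ of the maximal path hanging off $v_j$, modifying $D'$ if necessary without changing its cardinality.

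Case~5 is the analogous push applied one step later. Here $v_i$ is dominated by $D_{r-1}$, so Lemma~\ref{obspig2} gives $v_{i-1}\in D_{r-1}$, $N_G(v_{i-1})\cap D_{r-1}=\emptyset$ and $v_{i-1}v_{i+1}\notin E(G)$; consequently $v_{i+1}$ is undominated by $D_{r-1}$, which is why the algorithm deliberately keeps $v_i,v_{i+1}\notin D$ and selects $v_p$ with $p=\ell(v_{i+1})$. I would run the Case~3 exchange with $v_{i+1}$ in the role of $v_i$ (relocating the minimal dominator of $v_{i+1}$ in $D^{*}$ to $v_p$), and then observe that $v_i$ and $v_{i+1}$ are non-isolated through the edge $v_iv_{i+1}$: both lie in $N_G(D')$ ($v_i$ via $v_{i-1}\in D_{r-1}$ and $v_{i+1}$ via $v_p$), so each witnesses the other's non-isolation. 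Case~4 is the terminal case $i=n$ or $i=n-1$; there Lemma~\ref{obspig2} forces $v_i$ (or $v_n$) to be isolated in $G[N_G(D_{r-1})]$ unless a further vertex of $N_G[v_i]$ is chosen, and a direct finite check—pushing whichever vertex of $D^{*}$ repairs this isolation onto $v_i$ and invoking Observation~\ref{obsprelim3}—shows $D_{r-1}\cup\{v_i\}$ extends to a minimum \ntd-set.

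The main obstacle is the no-isolated-vertex bookkeeping inside the exchange, not the domination part. Domination transfers cleanly because Corollary~\ref{obspig4} discharges the left segment and Observations~\ref{obsprelim2} and~\ref{obsprelim5} handle the right; but verifying that relocating the dominator to $v_j$ (resp.\ $v_p$) never strands a forward neighbour requires re-deriving, at an \emph{interior} vertex, the full modular path-length analysis already carried out for $v_1$ in Lemma~\ref{pig2}. The one genuinely new wrinkle is Case~5, where the justification for skipping $v_{i+1}$ leans on the precise structural facts of Lemma~\ref{obspig2} (namely $N_G(v_{i-1})\cap D_{r-1}=\emptyset$ and $v_{i-1}v_{i+1}\notin E(G)$) to guarantee that the single edge $v_iv_{i+1}$ already keeps both endpoints non-isolated, so that no additional vertex, and hence no increase in cardinality, is ever needed.
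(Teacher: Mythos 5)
Your proof is correct and follows the paper's overall induction skeleton, but it takes a genuinely different route on the main undominated case. Where you handle every undominated $v_i$ uniformly by a single exchange---remove the minimum-index dominator $v_k\in D^{*}$ of $v_i$, insert $v_{\ell(v_i)}$, and re-run the modulo-$3$ path analysis of Claims \ref{pig2claim1.1}--\ref{pig2claim1.7} at the interior vertex---the paper splits on $d_{G_i}(v_i)$. For $d_{G_i}(v_i)\geq 2$ it performs no exchange at all: it first normalizes $D^{*}$ (Claim \ref{pigc2-claim1}) so that $D^{*}\setminus D_{r-1}\subseteq\{v_i,v_{i+1},\ldots,v_n\}$, observes that since $v_i$ is undominated by $D_{r-1}$ and $D_{r-1}$ is already a \ntd-set of $G[\{v_1,\ldots,v_{i-1}\}]$ (Corollary \ref{obspig4}) the set $D^{*}\setminus D_{r-1}$ must itself be a \ntd-set of the suffix $G_i$, and then applies Lemma \ref{pig2} afresh to $G_i$, which is a proper interval graph with BCO $(v_i,\ldots,v_n)$ whose first vertex has degree at least $2$---exactly the lemma's hypothesis. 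This buys the interior case ``for free,'' with no re-derivation of the path-length analysis, at the price of the normalization claim, which is needed so that the count $|D_{r-1}|+|D^{*}\setminus D_{r-1}|=|D^{*}|$ certifies minimality of the glued set $D_{r-1}\cup D_2$. Your exchange preserves cardinality outright, so you need no normalization or counting, but you must redo the Claims \ref{pig2claim1.1}--\ref{pig2claim1.7} case distinction at an interior vertex and verify that none of its modifications deletes a vertex of $D_{r-1}$---a point you acknowledge only implicitly; note that the paper itself uses precisely your exchange-and-rerun device, with that remark made explicit, in its own Cases 2 and 3, so your argument amounts to extending the paper's Case 2--3 technique uniformly to the case the paper instead treats by suffix decomposition. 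Both routes are sound, and your remaining cases (the terminal situation $i=n$ or $i+1=n$, and the dominated case via Lemma \ref{obspig2} with the exchange at $p=\ell(v_{i+1})$, using the edge $v_iv_{i+1}$ to witness non-isolation) match the paper's Case 3 essentially verbatim.
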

\begin{proof}
We prove this by induction on $r$. Let $r=1$. If $d_G(v_1)\geq 2$, then $D_1=\{v_{\ell(v_1)}\}$ and by Lemma \ref{pig1}, we conclude that there is a minimum \ntd-set of $G$ containing $D_1$. If $v_1$ is a pendant vertex of $G$, then $D_1=\{v_2,v_{\ell(v_2)}\}$ or $D_1=\{v_1,v_{\ell(v_3)}\}$. Then by Lemma \ref{pig2}, we conclude that there is a minimum \ntd-set of $G$ containing $D_1$. So the base case of the induction is true. Now assume that the induction hypothesis is true for $s=r-1$, i.e., $D_{r-1}$ is contained in some minimum \ntd-set, say $D^*$ of $G$.

Let $v_i$ be the vertex considered at the $r$-th iteration of the algorithm.  We need to consider the following cases.

\noindent\textbf{Case 1:}  $v_i$ is not dominated by $D_{r-1}$ and $d_{G_i}(v_i)\geq 2$.

In this case, $D_r=D_{r-1}\cup \{v_{\ell(v_i)}\}$.

\begin{claim}\label{pigc2-claim1}
Either $(D^*\setminus D_{r-1})\subseteq \{v_i,v_{i+1},\ldots,v_n\}$ or $D^*$ can be modified to a minimum \ntd-set $D^{**}$ of $G$ such that $(D^{**}\setminus D_{r-1})\subseteq \{v_i,v_{i+1},\ldots,v_n\}$.
\end{claim}
\begin{proof}[Proof of Claim \ref{pigc2-claim1}:]
Let $S_{D^*}=\{v\in \{v_1,v_2,\ldots,v_{i-1}\}: v\in D^*\setminus D_{r-1}\}$.

\begin{subclaim}\label{pigc2-subclaim1}
$N_G[S_{D^*}]\cap V(G_i)\subseteq N_{G_i}[x]$, where $x\in N_{G_i}[v_i]$.
\end{subclaim}

\begin{proof}[Proof of Claim \ref{pigc2-subclaim1}:]
If $S_{D^*}=\emptyset$, then we are done. So assume that $S_{D^*}\neq \emptyset$. Let $v_a\in S_{D^*}$ be arbitrary. By the choice of $S_{D^*}$, we have $a<i$. So by Observation \ref{obsprelim5}(a), $N_G[v_a]\cap V(G_i)\subseteq N_{G_i}[v_i]$. Again by Observation \ref{obsprelim5}(b), $N_{G_i}[v_i]\subseteq N_{G_i}[x]$, and so $N_G[v_a]\cap V(G_i)\subseteq N_{G_i}[v_i]\subseteq N_{G_i}[x]$. Since $v_a$ is arbitrary, we have $N_G[S_{D^*}]\cap V(G_i)\subseteq N_{G_i}[x]$.
\end{proof}
If $S_{D^*}=\emptyset$, then the statement of Claim \ref{pigc2-claim1} follows. So assume that $S_{D^*}\neq \emptyset$. Let $v_a$ be the minimum indexed vertex such that $v_a\in S_{D^*}$.  

First assume that $N_{G_i}[v_i]\cap D^*=\emptyset$. If $|S_{D^*}|\geq 2$, then let $D'=(D^*\setminus S_{D^*})\cup \{v_{i+1}\}$. Since $v_{i+1}\in N_{G_i}[v_i]$, by Claim \ref{pigc2-subclaim1},  $N_G[S_{D^*}]\cap V(G_i)\subseteq N_{G_i}[v_{i+1}]$. Since $d_{G_i}(v_i)\geq 2$ and $D_{r-1}$ is a \ntd-set of $G[\{v_1,v_2,v_3,\ldots,v_{i-1}\}]$ by Corollary \ref{obspig4}, $D'$ is a smaller \ntd-set of $G$ than $D^*$. This is a contradiction. So $|S_{D^*}|=1$ and hence in particular $S_{D^*}=\{v_a\}$. Now $D^{**}=(D^*\setminus\{v_a\})\cup \{v_{i+1}\}$ is a minimum \ntd-set of $G$ such that $(D^{**}\setminus D_{r-1})\subseteq \{v_i,v_{i+1},\ldots,v_n\}$.

Now assume that $N_{G_i}[v_i]\cap D^*\neq\emptyset$. Let $v_b\in N_{G_i}[v_i]\cap D^*$. By Claim \ref{pigc2-subclaim1}, $N_G[S_{D^*}]\cap V(G_i)\subseteq  N_{G_i}[v_b]$. If $|N_{G_i}[v_i]\cap D^*|\geq 2$, then, since  $D_{r-1}$ is a \ntd-set of $G[\{v_1,v_2,\ldots,v_{i-1}\}]$ by Corollary \ref{obspig4}, $D^*\setminus S_{D^*}$ is a smaller \ntd-set of $G$. This is a contradiction. So $N_{G_i}[v_i]\cap D^*=\{v_b\}$. Now let $D'=(D^*\setminus S_{D^*})\cup\{v_{b'}\}$, where $v_{b'}\in N_{G_i}[v_i]\setminus D^*$. Since $d_{G_i}(v_i)\geq 2$, such a vertex $v_{b'}$ exists. Since $N_G[S_{D^*}]\cap V(G_i)\subseteq  N_{G_i}[v_b]$, if $|S_{D^*}|\geq 2$, then $D'$ is a smaller \ntd-set of $G$. If $|S_{D^*}|=1$, then $D'$ is a minimum \ntd-set of $G$ such that $(D^{**}\setminus D_{r-1})\subseteq \{v_i,v_{i+1},\ldots,v_n\}$.
\end{proof}

By Claim \ref{pigc2-claim1}, without loss of generality, we can assume that $D^*$ is a minimum \ntd-set of $G$ such that $(D^*\setminus D_{r-1})\subseteq \{v_i,v_{i+1},\ldots,v_n\}$. Since  $D_{r-1}$ is a \ntd-set of $G[\{v_1,v_2,v_3,\ldots,v_{i-1}\}]$ by Corollary \ref{obspig4}, $v_i$ is not dominated by $D_{r-1}$, and $d_{G_i}(v_i)\geq 2$, $D^*\setminus D_{r-1}$ must be a \ntd-set of $G_i=G[\{v_i,v_{i+1},\ldots,v_n\}]$. Again none of the vertex from $D_{r-1}$ is adjacent to any vertex of $G_i$ (this due to the fact that $v_i$ is not dominated by $D_{r-1}$). Since $G_i$ is also a proper interval graph with a BCO $\sigma'=(v_i,v_{i+1},\ldots,v_n)$, by Lemma \ref{pig2}, we can get a minimum \ntd-set $D_2$ of $G_i$ such that $v_{\ell(v_i)}\in D_2$. Let $D_3=D_{r-1}\cup D_2$. Then $D_3$ is a \ntd-set of $G$ with $|D_3|=|D_{r-1}|+|D_2|\leq |D_{r-1}|+|D^*\setminus D_{r-1}|=|D^*|$. This implies that $D_3$ is a minimum \ntd-set of $G$ containing $D_{r-1}\cup \{v_{\ell(v_i)}\}$.

\noindent\textbf{Case 2:}  $v_i$ is not dominated by $D_{r-1}$ and $d_{G_i}(v_i)\leq 1$.

In this case $D_r=D_{r-1}\cup \{v_j\}$, where $j=\ell(v_i)$. If $v_j\in D^*$, then we are done. So assume that $v_j\notin D^*$. Let $v_a\in D^*$ be the minimum indexed vertex that dominates $v_i$. Since $v_i$ is not dominated by $D_{r-1}$, $v_a\notin D_{r-1}$. Moreover, since $d_{G_i}(v_i)\leq 1$, $j=i$ or $j=i+1$.

If $j=i$, then $i=n$. Since $D_{r-1}$ is a \ntd-set of $G[\{v_1,v_2,\ldots,v_{n-1}\}]$ by Corollary \ref{obspig4} and $v_a\notin D_{r-1}$, $(D^*\setminus\{v_a\})\cup\{v_j\}$ is a minimum \ntd-set of $G$ containing $D_{r-1}\cup\{v_j\}$.

If $j=i+1$, then $a\leq i$. Now let $D'=(D^*\setminus\{v_a\})\cup \{v_j\}$. By Observation \ref{obsprelim5}, $N_G[v_a]\cap V(G_i)\subseteq N_{G_i}[v_i]\subseteq N_{G_i}[v_{j}]$. So $D'$ is a dominating set of $G$. Moreover, if there is an isolated vertex in $G[N_G(D')]$, then it is a neighbor of $v_j$ that appears after $j$. So as we have proved in the proof of Lemma \ref{pig2}, we can show considering $|N_{G_j}(v_j)|$ that either $D'$ is a minimum \ntd-set of $G$ or there exists a minimum \ntd-set of $G$ containing $D_{r-1}\cup \{v_j\}$. This is true since no vertex of $D_{r-1}$ will be removed while modifying $D'$.

\noindent\textbf{Case 3:} $v_i$ is dominated by $D_{r-1}$.

By Lemma \ref{obspig2}, $|N_G(v_i)\cap \{v_1,v_2,\ldots,v_{i-1}\}|=1$, $v_{i-1}\in D_{r-1}$, $N_G(v_{i-1})\cap D_{r-1}=\emptyset$, and $v_{i-2}v_i,v_{i-1}v_{i+1}\notin E(G)$. Since $D^*$ is a \ntd-set of $G$, $v_i$ is not an isolated vertex in $G[N_G(D^*)]$.

If $i=n$ or $i+1=n$, then in this case $D_r=D_{r-1}\cup\{v_i\}$. If $v_i\in D^*$, then we are done. So assume that $v_i\notin D^*$. If $i=n$, then, since $|N_G(v_i)\cap \{v_1,v_2,\ldots,v_{i-1}\}|=1$ and $N_G(v_{i-1})\cap D_{r-1}=\emptyset$, there exists a vertex $v_a\in N_G(v_{i-1})$ such that $v_a\in D^*$. Since the vertex $v_i$ is considered at $r$-th iteration, $v_a\notin D_{r-1}$. Now let $D'=(D^*\setminus\{v_a\})\cup\{v_i\}$. Since $D_{r-1}$ is a \ntd-set of $G[\{v_1,v_2,\ldots,v_{i-1}\}]$ by Corollary \ref{obspig4} and $i=n$, $D'$ is a minimum \ntd-set of $G$ containing $D_{r-1}\cup\{v_i\}$. If $i+1=n$, then, since $v_{i-1}v_{i+1}\notin E(G)$ and $v_i\notin D^*$, $v_{i+1}$ must be in $D^*$ to dominate $v_{i+1}$. If $v_{i+1}\in D_{r-1}$, then there must be a vertex $v_b$ for which $v_{i+1}$ is chosen by the algorithm. Since $v_i$ is considered by the algorithm at the $r$-th iteration, $b<i$. This implies that $v_{i-1}v_{i+1}\in E(G)$ since $v_{i+1}$ is a simplicial vertex in $G[\{v_1,v_2,\ldots,v_{i+1}\}]$. This is a contradiction and so $v_{i+1}\notin D_{r-1}$. Now $(D^*\setminus\{v_{i+1}\})\cup\{v_i\}$ is a minimum \ntd-set of $G$ containing $D_{r-1}\cup\{v_i\}$.

Now assume that $i\neq n$ and $i+1\neq n$. Then $\ell(v_{i+1})$ exists and in this case $D_r=D_{r-1}\cup\{v_p\}$, where $p=\ell(v_{i+1})$. If $v_p\in D^*$, then we are done. So assume that $v_p\notin D^*$. Since $v_i$ is dominated by $D_{r-1}$, by Lemma \ref{obspig2}, $v_{i-2}v_{i},v_{i-1}v_{i+1}\notin E(G)$. Let $v_c\in D^*$ be the minimum indexed vertex that dominates $v_{i+1}$. Since  $v_{i-1}v_{i+1}\notin E(G)$, $i\leq c\leq \ell(v_{i+1})$. Moreover, $v_c\notin D_{r-1}$. Now let $D''=(D^*\setminus\{v_c\})\cup \{v_p\}$. By Observation \ref{obsprelim5}(b), $N_{G_i}[v_i]\subseteq N_{G_i}[v_{i+1}]$ and $N_{G_{i+1}}[v_{i+1}]\subseteq N_{G_{i+1}}[v_{p}]$. This implies that $N_{G_{i+1}}[v_{c}]\subseteq N_{G_{i+1}}[v_{p}]$. Again since $v_i$ is dominated by $D_{r-1}$, $D''$ is a dominating set of $G$. Moreover, no vertex from $\{v_i,v_{i+1},\ldots,v_{p-1}\}$ is an isolated vertex in $G[N_G(D'')]$. So if there is an isolated vertex in $G[N_G(D'')]$, then it is a neighbor of $v_p$ that appears
after $p$. So as we have proved in the proof of Lemma \ref{pig2}, we can show considering $|N_{G_p} (v_p)|$ that either $D''$ is a minimum \ntd-set of $G$ or there exists a minimum \ntd-set of $G$ containing $D_{r-1}\cup \{v_p\}$. This is true since no vertex of $D_{r-1}$ will be removed
while modifying $D''$.

In each of the cases, we conclude that there is a minimum \ntd-set of $G$ containing $D_r$. This completes the proof of the lemma.\end{proof}

The proof of the correctness follows from Lemma \ref{pigcorrect1} and Lemma \ref{pigcorrect2}. We now discuss the running time of the algorithm. Suppose $G=(V,E)$ is a connected proper interval with $|V|=n$ and $|E|=m$. A BCO $\sigma=(v_1,v_2,\ldots,v_n)$ of a connected proper interval can be computed in $O(n+m)$ time \cite{panda-bco}. To implement the algorithm efficiently, we maintain an array \textsc{FDeg} and preprocess it. Given a BCO $\sigma=(v_1,v_2,\ldots,v_n)$ of $G$, for every vertex $v_i$ of $G$, \textsc{FDeg}[$i$] denotes the value $|N_G(v_i)\cap \{v_i,v_{i+1},\ldots,v_n\}|$. We compute  \textsc{FDeg}[$i$] for each $1\leq i\leq n$ as follows: Initially \textsc{FDeg}[$i$]=$d_G(v_i)$ for each $1\leq i\leq n$. We traverse the BCO $\sigma$ of $G$ and while traversing $\sigma$, for each $i,1\leq i\leq n$, \textsc{FDeg}[$j$] is decremented by $1$ for every $v_j\in N_{G_i}(v_i)$. So the array \textsc{FDeg} can be maintained in $O(n+m)$ time. We also maintain an array \textsc{Dom} to track whether a vertex is dominated or not by the so far constructed set $D$. Initially \textsc{Dom}[$i$]=$0$ for each $1\leq i\le n$. Once a vertex $v$ is selected into $D$ by the algorithm, then \textsc{Dom}[$j$] is made $1$ for every vertex $v_j\in N_G[v]$. So throughout the algorithm, the update of \textsc{Dom} can be done in $O(n+m)$ time. To store the value $\ell(v_i)$ for each $i, 1\leq i\leq n$, we maintain an array $L$, where $L[i]=j$ if $j=\max\{\{i\}\cup \{k:v_iv_k\in E\}\}$. This can be done in $O(n+m)$ time. So to find $\ell(v_{\ell(v_i)})$ for the vertex $v_i$ will take $O(1)$ time by looking into the array $L$.  Notice that in the algorithm, conditions like ``$|N_G(v_j)\cap \{v_j,v_{j+1},\ldots,v_n\}|\geq 2$" and ``$v_{j-1}v_{j+1}\in E$" are checked to decide the next iteration of the algorithm. These can be checked in constant time by looking the value of the vertices in the array \textsc{FDeg}. So \textsc{MNTDS-PIG($G$)} can be executed in $O(n+m)$ time. Therefore, we have the following theorem.

\begin{theorem}
A minimum \ntd-set of a given proper interval graph can be computed correctly in linear time.
\end{theorem}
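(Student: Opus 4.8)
The plan is to assemble the theorem from the correctness and complexity facts already established for \textsc{MNTDS-PIG($G$)}. First I would reduce to the connected case. A proper interval graph $G$ without isolated vertices decomposes into connected proper interval components $C_1,\dots,C_t$, a minimum \ntd-set of $G$ is the disjoint union of minimum \ntd-sets of the $C_s$, and the components (together with their induced orderings) can be extracted in $O(|V|+|E|)$ time. Components with at most two vertices are handled directly by Observation \ref{obsprelim1}, so it suffices to treat a single connected proper interval graph $H$ on at least three vertices and to show that \textsc{MNTDS-PIG($H$)} returns a minimum \ntd-set in time linear in $|V(H)|+|E(H)|$.

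For correctness I would invoke the two structural lemmas proved above rather than re-deriving anything. Lemma \ref{pigcorrect1} shows that the output $D_k$ is a \ntd-set of $H$; this rests on Corollary \ref{obspig4}, which guarantees that once the scan reaches index $i$, every earlier vertex is dominated and non-isolated in $H[N_H(D_{r-1})]$. Lemma \ref{pigcorrect2} establishes the loop invariant that for every iteration $r$ the partial set $D_r$ is contained in \emph{some} minimum \ntd-set of $H$. Specializing to $r=k$ gives a minimum \ntd-set $D^{\ast}\supseteq D_k$; since $D_k$ is itself a \ntd-set we have $|D_k|\le |D^{\ast}|$, and minimality of $D^{\ast}$ forces $|D_k|=|D^{\ast}|$, hence $D_k=D^{\ast}$. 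Thus $D_k$ is minimum, and no argument beyond stitching these two lemmas together is required.

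For the running time I would show that after an $O(|V(H)|+|E(H)|)$ preprocessing phase every primitive operation costs $O(1)$. A BCO $\sigma$ is computed in linear time \cite{panda-bco}. I would precompute the array $L$ with $L[i]=\ell(v_i)$, so that the repeatedly used value $\ell(v_{\ell(v_i)})=L[L[i]]$ is read with two lookups; the array \textsc{FDeg} with $\textsc{FDeg}[i]=|N_H(v_i)\cap\{v_i,\dots,v_n\}|$, which (using that neighborhoods are open) equals $|N_H(v_i)\cap V(G_{i+1})|$ and lets the tests ``$|N_H(v_j)\cap V(G_{j+1})|\ge 2$'' and ``$v_{j-1}v_{j+1}\in E$'' be decided in constant time; and the boolean array \textsc{Dom} tracking domination, whose total update cost over the entire run is bounded by $\sum_{v}(d_H(v)+1)=O(|V(H)|+|E(H)|)$ since each selected vertex marks only its closed neighborhood. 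Finally, the scan index $i$ is non-decreasing and, by Observation \ref{obspig2-1}, strictly advances past each chosen vertex, so the number of iterations is at most $n$ and each does $O(1)$ work apart from the amortized \textsc{Dom} updates. Summing over the components yields the claimed $O(|V|+|E|)$ bound.

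The part I expect to be most delicate is not this bookkeeping but the invariant of Lemma \ref{pigcorrect2}: the exchange arguments that replace an optimal solution's choices by the greedy vertices $v_{\ell(v_i)}$, or by the pendant-driven pairs of Lemma \ref{pig1}, while simultaneously preserving domination and the no-isolated-vertex condition. The real obstacle lives in the path case $|N_{G_j}(v_j)|=1$, where a single greedy choice can leave a downstream neighbor isolated; handling it cleanly needs the residue-modulo-$3$ case analysis of Claims \ref{pig2claim1.5}--\ref{pig2claim1.7}. Once that invariant is secured, the theorem follows immediately by the two-lemma assembly together with the linear-time preprocessing described above.
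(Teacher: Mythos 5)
Your proposal is correct and follows essentially the same route as the paper: correctness is obtained by combining Lemma \ref{pigcorrect1} (the output $D_k$ is a \ntd-set) with the invariant of Lemma \ref{pigcorrect2} (each $D_r$ lies in some minimum \ntd-set), and linearity follows from the $O(n+m)$ BCO computation together with the precomputed arrays \textsc{FDeg}, \textsc{Dom}, and $L$ giving constant-time tests per iteration. Your explicit reduction to connected components and the final step $D_k\subseteq D^{*}$ with $|D_k|=|D^{*}|$ are minor elaborations of details the paper leaves implicit, not a different argument.
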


\section{Hardness results}\label{sec_hard}

In this section, we discuss the lower bound and the upper bound of approximation ratio for \textsc{Min-NTDS} in graphs. For most of the terminologies used in this section, we refer to \cite{ausileo}.

\subsection{Lower bound on the approximation ratio}

To obtain the lower bound on \textsc{Min-NTDS}, we need the following result on \textsc{Min-Dom-Set}.

\begin{theorem}[\cite{chlebik,feige}]\label{hardness-thm1}
\textsc{Min-Dom-Set} cannot be approximated
within $(1-\varepsilon)\ln |V |$ for any $\varepsilon>0$, unless \textsf{NP $\subseteq$ DTIME($n^{O(\log\log n)}$)}.
\end{theorem}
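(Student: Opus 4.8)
The plan is to obtain this threshold by transferring the well-known inapproximability of the \textsc{Set Cover} problem to \textsc{Min-Dom-Set} through a classical approximation-preserving reduction, exploiting the essential equivalence of the two problems. The starting point is Feige's theorem that \textsc{Set Cover} over a universe of $n$ elements cannot be approximated within a factor of $(1-\varepsilon)\ln n$ for any $\varepsilon>0$ unless \textsf{NP $\subseteq$ DTIME($n^{O(\log\log n)}$)}; the task is to show that an algorithm beating $(1-\varepsilon)\ln|V|$ for \textsc{Min-Dom-Set} would yield such an algorithm for \textsc{Set Cover}.

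First I would fix a \textsc{Set Cover} instance given by a universe $U$ with $|U|=n$ and a family $\mathcal{S}=\{S_1,\ldots,S_m\}$ of subsets whose union is $U$, and build a graph $G$ on the vertex set $U\cup\mathcal{S}$: each element $u\in U$ is joined to every set $S_i$ containing it, and $\mathcal{S}$ is turned into a clique so that a single set-vertex dominates all of $\mathcal{S}$. The key structural observation is that any dominating set of $G$ can be converted, without increasing its size, into one consisting only of set-vertices, by replacing each chosen element-vertex with some set containing it; such a vertex set dominates $U$ precisely when the corresponding subfamily covers $U$. Conversely a set cover of size $k$ is at once a dominating set of size $k$. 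Hence the minimum dominating set of $G$ and the minimum set cover have the same cardinality, and any $\alpha$-approximation for one is an $\alpha$-approximation for the other.

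It then remains to re-express the hardness factor in terms of $|V(G)|$ rather than the universe size $n$. Since $|V(G)|=n+m$ and, after discarding redundant sets or using the normalized instances underlying Feige's construction, $m$ is polynomially bounded in $n$, we have $\ln|V(G)|=(1+o(1))\ln n$; thus an approximation of \textsc{Min-Dom-Set} within $(1-\varepsilon)\ln|V(G)|$ would give an approximation of \textsc{Set Cover} within $(1-\varepsilon')\ln n$ for some $\varepsilon'>0$, contradicting Feige's theorem under the stated assumption. The step I expect to be the main obstacle is exactly this bookkeeping of the logarithmic factor: because the threshold is only logarithmic in the instance size, the polynomial blow-up from $n$ to $n+m$ vertices must be controlled with care to ensure the constant $(1-\varepsilon)$ survives, which is where the refined accounting of Chlebík and Chlebíková is invoked rather than the reduction alone.
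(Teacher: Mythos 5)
This theorem is not proved in the paper at all: it is quoted from the literature, with the set-cover threshold due to Feige and the transfer to domination-type statements handled in Chleb\'{\i}k--Chleb\'{\i}kov\'{a}. Your reduction is indeed the standard route those references rest on, and its combinatorial core is sound: in the incidence graph with a clique on the set-vertices, every dominating set can be pushed onto set-vertices without increase of size (each element-vertex is adjacent only to sets containing it, so replacing it by any such set preserves domination of the elements, and the clique takes care of the set side), giving $\gamma(G)=\mathrm{opt}_{\textsc{SetCover}}$ exactly.

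The genuine gap is in your logarithmic bookkeeping, which is the only step where the theorem could fail. From ``$m$ is polynomially bounded in $n$'' you conclude $\ln|V(G)|=(1+o(1))\ln n$; this does not follow. If the hard instances have $m=n^{c}$ for a fixed $c>1$, then $\ln|V(G)|=\ln(n+m)=(c+o(1))\ln n$, and an algorithm with ratio $(1-\varepsilon)\ln|V(G)|$ on $G$ only yields a ratio of roughly $\frac{1-\varepsilon}{1}\cdot c\,\ln n$ for set cover --- no contradiction with Feige's bound, since that ratio exceeds $\ln n$. Put differently, polynomial blow-up preserves $\Omega(\ln |V|)$-hardness but destroys the tight constant $(1-\varepsilon)$. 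To keep the constant you need hard set-cover instances satisfying $m\le n^{1+o(1)}$ (for instance $m\le n$), which is a specific quantitative property of the parameters in Feige's proof system, secured by the refined analysis you cite; it is emphatically not obtainable by ``discarding redundant sets,'' which in general cannot reduce the family size below the universe size. So your final paragraph correctly identifies where the difficulty sits, but the argument you actually give there is invalid as stated, and the invocation of Chleb\'{\i}k--Chleb\'{\i}kov\'{a} is carrying the entire weight of the theorem rather than patching a minor detail. A correct write-up must either verify $m\le n^{1+o(1)}$ in the gap instances or cite a version of the set-cover hardness stated with that guarantee.
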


\begin{theorem}\label{hardness-coro1}
Let $G=(V,E)$ be a graph with $n$ vertices. Unless \textsf{NP $\subseteq$ DTIME($n^{O(\log\log n)}$)}, \textsc{Min-NTDS} cannot be approximated
within a factor of $(1-\varepsilon) \ln n$ for any $\varepsilon>0$.
\end{theorem}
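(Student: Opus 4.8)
The plan is to give a gap-preserving (approximation-preserving) reduction from \textsc{Min-Dom-Set}, for which the very same inapproximability is recorded in Theorem \ref{hardness-thm1}. The difficulty is that the gadget reduction of Lemma \ref{npc-lem1} is useless here: it gives $\gamma_{\rm nt}(G')=\gamma(G)+2n$, an \emph{additive} shift by $\Theta(n)$, and since the hard instances of \textsc{Min-Dom-Set} have $\gamma(G)$ much smaller than $n$ (anything with $\gamma=\Theta(n)$ is $O(1)$-approximable), this additive term swamps the optimum and collapses any logarithmic gap. So the key point is to find a reduction $G\mapsto H$ under which $\gamma_{\rm nt}(H)$ equals $\gamma(G)$ up to a constant factor, i.e. a genuinely \emph{multiplicative} correspondence.

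First I would take $H=G[K_3]$, the graph obtained by replacing each vertex $v$ of $G$ by a triangle $T_v=\{(v,1),(v,2),(v,3)\}$ and joining $T_u$ to $T_v$ completely whenever $uv\in E(G)$; thus $N_H[(v,i)]=N_G[v]\times\{1,2,3\}$ for every $i$, and $|V(H)|=3n$. The heart of the matter is that every vertex of $H$ lies in a triangle, namely its own $T_v$, so by the argument of Lemma \ref{obspig1} --- whose proof uses only that the vertex has two adjacent neighbours and is therefore valid for arbitrary graphs --- every dominating set of $H$ is already a \ntd-set. Consequently $\gamma_{\rm nt}(H)=\gamma(H)$.

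Next I would show $\gamma(H)=\gamma(G)$. For one inequality, lifting a minimum dominating set $D$ of $G$ to $\{(v,1):v\in D\}$ dominates $H$, because $N_H[(v,1)]=N_G[v]\times\{1,2,3\}$ and $D$ dominates $G$. For the reverse, projecting any dominating set $D_H$ of $H$ to $\pi(D_H)=\{v:(v,i)\in D_H\text{ for some }i\}$ yields a dominating set of $G$ with $|\pi(D_H)|\le |D_H|$. Hence $\gamma_{\rm nt}(H)=\gamma(H)=\gamma(G)$, and moreover any \ntd-set of $H$, being in particular a dominating set, projects to a dominating set of $G$ of no larger size.

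Finally I would transfer the hardness. Assume, for contradiction, a polynomial-time algorithm approximating \textsc{Min-NTDS} within $(1-\varepsilon)\ln N$ on $N$-vertex inputs, for some fixed $\varepsilon>0$. Running it on $H$ (so $N=3n$) returns a \ntd-set of size at most $(1-\varepsilon)\ln(3n)\,\gamma_{\rm nt}(H)=(1-\varepsilon)\ln(3n)\,\gamma(G)$, whose projection is a dominating set of $G$ of size at most $(1-\varepsilon)\ln(3n)\,\gamma(G)$. Since $\ln(3n)=(1+o(1))\ln n$, for all sufficiently large $n$ this is at most $(1-\varepsilon/2)\ln n\cdot\gamma(G)$, contradicting Theorem \ref{hardness-thm1} applied with $\varepsilon'=\varepsilon/2$. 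The main obstacle is precisely the first step: realising that one must avoid additive gadget overhead and instead use the triangle blow-up, which renders the \ntd\ constraint vacuous while preserving the domination optimum multiplicatively; once that is in place, the projection bound and the $\ln(3n)$ versus $\ln n$ estimate are routine.
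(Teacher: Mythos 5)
Your proof is correct, and while it follows the same high-level plan as the paper --- an approximation-preserving reduction from \textsc{Min-Dom-Set}, whose $(1-\varepsilon)\ln n$ hardness is Theorem \ref{hardness-thm1} --- the construction at its core is genuinely different. The paper builds a graph $G'$ on $4n$ vertices from four copies $V_1,V_2,V_3,V_4$ of $V$, joining $V_1$ to $V_3$ and $V_2$ to $V_4$ according to closed neighborhoods in $G$ and making $G'[V_1\cup V_2]$ complete bipartite; it then establishes $\gamma_{\rm nt}(G')=2\gamma(G)$, with the lower bound requiring a normalization step (Claim \ref{hard-claim1}) showing that any \ntd-set of $G'$ can be rearranged to lie inside $V_1\cup V_2$, after which the smaller of its two halves projects to a dominating set of $G$. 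Your triangle blow-up $H=G[K_3]$ replaces that case analysis by a single structural observation --- correctly extracted from the proof of Lemma \ref{obspig1}, which indeed never uses the proper-interval hypothesis, only that the vertex has two adjacent neighbors --- namely that when every vertex lies in a triangle the neighborhood-total condition is vacuous, so every dominating set of $H$ is already a \ntd-set and $\gamma_{\rm nt}(H)=\gamma(H)=\gamma(G)$ exactly, with no normalization and no factor $2$. Each route buys something: the paper's heavier gadget produces a \emph{bipartite} (hence triangle-free) graph $G'$, so its proof implicitly establishes the logarithmic lower bound already for bipartite instances, which your construction inherently cannot do; your reduction, in exchange, is shorter, preserves the optimum with ratio exactly $1$, and is more careful at the end --- you explicitly absorb the $\ln(3n)$ versus $\ln n$ slack via $\varepsilon'=\varepsilon/2$ for large $n$ (small $n$ being trivial by brute force), a bookkeeping point the paper silently elides with its own $\ln(4n)$. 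Your opening diagnosis is also apt: the gadget of Lemma \ref{npc-lem1} gives only the additive relation $\gamma_{\rm nt}(G')=\gamma(G)+2n$ and therefore cannot carry a multiplicative gap, which is exactly why both you and the paper switch to a construction with a purely multiplicative correspondence of optima.
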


\begin{figure}[h]
\begin{center}
\includegraphics[scale=.8]{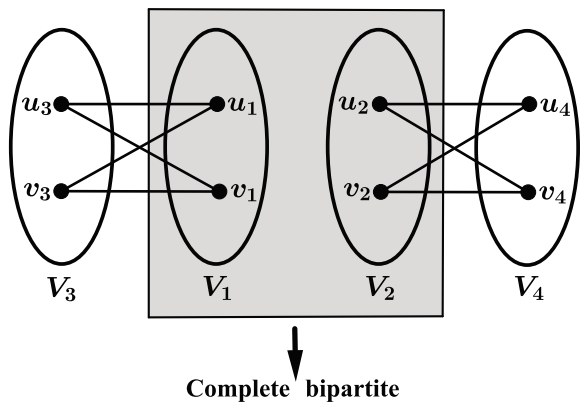}
\caption{\label{hard1} The construction of $G'$ used in Theorem \ref{hardness-coro1}}
\end{center}
\end{figure}

\begin{proof}
We prove this theorem by providing an approximation preserving reduction from \textsc{Min-Dom-Set} to \textsc{Min-NTDS}. Given a graph $G=(V,E)$, an instance of \textsc{Min-Dom-Set}, we construct a graph $G'=(V',E')$ as follows:
\begin{itemize}
\item[(1)] We take four copies $V_1,V_2,V_3$ and $V_4$ of the vertex set $V$ of $G$.
\item[(2)] For every edge $uv\in E$, we introduce the edges of the form $u_1v_3,u_3v_1,u_1u_3,v_1v_3,
u_2v_4,u_4v_2,u_2u_4,v_2v_4$. In other words, each vertex $x\in V_1$ (resp. $x\in V_2$) is made adjacent to a vertex $y\in V_3$ (resp. $y\in V_4$) if $y\in N_G[x]$.

\item[(3)] We introduce edges to make $G'[V_1\cup V_2]$ complete bipartite.
\end{itemize}
Notice that $G'$ is a bipartite graph with partite sets $V_1\cup V_4$ and $V_2\cup V_3$.

If $S$ is a dominating set of  $G$, then $ND=\{u_1,u_2:u\in S\}$ is a \ntd-set of $G'$ with $|ND|=2|S|$.  So in particular, we have $\gamma_{\rm nt}(G')\leq 2\gamma(G)$.

\begin{claim}\label{hard-claim1}
If $D'$ is a \ntd-set of $G'$, then either $D'\subseteq V_1\cup V_2$ or there is a \ntd-set $D_1$ of $G'$ with $|D_1|=|D'|$ such that $D_1\subseteq V_1\cup V_2$.
\end{claim}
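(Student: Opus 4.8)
The plan is as follows. If $D'\subseteq V_1\cup V_2$ there is nothing to prove, so assume $D'$ meets $V_3\cup V_4$. Write $P=D'\cap V_1$, $Q=D'\cap V_2$, $R=D'\cap V_3$, $S=D'\cap V_4$, and define the \emph{projection}
\[ D_1 \;=\; P\cup Q\cup\{x_1: x_3\in R\}\cup\{x_2: x_4\in S\}\subseteq V_1\cup V_2, \]
which replaces every vertex of $V_3$ (resp.\ $V_4$) in $D'$ by its namesake in $V_1$ (resp.\ $V_2$) and keeps $P\cup Q$. I would first record the adjacencies of $G'$ that drive everything: each vertex of $V_1$ is adjacent to \emph{all} of $V_2$ and conversely; $x_1\sim y_3$ and $x_2\sim y_4$ exactly when $y\in N_G[x]$; and $V_3,V_4$ are independent sets whose only neighbours lie in $V_1,V_2$ respectively. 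It then remains to verify three things: $D_1$ dominates $G'$, the graph $G'[N_{G'}(D_1)]$ has no isolated vertex, and $|D_1|\le|D'|$.

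For domination, the vertices of $V_3$ and $V_4$ are easy: if $x_3\in R$ then its namesake $x_1\in D_1$ dominates $x_3$ (since $x\in N_G[x]$), while if $x_3\notin R$ it was dominated in $D'$ by some $y_1\in P\subseteq D_1$; the $V_4$ case is symmetric. The delicate case is $V_1$ (and, symmetrically, $V_2$). Here I would argue by a dichotomy on $Q$: if $Q\neq\emptyset$ then, by complete bipartiteness, $Q\subseteq D_1$ dominates all of $V_1$; and if $Q=\emptyset$ then, since $D'$ dominates every vertex of $V_4$ and those vertices have neighbours only in $V_2$, each $x_4$ must itself lie in $S$, so $S=V_4$ and hence $\{x_2:x_4\in S\}=V_2\subseteq D_1$, which again dominates all of $V_1$. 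The symmetric statement (using $P$ and $R$) dominates $V_2$, so $D_1$ is a dominating set.

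For the neighbourhood-total condition the main point is that $D_1$ meets both sides. If $D_1\cap V_1=\emptyset$ then $P=\emptyset$ and $R=\emptyset$; but $P=\emptyset$ forces, via domination of $V_3$, that $R=V_3\neq\emptyset$, a contradiction, so $D_1\cap V_1\neq\emptyset$, and symmetrically $D_1\cap V_2\neq\emptyset$. Consequently $V_1\cup V_2\subseteq N_{G'}(D_1)$, since a vertex of $D_1\cap V_1$ puts all of $V_2$ into the open neighbourhood and a vertex of $D_1\cap V_2$ puts all of $V_1$ there. I then check non-isolation type by type: a vertex of $V_1$ lying in $N_{G'}(D_1)$ has all of $V_2\subseteq N_{G'}(D_1)$ among its neighbours (and symmetrically for $V_2$); and a vertex $z_3\in V_3\cap N_{G'}(D_1)$ must be adjacent to some $y_1\in D_1\cap V_1$ (its only possible neighbours are in $V_1$), and that $y_1$ itself lies in $N_{G'}(D_1)$ because $D_1\cap V_2\neq\emptyset$, so $z_3$ is not isolated; $V_4$ is symmetric. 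Hence $G'[N_{G'}(D_1)]$ has no isolated vertex and $D_1$ is a \ntd-set.

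Finally, the replacements $x_3\mapsto x_1$ and $x_4\mapsto x_2$ are injective, so $|D_1|\le|P|+|R|+|Q|+|S|=|D'|$; if the inequality is strict (because a namesake already lay in $P$ or $Q$) I would pad $D_1$ with arbitrary vertices of $V_1\cup V_2$ up to size $|D'|$, which preserves the \ntd-property because $D_1$ already meets both $V_1$ and $V_2$ (an added vertex of $V_1$ is adjacent to a vertex of $D_1\cap V_2$, so neither it nor any new $V_3$-neighbour it introduces can be isolated). The main obstacle is exactly the $V_1$/$V_2$ domination step: it is the one place where a \ntd-set could a priori rely on $V_3,V_4$ vertices, and the argument only closes through the forcing ``$Q=\emptyset\Rightarrow S=V_4$'', which makes the projected set refill the opposite side. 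The same meeting-both-sides property is what afterwards guarantees non-isolation, so establishing $D_1\cap V_1\neq\emptyset\neq D_1\cap V_2$ is the crux of the whole claim.
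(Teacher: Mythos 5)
Your proof is correct, and it takes a genuinely different route from the paper's. The paper normalizes $D'$ by \emph{iterated local exchanges}: it fixes one offending vertex $u_3\in D'\cap V_3$ at a time and splits into two cases according to whether $u_3$ has a neighbour in $V_1\setminus D'$ (swap $u_3$ for that neighbour) or all of $N_{G'}(u_3)\cap V_1$ already lies in $D'$ (drop $u_3$), in each case repairing the possible loss of the neighbourhood-total property via a second swap $u_4\mapsto u_2$ that exploits the complete bipartite graph on $V_1\cup V_2$; this is then repeated over all vertices of $D'\cap(V_3\cup V_4)$. You instead project the whole of $D'\cap V_3$ and $D'\cap V_4$ onto namesakes in $V_1$ and $V_2$ in a single step and verify the \ntd-property globally, the structural crux being that $D_1$ meets both $V_1$ and $V_2$ (via the forcing arguments $P=\emptyset\Rightarrow R=V_3$ and $Q=\emptyset\Rightarrow S=V_4$), which yields $V_1\cup V_2\subseteq N_{G'}(D_1)$ and makes all the non-isolation checks immediate. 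Your approach buys a cleaner, case-free argument with no iteration bookkeeping; the paper's buys exchanges that are individually cardinality-controlled. One small caveat concerns the exact equality $|D_1|=|D'|$ in the claim's statement: your padding step is valid (and you correctly verify that adding vertices of $V_1\cup V_2$ preserves the \ntd-property here, which is not automatic since the property is not monotone), but it is only possible when $|D'|\leq|V_1\cup V_2|=2n$; if $|D'|>2n$ no subset of $V_1\cup V_2$ can have size $|D'|$, so literal equality is unattainable. This is an imprecision in the paper itself rather than in your argument --- the paper's own Case 2 produces a strictly smaller set, so it too proves only $|D_1|\leq|D'|$ --- and the inequality is all that the subsequent reduction uses, which your projection delivers directly.
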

\begin{proof}[Proof of Claim \ref{hard-claim1}:]
Let $D'$ be a \ntd-set of $G'$. If $D'\subseteq V_1\cup V_2$, then we are done. So assume that $D'\cap (V_3\cup V_4)\neq \emptyset$. Let $u_3\in V_3\cap D'$, where $u_3$ is the copy of the vertex $u$ in $V_3$. We consider the following cases.

\noindent\textbf{Case 1:} There exists a vertex $z\in N_{G'}(u_3)\cap V_1$ such that $z\notin D'$.

Let $D_2=(D'\setminus\{u_3\})\cup\{z\}$. If $N_{G'}(z)\cap D_2\neq\emptyset$, then $D_2$ is also a \ntd-set of $G'$. If $N_{G'}(z)\cap D_2=\emptyset$, then $D_2\cap V_2=D'\cap V_2=\emptyset$ since $G'[V_1\cup V_2]$ is complete bipartite. Now let $u_4\in V_4$ be the copy of $u$ in $V_4$. Since $V_4$ is independent and  $D_2\cap V_2=D'\cap V_2=\emptyset$, $u_4$ must be in $D'$ (and hence in $D_2$) to dominate itself. Now $D_3=(D_2\setminus\{u_4\})\cup \{u_2\}$ is also a \ntd-set of $G'$.

\noindent\textbf{Case 2:} $(N_{G'}(u_3)\cap V_1)\subseteq D'$.

Let $z'\in N_{G'}(u_3)\cap V_1\cap D'$. If $N_{G'}(z')\cap D'\neq\emptyset$, then $D'\setminus\{u_3\}$ is also a \ntd-set of $G'$ with cardinality less that $|D'|$. If $N_{G'}(z')\cap D'=\emptyset$, then $D'\cap V_2=\emptyset$ since $G'[V_1\cup V_2]$ is complete bipartite. Now let $u_4\in V_4$ be the copy of $u$ in $V_4$. Since $V_4$ is independent and  $D'\cap V_2=\emptyset$, $u_4$ must be in $D'$ to dominate itself. Now $D_4=(D'\setminus\{u_4\})\cup \{u_2\}$ is also a \ntd-set of $G'$.

By Case 1 and Case 2, we conclude that there is a \ntd-set $D''$ of $G'$ such that $D''\cap V_3=\emptyset$. Similarly arguing for every vertex $y\in V_4\cap D'$, we can obtain a \ntd-set $D_1$ of $G'$ such that $D_1\subseteq V_1\cup V_2$. This completes the proof of the claim.
\end{proof}

By Claim \ref{hard-claim1}, we can find a \ntd-set $D'$ of $G'$ such that $D'\subseteq V_1\cup V_2$. Let $S_k=V_k\cap D'$ for $k=1,2$. Without loss of generality, assume that $|S_1|\leq |S_2|$. Let $S'=\{v:v_1\in S_1\}$. Then $|S'|\leq \frac{1}{2}|D'|$. Since $V_3\cup V_4$ is an independent set of $G'$ and $D'$ is a \ntd-set of $G'$, $S'$ is a dominating set of $G$. So in particular, we have $\gamma (G)\leq |S'|\leq \frac{1}{2}\gamma_{\rm nt}(G')$.

Assume that \textsc{Min-NTDS} can be approximated with a ratio $\alpha$ by using a polynomial time algorithm $A$.  Consider the following algorithm:

\begin{algorithm}[H]
{\relsize{-1}
\KwIn{A graph $G=(V,E)$, an instance of \textsc{Min-Dom-Set}\;}
\KwOut{A dominating set $S$ of $G=(V,E)$\;}

Construct a graph $G'=(V',E')$ as described in (1), (2), and (3)\;
Compute a \ntd-set $D$ of $G'$ using algorithm $A$\;
Compute $D_k=D\cap V_k$ for $k=1,2$. Let $|D_1|\leq |D_2|$\;
Let $S=\{v:v_1\in D_1\}$\;
\Return $S$\;
\caption{$\boldsymbol{A'}$}}
\end{algorithm}

Let $D^*$ and $S^*$ be the
minimum \ntd-set of $G'$ and the minimum dominating set of $G$, respectively. It is clear that $|D^*|=2|S^*|$ as we have seen in the proof.  Algorithm $A'$ can compute a dominating set  of cardinality $|S|\leq \frac{1}{2}|D| \leq \frac{1}{2}\alpha\cdot |D^*|=\frac{1}{2}\alpha\cdot 2|S^*|=\alpha\cdot |S^*|$. Hence, algorithm $A'$ approximates \textsc{Min-Dom-Set} within ratio $\alpha$. So by Theorem \ref{hardness-thm1}, unless \textsf{NP $\subseteq$ DTIME($n^{O(\log\log n)}$)}, \textsc{Min-NTDS} cannot be approximable within a factor of $(1-\varepsilon)\ln n$ for any $\varepsilon>0$. This completes the proof of the theorem.
\end{proof}

\subsection{Upper bound on the approximation ratio}

In this section, we present a $2(\ln (\Delta(G)+1)+1)$-factor approximation algorithm
for \textsc{Min-NTDS} in a given graph $G$. The main idea behind the algorithm is that given a graph $G=(V,E)$,
we first compute a dominating set $D$ of $G$ and then we try to include a set $S$ of vertices from $V\setminus D$ such that $|S|$ is as small as possible and  $D\cup S$ forms a \ntd-set
of $G$. 

\begin{algorithm}[H]
{\relsize{-1}
\KwIn{A connected graph $G=(V,E)$\;}
\KwOut{A \ntd-set of $G$\;}
$S=\emptyset$\;
Compute a dominating set $D$ of $G$ applying the procedure described in \citep{chavtal}\;
Let $D_1=\{v\in D: N_G(v)\cap D=\emptyset\}$, $D_2=\{v\in D: N_G(v)\cap D\neq \emptyset\}$ and $V'=V\setminus (D_1\cup N_G(D_2))$\;
\While{$(D_1\neq \emptyset)$}{Choose a vertex $u\in D_1$ and then choose a vertex $u'\in N_G(u)\cap V'$ (if exists) such that $N_G(u')\cap (V\setminus D)=\emptyset$\;
$S=S\cup \{u'\}$\;
$V'=V'\setminus N_G(A)$, where $A=D_1\cap N_G(u')$\;
$D_1=D_1\setminus N_G(u')$\;
}
\Return $D\cup S$\;

\caption{\label{approx-algo} \textsc{Approx-NTDS$(G)$}}}
\end{algorithm}

\begin{lemma}\label{approx1}
The constructed set $D\cup S$ is a \ntd-set of $G$.
\end{lemma}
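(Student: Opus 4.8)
The plan is to verify the two defining properties of a \ntd-set for the set $D\cup S$ returned by \textsc{Approx-NTDS$(G)$}: that it dominates $G$, and that $G[N_G(D\cup S)]$ has no isolated vertex. Domination is immediate: the set $D$ computed in Line 2 is a dominating set of $G$, and since $D\subseteq D\cup S$, the superset $D\cup S$ dominates $G$ as well. So the real content is the no-isolated-vertex condition.

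For that condition I would first record two simplifications. Because $N_G(D)\subseteq N_G(D\cup S)$, any vertex that is non-isolated in $G[N_G(D)]$ stays non-isolated in $G[N_G(D\cup S)]$; and by Observation \ref{obsprelim3}, applied to the dominating set $D\cup S$, every vertex of $D\cup S$ is already non-isolated in $G[N_G(D\cup S)]$. Hence it suffices to handle an arbitrary $w\in N_G(D\cup S)\setminus(D\cup S)$. Writing $D=D_1\cup D_2$, I have $N_G(D\cup S)=N_G(D_1)\cup N_G(D_2)\cup N_G(S)$, so $w$ lies in at least one of these three sets, and I would argue by cases.

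The crux is the loop invariant that \emph{every vertex ever belonging to $D_1$ is, at termination, adjacent to some vertex of $S$}. This holds because in each pass through the while loop the chosen $u'$ satisfies $u'\in N_G(u)$, hence $u\in N_G(u')$, and the update $D_1\leftarrow D_1\setminus N_G(u')$ deletes exactly the $D_1$-vertices adjacent to the freshly added vertex $u'\in S$; as the loop runs until $D_1=\emptyset$, each original $D_1$-vertex is deleted precisely when one of its neighbors enters $S$. Granting this, the cases resolve cleanly. If $w\in N_G(D_2)$, choose $v\in D_2$ with $wv\in E(G)$; since $v$ has a neighbor in $D$ we get $v\in N_G(D)\subseteq N_G(D\cup S)$, so $w$ is non-isolated. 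If $w\in N_G(S)$, choose $s\in S$ with $ws\in E(G)$; since $s$ was selected as a neighbor of some $D_1$-vertex, $s\in N_G(D)\subseteq N_G(D\cup S)$, so $w$ is non-isolated. Otherwise $w\in N_G(D_1)$, so $w$ is adjacent to some $u\in D_1$, and by the invariant $u\in N_G(S)\subseteq N_G(D\cup S)$, whence $w$ is non-isolated again. Combining the three cases with the two simplifications shows $G[N_G(D\cup S)]$ has no isolated vertex, so $D\cup S$ is a \ntd-set.

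I expect the loop invariant to be the main obstacle, since it requires reading off from the body of the while loop that the chosen $u'$ is genuinely adjacent to every $D_1$-vertex it removes and that the loop terminates with $D_1$ empty; the subsequent case analysis is then routine. I would also remark that the selection condition $N_G(u')\cap(V\setminus D)=\emptyset$ is irrelevant to this correctness argument (it matters only for the approximation ratio), because any neighbor of an $S$-vertex is automatically non-isolated through the $N_G(S)$ case above.
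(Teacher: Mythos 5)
Your reduction to vertices $w\in N_G(D\cup S)\setminus(D\cup S)$ via Observation \ref{obsprelim3}, and the cases $w\in N_G(D_2)$ and $w\in N_G(S)$, are all correct. The genuine gap is your loop invariant, which is false for the algorithm as written: the choice step reads ``choose a vertex $u'\in N_G(u)\cap V'$ \emph{(if exists)} such that $N_G(u')\cap (V\setminus D)=\emptyset$,'' and when no such $u'$ exists the iteration adds nothing to $S$, yet $u$ must still leave $D_1$ for the loop to terminate. So a vertex of the original $D_1$ can finish the run with no neighbor in $S$. Concretely, take $G$ the path $v_1v_2v_3v_4$ with $D=\{v_2,v_4\}$, so $D_1=\{v_2,v_4\}$, $D_2=\emptyset$, $V'=\{v_1,v_3\}$. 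If the loop processes $u=v_2$ and (legally) picks $u'=v_1$, then $V'$ becomes empty, and when $v_4$ is processed no eligible $u'$ exists; $v_4$ never acquires an $S$-neighbor. The output $\{v_1,v_2,v_4\}$ is still a \ntd-set, but your case $w\in N_G(D_1)$, which applies the invariant to an \emph{arbitrary} $D_1$-neighbor of $w$ (here $w=v_3$, $u=v_4$), does not go through as argued. The paper's own proof explicitly flags this possibility (``Notice that $x'$ may exist or may not exist''), observing that such an $x$ satisfies $x\notin N_G(D\cup S)$ and is therefore itself harmless.

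What your argument is missing is the treatment of the \emph{neighbors} of such an $x$, and this is precisely where the selection condition $N_G(u')\cap(V\setminus D)=\emptyset$ and the bookkeeping $V'=V'\setminus N_G(A)$ --- which you dismiss as irrelevant to correctness --- do real work. A vertex $w\notin D\cup S$ can be in danger only if every neighbor of $w$ lies in $D_1$ and none of those neighbors has an $S$-neighbor: a neighbor in $S$ puts $w$ in your safe $N_G(S)$ case, a neighbor in $D_2$ is in $N_G(D)$, and a neighbor outside $D\cup S$ is dominated by $D$, hence in $N_G(D)$. But such a $w$ satisfies exactly the selection condition, lies in $V'$ initially, and is removed from $V'$ only through $N_G(A)$, i.e., only once some $D_1$-neighbor of $w$ already has an $S$-neighbor. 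Hence when a $D_1$-neighbor $u$ of $w$ is processed, either $w\in V'$, so an eligible $u'$ exists and $u$ gets an $S$-neighbor, or $w$ has already left $V'$ and is safe for the reason just stated. Replacing your universal invariant by this weaker statement --- every endangered $w$ has \emph{some} $D_1$-neighbor adjacent to $S$ --- closes the gap; without it, the ``(if exists)'' clause defeats the proof. (Your remark would be correct for the simplified algorithm that attaches an arbitrary neighbor $u'\in N_G(u)$ to each $u\in D_1$, but that is not the algorithm whose output the lemma concerns.)
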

\begin{proof}
To prove that $D\cup S$ is a \ntd-set of $G$, we show that $G[N_G(D\cup S)]$ has no isolated vertex. Notice that $D\cup S=D_1\cup D_2\cup S$ and $G[N_G(D_2)]$ has no isolated vertex. Let $V'=V\setminus(D_1\cup N_G(D_2))$. For $x\in D_1$, let $x'\in N_G(x)\cap V'$ be a vertex such that $N_G(x')\cap (V\setminus D)=\emptyset$. Notice that $x'$ may exist or may not exist. If for some $x\in D_1$, there does not exist $x'$, then $x\notin G[N_G(D\cup S)]$. So by choosing the set $S$ constructed by the algorithm, we can see that $G[N_G(D_1\cup S)]$ has no isolated vertex. This implies that $D\cup S$ is a \ntd-set of $G$.
\end{proof}

Notice that all the steps of \textsc{Approx-NTDS($G$)} can be executed in polynomial time. Let $ND^*$ be a minimum \ntd-set of $G$. If $D^*$ is a minimum  dominating set of $G$,
then $|ND^*|\geq |D^*|$ by Observation \ref{obs1}. Notice that a dominating set $D$ of $G$ can be computed in polynomial time within an approximation ratio of $\ln (\Delta(G)+1)+1$ \cite{chavtal}. If $S=\emptyset$, then $D$ is also a \ntd-set of $G$ and hence $\frac{|D|}{|ND^*|}\leq \frac{|D|}{|D^*|}\leq \ln (\Delta(G)+1)+1<2 (\ln (\Delta(G)+1)+1)$. If $S\neq \emptyset$, then by Lemma \ref{approx1}, $D\cup S$ is a \ntd-set of $G$. Hence

$$\frac{|D\cup S|}{|ND^*|}=\frac{|D|+|S|}{|ND^*|}=\frac{|D|}{|ND^*|}+\frac{|S|}{|ND^*|}\leq
2\cdot\frac{|D|}{|ND^*|}\leq 2\cdot\frac{|D|}{|D^*|}.
$$

Since finding a minimum dominating set in $G$ can be approximated within a factor of $\ln (\Delta(G)+1)+1$ \cite{chavtal}, we have $\dfrac{|D\cup S|}{|ND^*|}\leq 2 (\ln (\Delta(G)+1)+1)$ and hence we have the following theorem.

\begin{theorem}\label{theo-approx}
\textsc{Min-NTDS} in a graph $G=(V,E)$ can be approximated within an approximation ratio of $2(\ln (\Delta(G)+1)+1)$.
\end{theorem}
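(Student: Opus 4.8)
The plan is to combine the correctness already established in Lemma \ref{approx1} with a counting argument bounding the size of the augmenting set $S$ against the size of the greedily computed dominating set $D$. First I would note that every step of \textsc{Approx-NTDS$(G)$}---computing the greedy dominating set, partitioning $D$ into $D_1$ and $D_2$, and executing the while loop---runs in polynomial time, so the algorithm is an efficient approximation algorithm, and its output $D\cup S$ is a \ntd-set of $G$ by Lemma \ref{approx1}.

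Next I would fix a minimum \ntd-set $ND^*$ and a minimum dominating set $D^*$ of $G$. Since every \ntd-set is in particular a dominating set, Observation \ref{obs1} gives $|ND^*|=\gamma_{\rm nt}(G)\ge \gamma(G)=|D^*|$. Because $D$ is produced by the greedy procedure of \cite{chavtal}, it satisfies $|D|\le (\ln(\Delta(G)+1)+1)\,|D^*|$. If the loop terminates with $S=\emptyset$, then $D$ is itself a \ntd-set and $|D\cup S|/|ND^*|=|D|/|ND^*|\le |D|/|D^*|\le \ln(\Delta(G)+1)+1<2(\ln(\Delta(G)+1)+1)$, so the bound holds trivially in this case.

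The substance of the argument lies in the case $S\neq\emptyset$, where the goal is to show $|S|\le |D|$. The key observation is structural: each iteration of the while loop adds at most one vertex $u'$ to $S$, while the update $D_1 = D_1\setminus N_G(u')$ removes from $D_1$ at least the chosen vertex $u$, which lies in $N_G(u')$ because $u'\in N_G(u)$. Hence the loop executes at most $|D_1|$ times and contributes at most $|D_1|$ vertices to $S$, giving $|S|\le |D_1|\le |D|$; this is precisely where the factor of $2$ enters. Consequently $|D\cup S|=|D|+|S|\le 2|D|$, and dividing by $|ND^*|$ while using $|ND^*|\ge|D^*|$ together with the greedy bound yields
\[
\frac{|D\cup S|}{|ND^*|}\le \frac{2|D|}{|ND^*|}\le \frac{2|D|}{|D^*|}\le 2\bigl(\ln(\Delta(G)+1)+1\bigr).
\]

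The main obstacle I expect is making the counting bound $|S|\le|D|$ fully rigorous: in particular, handling iterations in which the auxiliary vertex $u'$ fails to exist (so that no vertex is added to $S$ but $u$ must still be discarded from $D_1$), and confirming that no vertex is ever added to $S$ more than once. Both points are settled by the loop invariant that $D_1$ strictly shrinks at every iteration while $|S|$ grows by at most one, which simultaneously guarantees termination and the desired inequality.
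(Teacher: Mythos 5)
Your proposal is correct and follows essentially the same route as the paper: correctness via Lemma \ref{approx1}, the trivial case $S=\emptyset$, and the chain $\frac{|D\cup S|}{|ND^*|}\leq 2\frac{|D|}{|ND^*|}\leq 2\frac{|D|}{|D^*|}\leq 2(\ln(\Delta(G)+1)+1)$ using Observation \ref{obs1} and the greedy bound of \cite{chavtal}. If anything, you improve on the paper by making explicit the counting argument $|S|\leq|D_1|\leq|D|$ (each loop iteration adds at most one vertex to $S$ while removing at least the chosen $u$ from $D_1$, since $u\in N_G(u')$), a step the paper uses silently when it writes $|D|+|S|\leq 2|D|$.
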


\section{\textsf{APX}-completeness}\label{sec_apx}

In this section, we show that \textsc{Min-NTDS} is \textsf{APX}-complete for graphs of maximum degree $3$.

To show the \textsf{APX}-completeness of a problem $\pi\in$
\textsf{APX}, it is enough to show that there is an $L$-reduction
from some \textsf{APX}-complete problem to $\pi$.

We first recall the notion of $L$-reduction \cite{ausileo,papadi}.
Given two \textsf{NP}-optimization problems $\pi_1$ and $\pi_2$, and a
polynomial time transformation $f$ from instances of $\pi_1$ to
instances of $\pi_2$, we say that $f$ is an \emph{$L$-reduction} if there
are positive constants $a$ and $b$ such that for every
instance $x$ of $\pi_1$:
\begin{enumerate}
\item $opt_{\pi_2}(f(x))\leq a\cdot opt_{\pi_1}(x)$;
\item for every feasible solution $y$ of $f(x)$ with objective value $m_{\pi_2}(f(x),y)=c_2$,
we can find a solution $y'$ of $x$ in polynomial time with
$m_{\pi_1}(x,y')=c_1$ such that $|opt_{\pi_1}(x)-c_1|\leq b\cdot
|opt_{\pi_2}(f(x))-c_2|$.
\end{enumerate}

We use the fact that \textsc{Min-Dom-Set} is \textsf{APX}-complete for graphs with degree at most $3$ \cite{chlebik} to show that  \textsc{Min-NTDS} is \textsf{APX}-complete for graphs with degree at most $3$. We show this by providing an $L$-reduction from \textsc{Min-Dom-Set} for graphs with degree at most $3$ to \textsc{Min-NTDS} for graphs with degree at most $3$.

\begin{figure}[h]
\begin{center}
\includegraphics[scale=.8]{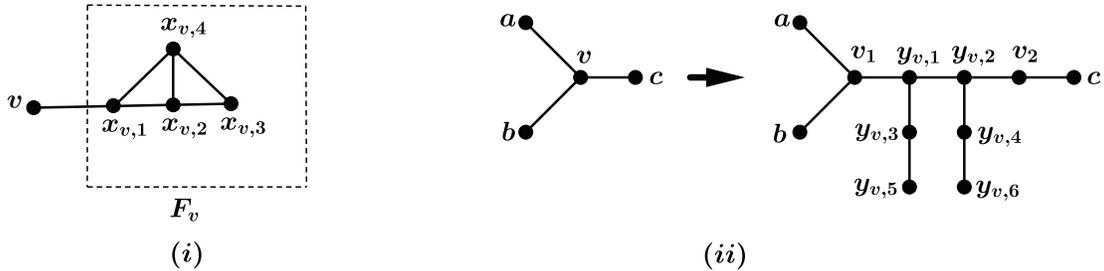}
\caption{\label{apx-fig} $(i)$ The structure $F_v$ in $G'$ for a vertex $v$ of $G$ with $d_G(v)\leq 2$, $(ii)$ Transformation of a vertex $v$ of $G$ of degree $3$ in $G'$ }
\end{center}
\end{figure}

\begin{theorem}
\textsc{Min-NTDS} is \textsf{APX}-complete for graphs of degree at most $3$.
\end{theorem}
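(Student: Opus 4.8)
The plan is to verify the two halves of \textsf{APX}-completeness separately. Membership in \textsf{APX} is immediate from Theorem \ref{theo-approx}: for $\Delta\le 3$ the approximation ratio $2(\ln(\Delta+1)+1)$ is bounded by the constant $2(\ln 4+1)$, so \textsc{Min-NTDS} on degree-$3$ graphs admits a constant-factor approximation. For \textsf{APX}-hardness, since \textsc{Min-Dom-Set} is \textsf{APX}-complete on graphs of degree at most $3$, it suffices to construct an $L$-reduction $f$ from \textsc{Min-Dom-Set} (degree $\le 3$) to \textsc{Min-NTDS} (degree $\le 3$).

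For $f$ I would adapt the gadget of Lemma \ref{npc-lem1}, whose role is to force a fixed number of gadget vertices into every \ntd-set while making each original vertex automatically non-isolated in the induced neighbourhood subgraph. The difficulty is the degree cap: attaching the gadget at $v$ spends one edge-slot at $v$, so a degree-$3$ vertex of $G$ would reach degree $4$. I therefore build $G'$ in the two modes of Figure \ref{apx-fig}. If $d_G(v)\le 2$, I attach the structure $F_v$ directly to $v$. If $d_G(v)=3$, I first replace $v$ by a small splitting structure that distributes the three edges incident to $v$ among new vertices of degree at most $3$, leaving a free slot to which $F_v$ is attached; I then check that every vertex of $G'$ has degree at most $3$ and that the splitting structure is forced, in any minimum \ntd-set, to behave exactly as the original vertex $v$ for domination.

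The core of the proof is an affine identity $\gamma_{\rm nt}(G')=\gamma(G)+q$, where $q$ is the total forced gadget cost, a fixed quantity satisfying $q\le \kappa\,n$ with $n=|V(G)|$ and $\kappa$ an absolute constant. The forward inequality extends any dominating set $D$ of $G$ to a \ntd-set of $G'$ by adjoining the forced gadget vertices, as in the first half of Lemma \ref{npc-lem1}. The backward inequality runs a normalisation argument in the style of Claim \ref{npc-claim1}: starting from an arbitrary \ntd-set $S'$ of $G'$, a local case analysis on how $S'$ meets each gadget and each splitting structure produces, without increasing cardinality, a canonical \ntd-set whose intersection with each gadget is exactly the forced set; deleting these forced vertices yields a dominating set of $G$ of size $|S'|-q$. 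Reading this normalisation as a polynomial-time map $y\mapsto y'$ gives condition (b) of the $L$-reduction: with $c_1=c_2-q$ and $\opt_{\pi_1}(x)=\opt_{\pi_2}(f(x))-q$ we obtain $|\opt_{\pi_1}(x)-c_1|=|\opt_{\pi_2}(f(x))-c_2|$, so $b=1$.

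For condition (a) I would use that in a graph of degree at most $3$ every vertex dominates at most $4$ vertices, whence $\gamma(G)\ge n/4$ and therefore $q\le \kappa\,n\le 4\kappa\,\gamma(G)$; combined with the identity this gives $\opt_{\pi_2}(f(x))=\gamma_{\rm nt}(G')=\gamma(G)+q\le(1+4\kappa)\gamma(G)$, so $a=1+4\kappa$ works. The step I expect to be the main obstacle is the degree-$3$ construction: the splitting structure must be simultaneously degree-bounded and ``domination-transparent'', i.e.\ a minimum \ntd-set of $G'$ must be forced to use it exactly as it would use the single vertex $v$, so that the additive constant stays uniform and the backward normalisation goes through. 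Establishing this transparency will require a careful local case analysis paralleling the four cases of Claim \ref{npc-claim1}.
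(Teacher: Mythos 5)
Your proposal follows essentially the same route as the paper: membership via Theorem \ref{theo-approx}, then an $L$-reduction from \textsc{Min-Dom-Set} on degree-at-most-$3$ graphs using exactly the two-mode construction of Figure \ref{apx-fig} ($F_v$ for $d_G(v)\leq 2$, vertex splitting for $d_G(v)=3$), the affine identity $\gamma_{\rm nt}(G')=\gamma(G)+n+2s$ (your $q$, with $s$ the number of degree-$3$ vertices, so $\kappa=3$ and $a=1+4\kappa=13$, $b=1$), and the bound $\gamma(G)\geq n/4$. The only notable difference is in the backward direction, where instead of normalising an arbitrary \ntd-set to a canonical one meeting each gadget in exactly the forced set, the paper extracts a dominating set $D'$ directly by a threshold count $\rho(v)=|T(v)\cap ND'|$ (including $v$ in $D'$ iff $\rho(v)\geq 2$ when $d_G(v)\leq 2$, resp.\ $\rho(v)\geq 4$ when $d_G(v)=3$), which sidesteps the full ``domination-transparency'' case analysis you flag as the main obstacle while still yielding $|D'|\leq |ND'|-(n-s)-3s$.
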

\begin{proof}
By Theorem \ref{theo-approx}, \textsc{Min-NTDS} is in \textsf{APX}. We show that \textsc{Min-NTDS} for graphs of degree at most $3$ is \textsf{APX}-complete by providing an $L$-reduction $f$ from \textsc{Min-Dom-Set} for graphs of degree at most $3$ to it. Let $G=(V,E)$ be a graph having $n$ vertices that is an instance of \textsc{Min-Dom-Set} for graphs of degree at most $3$. We construct a graph $G'=(V',E')$, an instance of \textsc{Min-NTDS} as follows. Let $v$ be an arbitrary vertex of $G$. If $d_G(v)\leq 2$, then attach a $F_v$ as shown in Figure \ref{apx-fig}$(i)$ at $v$ by adding the edge $vx_{v,1}$. If $d_G(v)=3$, then $v$ is split and transformed as shown in Figure \ref{apx-fig}$(ii)$. Here we say $v_1,v_2$ as the copies of $v$ in $G'$. Notice that $\Delta(G')=3$.

Let $D$ be a dominating set of $G$. Now we construct a set $ND$ that is a subset of $V'$ as follows.
\begin{itemize}
\item[1.] If $d_G(v)\leq 2$ and $v\in D$, then include $v$ and $x_{v,2}$ in $ND$.
\item[2.] If $d_G(v)\leq 2$ and $v\notin D$, then include $x_{v,2}$ in $ND$.
\item[3.] If $d_G(v)=3$ and $v\in D$, then include $v_1,v_2,y_{v,5},y_{v,6}$ in $ND$.
\item[4.] If $d_G(v)=3$ and $v\notin D$, then at least one of $a,b,c$ must be in $D$. Now do the following.
\begin{itemize}
\item[4.1.] If at least one of $a$ and $b$ is in $D$ and $c\notin D$, then include $y_{v,2},y_{v,4},y_{v,5}$ in $ND$.
\item[4.2.] If exactly one of $a$ and $b$ is in $D$ and $c\in D$, then include $y_{v,2},y_{v,4},y_{v,5}$ in $ND$.
\item[4.3.] If $a,b\notin D$ and $c\in D$, then include $y_{v,1},y_{v,3},y_{v,6}$ in $ND$.
\item[4.4.] If $a,b,c\in D$, then include $y_{v,2},y_{v,4},y_{v,5}$ (or $y_{v,1},y_{v,3},y_{v,6}$) in $ND$.
\end{itemize}
\end{itemize}
\begin{claim}\label{apxclaim1}
The set $ND$ is a \ntd-set of $G'$.
\end{claim}
\begin{proof}[Proof of Claim \ref{apxclaim1}:]
It can be seen that $ND$ is a dominating set of $G'$.
By Observation \ref{obsprelim3}, each vertex of $ND$ is not an isolated vertex in $G'[N_{G'}(ND)]$. So for each vertex $v$ of $V'\setminus ND$, we show that either $N_{G'}(v)\cap (V'\setminus ND)\neq \emptyset$ or $N_{G'}(u)\cap ND\neq \emptyset$, where $u\in N_{G'}(v)\cap ND$. Notice that $N_{G'}(v)\cap ND\neq\emptyset$ since $ND$ is a dominating set of $G'$.

Let $v\in V'\setminus ND$ be arbitrary. If $v\notin V$ or $v$ is not the copy of any vertex of $G$ in $G'$, then it is clear that either $N_{G'}(v)\cap (V'\setminus ND)\neq \emptyset$ or there exists a vertex $u\in N_{G'}(v)\cap ND$ such that $N_{G'}(u)\cap ND\neq \emptyset$. So assume that $v\in V\setminus ND$.

First assume that $d_G(v)\leq 2$. Since $v\notin ND$ and $x_{v,2}\in ND$, $x_{v,1}\in N_{G'}(v)\cap (V'\setminus ND)$. Next assume that $d_G(v)=3$.   Let $N_G(v)=\{a,b,c\}$ and $v_1,v_2$ are the copies of $v$ in $G'$. If $a\in D$ or $b\in D$, then $y_{v,1}\in N_{G'}(v_1)\cap (V'\setminus ND)$, $y_{v,2}\in N_{G'}(v_2)\cap ND$, and $y_{v,4}\in N_{G'}(y_{v,2})\cap ND$. If $a,b\notin D$, then $c\in D$. In this case, $y_{v,1}\in N_{G'}(v_1)\cap ND$, $y_{v,3}\in N_{G'}(y_{v,1})\cap ND$, and $y_{v,2}\in N_{G'}(v_2)\cap (V'\setminus ND)$. If exactly one of $a$ and $b$, say $a$ is in $D$ and $c\in D$, then $y_{v,1}\in N_{G'}(v_1)\cap (V'\setminus ND)$, $y_{v,2}\in N_{G'}(v_2)\cap ND$, and $y_{v,4}\in N_{G'}(y_{v,2})\cap ND$. This completes the proof of the claim.
\end{proof}

Let $ND'$ be a \ntd-set of $G'$. Let $T(v)=\{v,x_{v,1},x_{v,2},x_{v,3},x_{v,4}\}$ if $d_G(v)\leq 2$ and $T(v)=\{v_1,v_2,y_{v,1},y_{v,2},y_{v,3},y_{v,4},y_{v,5},y_{v,6}\}$ if $d_G(v)=3$. Let $\rho(v)=|T(v)\cap ND'|$ for every vertex $v$ of $G$. The following facts can be verified easily.

\begin{fact}\label{apxfact1}
If $d_G(v)\leq 2$ and $v\in ND'$, then $\rho(v)\geq 2$.
\end{fact}

\begin{fact}\label{apxfact2}
If $d_G(v)=3$ and $v_1$ or $v_2$ is in $ND'$, then $\rho(v)\geq 4$.
\end{fact}

Let $s$ be the number of vertices of $G$ having degree $3$.
\begin{claim}\label{apxclaim2}
A dominating set $D'$ of $G$ can be constructed from $ND'$ such that $|D'|\leq |ND'|-(n-s)-3s$.
\end{claim}
\begin{proof}[Proof of Claim \ref{apxclaim2}:]
We construct a set $D'$ from $ND'$ as follows. If $d_G(v)\leq 2$, then we include $v$ in $D'$ if and only if $\rho(v)\geq 2$. If $d_G(v)=3$, then we include $v$ in $D'$ if and only if $\rho(v)\geq 4$. We now prove that $D'$ is a dominating set of $G$. Suppose that there is a vertex $z\in V(G)$ such that $N_{G}[z]\cap D'=\emptyset$. This implies that if $d_G(z)\leq 2$, then $\rho(z)\leq 1$. Thus, it can be obtained that $z$ is not dominated by any vertex of $T(z)$. Since $ND'$ is a dominating set of $G'$, there must be a vertex other than $x_{v,1}$, say $w\in ND'$ that dominates $z$. Let $w\in T(w)$. By Fact \ref{apxfact1} and Fact \ref{apxfact2}, $\rho(w)\geq 2$ if $d_G(w)\leq 2$ and $\rho(w)\geq 4$ if $d_G(w)=3$. This implies that $w$ must have been included in $D'$. This contradicts our assumption that $N_G[z]\cap D'=\emptyset$. Similarly we get a contradiction if $d_G(z)=3$. So $D'$ is a dominating set of $G$. On the other hand, it can be seen that for every $v\in V(G)$, $\rho(v)\geq 1$ if $d_G(v)\leq 2$ and $\rho(v)\geq 3$ if $d_G(v)=3$. Thus, $|D'|\leq |ND'|-(n-s)-3s$.
\end{proof}

By Claim \ref{apxclaim1}, $ND$ is a \ntd-set of $G'$ with cardinality $|ND|=|D|+(n-s)+3s$, where $s$ is the number of vertices of $G$ having degree $3$. In particular, $\gamma_{\rm nt}(G')\leq \gamma(G)+(n-s)+3s\leq \gamma(G)+3n$. Since $\Delta(G)=3$, we have $\gamma(G)\geq \frac{n}{4}$ \cite{haynesb1}. So $\gamma_{\rm nt}(G')\leq \gamma(G)+3n\leq \gamma(G)+12\cdot\gamma(G)=13\cdot\gamma(G)$.

On the other hand, given a \ntd-set $ND'$ of $G'$, by Claim \ref{apxclaim2}, a dominating set $D'$ of $G$ can be constructed such that $|D'|\leq |ND'|-(n-s)-3s=|ND'|-n-2s$, where $s$ is the number of vertices of $G$ having degree $3$. In particular, $\gamma(G)\leq \gamma_{\rm nt}(G')-n-2s$. Therefore, $\gamma_{\rm nt}(G')=\gamma(G)+n+2s$. Now $|D'|-\gamma(G)\leq |ND'|-n-2s-(\gamma_{\rm nt}(G')-n-2s)=|ND'|-\gamma_{\rm nt}(G')$. So $f$ is an $L$-reduction with $a=13$ and $b=1$ and hence \textsc{Min-NTDS} is \textsf{APX}-complete for graphs of degree at most $3$.
\end{proof}

\end{document}